\setlist[enumerate,1]{label={(\roman*)}}
\DeclareMathOperator{\Op}{Op}
\begin{document}

\author{Gabriel Rivi\`ere}

\address{Nantes Universit\'e, Laboratoire de Math\'ematiques Jean Leray,
2 Chemin de la Houssini\`ere, 44322 Nantes, France} 
\email{\href{mailto:gabriel.riviere@univ-nantes.fr}{gabriel.riviere@univ-nantes.fr}}

\author{Lasse L. Wolf}

\email{\href{mailto:lasse.wolf@univ-nantes.fr}{lasse.wolf@univ-nantes.fr}}

\newcommand{\Lasse}[1]{{\color{red} \sf $\clubsuit\clubsuit\clubsuit$ Lasse: [#1]}}

\title[Semiclassical measures for abelian quantum actions]
{On the Lebesgue component of semiclassical measures for abelian quantum actions
}
\thanks{Both authors are supported by the Agence Nationale de la Recherche through the PRC grant ADYCT (ANR-20-CE40-0017) and the Centre Henri Lebesgue (ANR-11-LABX-0020-01). The first
author also acknowledges the support of the Institut Universitaire de France.}

\keywords{}

\begin{abstract}
For a large class of symplectic integer matrices, the action on the torus extends to a symplectic $\Z^r$-action with $r\geq 2$. We apply this to the study of semiclassical measures for joint eigenfunctions of the quantization of the symplectic matrices of the $\Z^r$-action.
In the irreducible setting, we prove that the resulting probability measures are convex combinations of the Lebesgue measure with weight $\geq 1/2$ and a zero entropy measure.
We also provide a general theorem in the reducible case showing that the Lebesgue components along isotropic and symplectic invariant subtori must have total weight $\geq 1/2$.
 
\end{abstract}

\subjclass[2020]{}

\setcounter{tocdepth}{2}  \maketitle 

\maketitle

\section{Introduction}

The Quantum Ergodicity Theorem is a classical result in mathematical quantum
chaos describing the equidistribution properties of stationary quantum states in the
semiclassical limit~\cite{Snirelman74, Zelditch87, ColindeVerdiere85}. More
precisely, given an orthonormal basis of Laplace eigenfunctions on a compact
Riemannian manifold with \emph{ergodic} geodesic flow, it states that most of
the eigenfunctions become equidistributed in phase space in the large
eigenvalue limit. In~\cite{RudnickSarnak94}, Rudnick and Sarnak conjectured
that, on negatively curved manifolds, \emph{all} (and not only most)
eigenfunctions must equidistribute. Over the last twenty years, this conjecture
lead to many developments and we refer to~\cite{Anantharaman22, Dyatlov22} for
recent reviews with many details and references on these results.

One way to get insights into this conjecture is to consider a basis of eigenfunctions having
extra symmetries as in the seminal work~\cite{RudnickSarnak94}. Indeed, if
there is an (or a family of) operator(s) commuting with the Laplacian, one can
consider joint orthonormal basis of eigenfunctions and expect that the resulting basis has better equidistribution properties. In certain arithmetic cases, one can for instance show equidistribution of all 
eigenfunctions which are also eigenfunctions of the Hecke operators~\cite{BourgainLindenstrauss03, Lindenstrauss06, BrooksLindenstrauss14, SilbermanVenkatesh19}.

Without the Hecke
symmetry, 
Anantharaman and Silberman considered this problem on general
compact locally symmetric spaces~\cite{AnantharamanSilberman13}. In that case,
they proved entropic bounds for (accumulation points of) 
joint eigenfunctions of the entire algebra of
translation-invariant differential operators. From this result, they deduced
that, on compact quotients of $\text{SL}(3,\R)$, joint eigenfunctions have a
Haar component of weight $\geq 1/4$ (thus exhibit some equidistribution) and
they also extended this property to certain compact quotients of
$\text{SL}(n,\R)$ for $n\geq 4$ (with a weight in $(0,1/2]$ depending on the
situation). Motivated by the recent developments on the support properties of semiclassical measures for higher-dimensional quantum maps by Dyatlov-J\'ez\'equel~\cite{DyatlovJ} and Kim~\cite{KimSemiclassical}, the goal of this article is to show how the results from~\cite{AnantharamanSilberman13} can be
extended in the setting of unitary matrices
quantizing symplectic linear maps of the torus $\T^{2d}:=\R^{2d}/\Z^{2d}$.

\subsection{Quantum maps}

Linear symplectic automorphisms $\text{Sp}(2d,\Z)$ of the torus $\T^{2d}$
provide a family of classical dynamical systems for which the above questions
can be raised in a simple functional framework. The quantization of these
classical systems in view of understanding questions from quantum chaos was
introduced by Hannay and Berry in~\cite{HannayBerry80}. Namely, given any
$\mathbf{N}\in\N$, one can define a natural Hilbert space\footnote{Note that even values appear in view of the first remark
of Section~\ref{sec:quantummechanics}.}
$\mathcal{H}_\mathbf{N}\simeq \ell^2((\Z/2\mathbf{N}\Z)^d)$
respecting the
periodic structure of the torus and on which the metaplectic representation
$M_\mathbf{N}(A)$ of $A\in\text{Sp}(2d,\Z)$ acts unitarly.
See~\S\ref{s:semiclassical} for a brief reminder or~\cite[\S2]{DyatlovJ} for a
detailed construction. Given a sequence $(\psi_{k})_{k\geq 1}$ of normalized
states in $\mathcal{H}_{\mathbf{N}_k}$ with $\mathbf{N}_k\rightarrow\infty$,
one can define their Wigner distributions:
$$
W_{\psi_{k}}:a\in\mathcal{C}^{\infty}(\T^{2d})\mapsto\left\langle\text{Op}^w_{(4\pi\mathbf{N}_k)^{-1}}(a)\psi_{k},\psi_{k}\right\rangle_{\mathcal{H}_{\mathbf{N}_k}},
$$
where $\text{Op}_h^w(a)$ is the Weyl quantization of $a$. Any accumulation
point (as $k\rightarrow\infty$) of this sequence of distributions defines a
probability measure on $\T^{2d}$ and the resulting measures are referred
as the \emph{semiclassical measures} of the sequence $(\psi_{k})_{k\geq 1}$. We
denote this set by $\mathcal{P}((\psi_{k})_{k\geq 1})$. If we suppose in
addition that the sequence is made of eigenvectors of $M_{\mathbf{N}_k}(A)$,
then the limit measures are invariant under $A$. Again, we refer to
\S\ref{s:semiclassical} below or to~\cite[\S2]{DyatlovJ} for more details.

The analogue of the Quantum Ergodicity Theorem holds true for this
model~\cite{BouzouinaDeBievre96}. More precisely, if for every $\mathbf{N}\geq
1$, we are given an orthonormal basis $(\psi_j^{\mathbf{N}})_{1\leq j\leq
(2\mathbf{N})^d}$ of eigenfunctions of $M_\mathbf{N}(A)$, then most of the
corresponding Wigner distributions converge to the Lebesgue measure on
$\T^{2d}$ as soon as this measure is ergodic\footnote{This is equivalent to $A$
not having roots of unity as eigenvalues.} for $A$. %

\subsection{Main results}

We need to introduce a few
conventions in order to state our main results. First, given
$A\in\text{Sp}(2d,\Z)$, one can associate its characteristic polynomial
and we will assume irreducibility over $\Q$ for our first result.
We define for $A\in \operatorname{Sp}(2d,\R)$ the integers 
\[
	m(A)=\frac 12\# (\sigma(A)\cap \R
	)	\quad\text{and}\quad
	l(A) = \frac 14\# (\sigma(A)\cap (\C\setminus (\R\cup \mathbb{S}^1))
)\]
where $\sigma(A)=\{\lambda_1,\ldots,\lambda_{2d}\}$ is the spectrum of $A$ and where eigenvalues are counted with their multiplicity. Our first result reads as follows:

\begin{theorem}
	\label{thm:mainintro}
	Let $d\geq 2$ and $A\in  \operatorname{Sp}(2d,\Z)$ 
	with irreducible characteristic polynomial over $\Q$
	such that no ratio of eigenvalues $\frac{\lambda_i}{\lambda_j}$, $i\neq j$, is a root of unity.
	Furthermore assume $m(A)+l(A)\geq 2$.

	Then, for every $\varepsilon >0$, one can find $B_\varepsilon\in\operatorname{Sp}(2d,\Z)$ commuting with $A$ such that 
	$$
	\forall\mathbf{N}\geq 1,\quad M_\mathbf{N}(B_\varepsilon)M_\mathbf{N}(A)=M_\mathbf{N}(A)M_\mathbf{N}(B_\varepsilon).
	$$
	and such that, for any sequence $(\psi_k)_{k\geq 1}$ satisfying
	$$
	\forall k\geq 1,\quad M_{\mathbf{N}_k}(A)\psi_k=e^{i\beta_k(A)}\psi_k,\quad M_{\mathbf{N}_k}(B_\varepsilon)\psi_k=e^{i\beta_k(\varepsilon)}\psi_k,\quad\|\psi_k\|_{\mathcal{H}_{\mathbf{N}_k}}=1,
	$$
	one has that,
	for every $\mu\in\mathcal{P}((\psi_k)_{k\geq 1})$, 
	$$\mu=\alpha\operatorname{Leb}_{\T^{2d}}+(1-\alpha)\nu,
	$$
	with 
	$$
	\alpha\geq \frac12-\varepsilon,
	$$ 
	and $h_{\operatorname{KS}}(\nu,B)=0$ for any $B\in\langle A,B_\varepsilon\rangle\leq \operatorname{Sp}(2d,\Z)$.
	
	\end{theorem}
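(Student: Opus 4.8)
The plan is to split the proof into an arithmetic/dynamical part --- producing the matrix $B_\varepsilon$ together with a higher-rank toral action --- and a semiclassical part --- an entropic lower bound for the Wigner measures --- and then to combine them using that the Kolmogorov--Sinai entropy is affine on invariant measures. First I would construct $B_\varepsilon$: since the characteristic polynomial of $A$ is irreducible over $\Q$, the commutative ring $K:=\Q[A]$ is a number field of degree $2d$ carrying the involution $\iota$ induced by the symplectic form; writing $K^+$ for its fixed field, a standard argument identifies, up to finite index, the centralizer of $A$ in $\operatorname{Sp}(2d,\Z)$ with the group of $\iota$-norm-one units of an order of $K$, which by Dirichlet's unit theorem is free abelian of rank $m(A)+l(A)$, hence of rank $\geq 2$ under our hypothesis (and $A$ itself has infinite order, no eigenvalue being a root of unity). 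Using the description of the metaplectic representation on $\mathcal{H}_\mathbf{N}$ --- and passing if necessary to a finite-index subgroup of this centralizer to kill the scalar phase in the commutator --- one obtains $B_\varepsilon$ with $M_\mathbf{N}(B_\varepsilon)M_\mathbf{N}(A)=M_\mathbf{N}(A)M_\mathbf{N}(B_\varepsilon)$ for all $\mathbf{N}$. Joint eigenbases then exist, and every $\mu\in\mathcal{P}((\psi_k))$ is invariant under the abelian group $\Gamma:=\langle A,B_\varepsilon\rangle$ acting on $\T^{2d}$ by toral automorphisms; irreducibility of the characteristic polynomial forbids any proper $\Gamma$-invariant rational subtorus, so the $\Gamma$-action is irreducible, and since $B_\varepsilon$ is chosen transverse to $\langle A\rangle$ the action is genuinely higher rank (no rank-one factors).

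The semiclassical input is an Anantharaman-type entropic lower bound for semiclassical measures of joint eigenfunctions of quantum maps, established --- following the strategy of \cite{AnantharamanSilberman13} --- via the Maassen--Uffink entropic uncertainty principle combined with hyperbolic dispersion estimates for the propagators $M_\mathbf{N}(B)$. Writing $\Lambda^+(B)=h_{\operatorname{KS}}(\operatorname{Leb}_{\T^{2d}},B)$ for the sum of the positive Lyapunov exponents of $B\in\Gamma$ (supported on the $m(A)+l(A)$ hyperbolic Archimedean places, the unit-circle places being neutral), this bound should take the shape $h_{\operatorname{KS}}(\mu,B)\geq\tfrac12\Lambda^+(B)-R(B)$ with a remainder $R(B)$ governed by the anisotropy of the unstable dynamics of $B$. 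The higher-rank freedom is decisive at this point: the Lyapunov exponents are finitely many linear functionals on the rank-$(m(A)+l(A))$ lattice, so one can pick $B_\varepsilon\in\Gamma$ deep inside a suitable Weyl chamber --- still transverse to $\langle A\rangle$ and with $\Lambda^+(B_\varepsilon)>0$ --- so that $R(B_\varepsilon)\leq\varepsilon\,\Lambda^+(B_\varepsilon)$; this is the step that uses $m(A)+l(A)\geq 2$ and that is responsible for the loss $\varepsilon$. One thus gets $h_{\operatorname{KS}}(\mu,B_\varepsilon)\geq(\tfrac12-\varepsilon)\,\Lambda^+(B_\varepsilon)$ for every $\mu\in\mathcal{P}((\psi_k))$.

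To conclude, take the ergodic decomposition of $\mu$ for the $\Gamma$-action and invoke the measure classification for irreducible, genuinely higher-rank abelian actions by toral automorphisms (in the spirit of Katok--Spatzier and Einsiedler--Lindenstrauss): each ergodic component either has zero entropy for every element of $\Gamma$, or is the Haar measure of a $\Gamma$-invariant subtorus, which by irreducibility is all of $\T^{2d}$. Hence $\mu=\alpha\operatorname{Leb}_{\T^{2d}}+(1-\alpha)\nu$, where $\nu$ collects the zero-entropy components and $h_{\operatorname{KS}}(\nu,B)=0$ for every $B\in\langle A,B_\varepsilon\rangle$. Since entropy is affine, $h_{\operatorname{KS}}(\mu,B_\varepsilon)=\alpha\,\Lambda^+(B_\varepsilon)$, and comparing with the bound of the previous paragraph gives $\alpha\geq\tfrac12-\varepsilon$, using $\Lambda^+(B_\varepsilon)>0$. \textbf{The main obstacle} is the semiclassical entropic bound in this higher-dimensional, merely partially hyperbolic setting: proving the hyperbolic dispersion estimate for $M_\mathbf{N}(B)$ in the presence of the neutral unit-circle directions, and controlling the remainder $R(B)$ precisely enough to make it $\leq\varepsilon\,\Lambda^+(B_\varepsilon)$ by the choice of $B_\varepsilon$. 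A secondary technical point is checking that the algebraic $\Gamma$-action meets the exact hypotheses of the measure-rigidity theorem and that the zero-entropy part $\nu$ is zero-entropy for the whole group $\Gamma$ (which itself follows from the rigidity dichotomy applied also to $A$).
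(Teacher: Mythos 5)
Your overall strategy matches the paper's: use Dirichlet's unit theorem on the order $\operatorname{End}(\Gamma)_A$ to realize the rank-$(m(A)+l(A))$ centralizer (Theorem~\ref{thm:spforirreducible}), pick $B_\varepsilon$ deep in a Weyl chamber to approximately equalize the positive Lyapunov exponents (Corollary~\ref{cor:largegapirred}), pass to a quantizable finite-index subgroup to kill the metaplectic sign (Remark~\ref{r:quantizable}), combine an entropic lower bound with Einsiedler--Lindenstrauss rigidity via the ergodic decomposition, and use affinity of entropy to extract $\alpha\geq 1/2-\varepsilon$. Two points need repair.

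First, the object you label as the ``main obstacle'' --- an entropic lower bound for Wigner measures of eigenfunctions of a single quantum map $M_{\mathbf N}(B)$, in the merely partially hyperbolic setting where $B$ has eigenvalues on the unit circle --- is not an obstacle here: it is already available in full generality in~\cite{Riviere11}, and the paper simply cites it as Theorem~\ref{t:entropy}. The precise form is $h_{\operatorname{KS}}(\mu,B)\geq\sum_{\lambda\in\sigma(B)}\max\{\log|\lambda|-\chi_+(B)/2,0\}$, applied to $B=B_\varepsilon$ (any joint eigenfunction is in particular an eigenfunction of $M_{\mathbf N}(B_\varepsilon)$, so $\mathcal P_{\mathrm{sc}}(\Lambda)\subset\mathcal P_{\mathrm{sc}}(B_\varepsilon)$). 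Your shape $\tfrac12\Lambda^+(B)-R(B)$ is a reasonable paraphrase, and the Weyl-chamber choice of $B_\varepsilon$ is exactly what the paper does to make the ratio of the lower bound to $\Lambda^+(B_\varepsilon)$ approach $1/2$; but you do not need to prove a new dispersion estimate for joint eigenfunctions nor worry about the neutral directions --- those are handled once and for all by the cited result.

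Second, a genuine gap: for the Einsiedler--Lindenstrauss dichotomy (Theorem~\ref{thm:rigidity}) you need the action of $\Lambda$ to be \emph{totally} irreducible, i.e.\ every finite-index subgroup of $\Lambda$ must act irreducibly, not just $\Lambda$ itself. Irreducibility of $\chi_A$ gives only plain irreducibility. The hypothesis that no ratio $\lambda_i/\lambda_j$ is a root of unity is exactly what upgrades this: it forces $\chi_{A^k}$ to be separable for all $k$ (Lemma~\ref{la:ratiorootunity}), whence any subspace invariant under a finite-index subgroup (which contains some $A^k$) is already $A$-invariant (Lemma~\ref{la:centralizerab}) and therefore trivial (Corollary~\ref{cor:charpolyimpltame}). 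Your proposal states the hypothesis but never uses it, and appeals to rigidity for a merely irreducible higher-rank action, which is not sufficient. Incorporating this step closes the argument, and with the citation to~\cite{Riviere11} the remaining skeleton coincides with the paper's proof.
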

Here, $h_{\operatorname{KS}}(\nu,B)$ is the Kolmogorov-Sinai entropy of the
measure $\nu$ with respect to $B$~\cite[Ch.~4]{Walters82}. It is a nonnegative
number which measures how much of the complexity of $B$ is captured by $\nu$.
For instance, if $\nu$ is supported on a closed orbit, then the entropy
vanishes while it is maximal for the Lebesgue measure.

The assumption of irreducibility and that no ratio of eigenvalues
is a root of unity can be captured by the Galois group of the characteristic
polynomial.
It turns out that these are generic among
symplectic matrices and we will explain how 
to construct explicitly
matrices with this property and the additional assumption $m(A)+l(A)\geq 2$ (see \S\ref{sec:galcond}).
This assumption on the eigenvalues is made to ensure the existence of an abelian subgroup $\Lambda\leq \operatorname{Sp}(2d,\Z)$ of rank $m(A)+l(A)$ containing $A$
and the matrix $B_\varepsilon$ will be picked in a convenient way inside this subgroup. In fact, we can choose $\Lambda$ in such a way
that
$$
\forall B_1,B_2\in\Lambda,\ \forall\mathbf{N}\geq 1,\quad M_\mathbf{N}(B_1)M_\mathbf{N}(B_2)=M_\mathbf{N}(B_2)M_\mathbf{N}(B_1),
$$
and we could also consider joint eigenfunctions for \emph{all}
$M_\mathbf{N}(B)$, $B\in\Lambda$. In that case, our arguments will show that
$\alpha\geq 1/2$ (see Theorem~\ref{thm:mainreducible}). In particular, if
$m(A)+l(A)=2$, we can pick $\varepsilon=0$ in Theorem~\ref{thm:mainintro}. For $m(A)+l(A)>
2$, if we pick an arbitrary matrix $B$ in $\Lambda\setminus\langle A\rangle$
rather than $B_\varepsilon$, we will only get a lower bound $\alpha\geq
c(A,B)>0$ on the Lebesgue component with $c(A,B)$ depending on the Lyapunov
exponents of $A$ and $B$.

When $\psi_k$ is only an eigenfunction of $M_{\mathbf{N}_k}(A)$, Kim recently proved that, under a stronger assumption on the Galois group, any limit measure $\mu\in\mathcal{P}((\psi_k)_{k\geq 1})$ has full support~\cite{KimSemiclassical}. See also~\cite{Schwartz2024, DyatlovJ} for earlier contributions of Schwartz when $d=1$ and of Dyatlov and J\'ez\'equel under more restrictive assumptions than in~\cite{KimSemiclassical} when $d\geq 1$. Theorem~\ref{thm:mainintro} shows that under one extra symmetry, there is a Lebesgue component with almost $1/2$ weight.  Kurlberg, Ostafe, Rudnick and Shparlinski also proved that, for a density $1$ of integers $(\mathbf{N}_k)_{k\geq 1}$, one must have $\alpha=1$ in Theorem~\ref{thm:mainintro} under some related assumptions on the matrix $A$~\cite{KurlbergOstafeRudnickShparlinski2024}. Here we do not make any assumption on the sequence of integers $\mathbf{N}$ at the expense of considering joint eigenfunctions and of having only $\alpha\geq 1/2-o(1)$. Earlier works of the first author also show that any element $\in\mathcal{P}((\psi_k)_{k\geq 1})$ have positive entropy without any restriction on $A$ or on the sequence of integers $(\mathbf{N}_k)_{k\geq 1}$~\cite{Riviere11}. This quantitative statement will in fact be one of the key ingredients of our proof. See also~\cite{FaureNonnenmacherDeBievre2003, BonechiDeBievre2003, FaureNonnenmacher2004, AnantharamanNonnenmacher2007, Brooks2010, Gutkin2010} for earlier related results.

In \S\ref{ss:Hecke}, we will also discuss in more details results by Kelmer for joint eigenmodes of the Hecke operators~\cite{Kelmer10}. Roughly speaking, he picked joint eigenstates for all the generators of the Hecke group of $A$ while we are using only two elements in this group. Let us just mention at this point that, under the assumption of Theorem~\ref{thm:mainintro}, he proved that such joint arithmetic eigenmodes yield $\alpha=1$ in the above statement. This property is referred as Arithmetic Quantum Unique Ergodicity for quantum maps. He also showed that, if $A$ has an invariant isotropic subspace, then such a property fails even for the Hecke eigenmodes. For instance, when $A$ is of the form $\text{Diag}(\tilde{A},(\tilde{A}^{-1})^T)$ with $\tilde{A}\in\text{GL}(d,\Z)$, then one can construct a sequence of eigenmodes whose limit measure is of the form $\text{Leb}_{\mathbb{T}^d}\otimes \delta_0^{\mathbb{T}^d}$~\cite[App.~B]{Gurevich2006}. Motivated by this example, we can state a second application of the dynamical methods used in our work.
\begin{theorem}
	\label{thm:mainintro2}
	Let $d\geq 3$ and $\tilde{A}\in  \operatorname{GL}(d,\Z)$ with irreducible characteristic polynomial. Suppose also that $m(\tilde{A})+l(\tilde{A})\geq {3}$ and that no ratio of the eigenvalues of $A=\operatorname{Diag}(\tilde{A},(\tilde{A}^{-1})^T)\in\operatorname{Sp}(2d,\Z)$ is a root of unity.

	Then, for every $\varepsilon >0$, one can find $\tilde{B}_\varepsilon\in\operatorname{GL}(d,\Z)$ commuting with $\tilde{A}$ such that, letting $B_\varepsilon=\operatorname{Diag}(\tilde{B}_\varepsilon,(\tilde{B}_\varepsilon^{-1})^T)$, one has
	$$
	\forall\mathbf{N}\geq 1,\quad M_\mathbf{N}(B_\varepsilon)M_\mathbf{N}(A)=M_\mathbf{N}(A)M_\mathbf{N}(B_\varepsilon).
	$$
	and such that, for any sequence $(\psi_k)_{k\geq 1}$ satisfying
	$$
	\forall k\geq 1,\quad M_{\mathbf{N}_k}(A)\psi_k=e^{i\beta_k(A)}\psi_k,\quad M_{\mathbf{N}_k}(B_\varepsilon)\psi_k=e^{i\beta_k(\varepsilon)}\psi_k,\quad\|\psi_k\|_{\mathcal{H}_{\mathbf{N}_k}}=1,
	$$
	one has that, 
	for every $\mu\in\mathcal{P}((\psi_k)_{k\geq 1})$, 
	$$\mu=\alpha\operatorname{Leb}_{\T^{2d}}+\alpha_1\nu_1\otimes\operatorname{Leb}_{\mathbb{T}^d}+\alpha_2\operatorname{Leb}_{\mathbb{T}^d}\otimes\nu_2+(1-\alpha-\alpha_1-\alpha_2)\nu_0,
	$$
	with 
	$$
	2\alpha+\alpha_1+\alpha_2\geq 1-\varepsilon,
	$$
	and $h_{\operatorname{KS}}(\nu_1,\tilde{B}_1)=h_{\operatorname{KS}}(\nu_2,\tilde{B}_2)=h_{\operatorname{KS}}(\nu_0,B)=0$ for any $\tilde{B}_1\in\langle \tilde{A},\tilde{B}_\varepsilon\rangle\leq\operatorname{GL}(d,\Z)$, $\tilde{B}_2\in\langle \tilde{A}^T,\tilde{B}_\varepsilon^T\rangle\leq\operatorname{GL}(d,\Z)$ and $B\in\langle A,B_\varepsilon\rangle\leq \operatorname{Sp}(2d,\Z)$.
\end{theorem}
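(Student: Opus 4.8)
The plan is to derive Theorem~\ref{thm:mainintro2} from the general reducible statement Theorem~\ref{thm:mainreducible}, applied to the abelian group $\langle A,B_\varepsilon\rangle$, with $B_\varepsilon$ produced by the same arithmetic mechanism as in Theorem~\ref{thm:mainintro} but carried out inside the centralizer of $\tilde{A}$ in $\operatorname{GL}(d,\Z)$. The starting observation is that $A=\operatorname{Diag}(\tilde{A},(\tilde{A}^{-1})^T)$ preserves the two isotropic (Lagrangian) subtori $L_1:=\mathbb{T}^d\times\{0\}$ and $L_2:=\{0\}\times\mathbb{T}^d$, acting on them through $\tilde{A}$ and $(\tilde{A}^{-1})^T$ respectively, and that all remaining candidate invariant subtori come from proper $A$-invariant rational subspaces of $\Q^{2d}$.

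First I would build the abelian group. Since $\tilde{A}$ has irreducible characteristic polynomial, $K:=\Q(\lambda)$ for an eigenvalue $\lambda$ of $\tilde{A}$ is a number field of degree $d$, and the centralizer of $\tilde{A}$ in $\operatorname{GL}(d,\Z)$ identifies with the unit group of an order in $K$; by Dirichlet's unit theorem this group is, up to torsion, free abelian of rank $r_1+r_2-1$, where $(r_1,r_2)$ is the signature of $K$. The hypothesis $m(\tilde{A})+l(\tilde{A})\geq 3$ is exactly what makes this rank at least $2$; moreover the assumption that no ratio of eigenvalues of $A$ is a root of unity forces $\tilde{A}$ to have no eigenvalue on $\mathbb{S}^1$, so that the characteristic polynomials of $\tilde{A}$ and of $\tilde{A}^{-1}$ — both irreducible and reciprocal to one another, hence distinct — are coprime. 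Conjugating by $\tilde{C}\mapsto\operatorname{Diag}(\tilde{C},(\tilde{C}^{-1})^T)$ embeds this centralizer as an abelian subgroup $\Lambda\leq\operatorname{Sp}(2d,\Z)$ of rank $\geq 2$ containing $A$, each element of which preserves $L_1$ and $L_2$. As in Theorem~\ref{thm:mainintro}, I would arrange, by choosing $B_\varepsilon$ in a suitable finite-index subgroup of $\Lambda$ (still a full-rank lattice), that $M_{\mathbf{N}}(B_\varepsilon)$ commutes with $M_{\mathbf{N}}(A)$ for every $\mathbf{N}$; this is where the phase in the Weil representation must be controlled, exactly as in the irreducible case.

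Next I would pin down which $\tilde{B}_\varepsilon$ to select. Writing $\chi(\tilde{B})$ for the sum of the positive Lyapunov exponents of $\tilde{B}$ acting on $\mathbb{T}^d$, affinity of the Kolmogorov--Sinai entropy, the vanishing of the entropies of $\nu_1,\nu_2,\nu_0$, and the identities $h_{\operatorname{KS}}(\operatorname{Leb}_{\mathbb{T}^{2d}},B)=2\chi(\tilde{B})$ and $h_{\operatorname{KS}}(\nu_1\otimes\operatorname{Leb}_{\mathbb{T}^d},B)=h_{\operatorname{KS}}(\operatorname{Leb}_{\mathbb{T}^d}\otimes\nu_2,B)=\chi(\tilde{B})$ yield, for any $B=\operatorname{Diag}(\tilde{B},(\tilde{B}^{-1})^T)$,
$$
h_{\operatorname{KS}}(\mu,B)=\big(2\alpha+\alpha_1+\alpha_2\big)\,\chi(\tilde{B}),
$$
so that $2\alpha+\alpha_1+\alpha_2\geq 1-\varepsilon$ is equivalent to $h_{\operatorname{KS}}(\mu,B_\varepsilon)\geq(1-\varepsilon)\,\chi(\tilde{B}_\varepsilon)$. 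Theorem~\ref{thm:mainreducible} (more precisely, the Lyapunov-dependent form obtained when only the two commuting elements $A$ and $B_\varepsilon$ are used) produces a bound of this shape with a constant that degrades with the spread $\max_j|\mu_j(\tilde{B}_\varepsilon)|/\chi(\tilde{B}_\varepsilon)$ of the Lyapunov data of $\tilde{B}_\varepsilon$. Since the logarithmic embedding of the unit group above is a full-rank lattice in the trace-zero hyperplane of $\R^{r_1+r_2}$, its set of directions is dense, so I can choose $\tilde{B}_\varepsilon$ whose normalized Lyapunov vector is within $\varepsilon$ of the balanced one and obtain the constant $\geq 1-\varepsilon$ (with $\varepsilon=0$ admissible when $\operatorname{rank}\Lambda=2$, exactly as in Theorem~\ref{thm:mainintro}).

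Finally I would apply Theorem~\ref{thm:mainreducible} to $\langle A,B_\varepsilon\rangle$: by the irreducibility and coprimality recorded above, the only invariant rational subspaces of $\Q^{2d}$ are $0$, $\Q^d\oplus 0$, $0\oplus\Q^d$ and $\Q^{2d}$, so that among the associated invariant subtori $L_1,L_2$ are the isotropic ones and $\mathbb{T}^{2d}$ is the only symplectic one. This produces the decomposition $\mu=\alpha\operatorname{Leb}_{\mathbb{T}^{2d}}+\alpha_1\,\nu_1\otimes\operatorname{Leb}_{\mathbb{T}^d}+\alpha_2\operatorname{Leb}_{\mathbb{T}^d}\otimes\nu_2+(1-\alpha-\alpha_1-\alpha_2)\nu_0$, in which $\nu_1$ lives on $\mathbb{T}^{2d}/L_2\cong\mathbb{T}^d$ and is invariant under the induced action of $\langle\tilde{A},\tilde{B}_\varepsilon\rangle$, $\nu_2$ lives on $\mathbb{T}^{2d}/L_1\cong\mathbb{T}^d$ and is invariant under $\langle\tilde{A}^T,\tilde{B}_\varepsilon^T\rangle$, $\nu_0$ is $\langle A,B_\varepsilon\rangle$-invariant, and all three carry zero entropy for the corresponding groups; together with the weight bound of the previous step this is exactly the claimed statement. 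The main obstacle lies not in this bookkeeping but in the two ingredients inherited from the irreducible case: controlling the Weil cocycle so that $M_{\mathbf{N}}(A)$ and $M_{\mathbf{N}}(B_\varepsilon)$ genuinely commute, and the entropy estimate underlying Theorem~\ref{thm:mainreducible}, which rests on the positive-entropy input of~\cite{Riviere11} together with an Anantharaman--Nonnenmacher-type argument~\cite{AnantharamanNonnenmacher2007} adapted to the geometry of the invariant isotropic subtori.
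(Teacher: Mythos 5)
Your proposal is correct and follows essentially the same route as the paper: it instantiates the general reducible machinery (Theorem~\ref{thm:mainreducible}, obtained by combining the Einsiedler--Lindenstrauss rigidity Theorem~\ref{thm:rigiditynonirr} with the entropy lower bound of Theorem~\ref{t:entropy}) in the case $r=0$, $s=1$, with $B_\varepsilon$ chosen via the Dirichlet-unit description of $\operatorname{GL}(d,\Z)_{\tilde A}$ (Theorem~\ref{thm:rankGL} and Corollary~\ref{cor:gapGL}) so that its Lyapunov spectrum is concentrated near $\{0,\pm\chi_+\}$; this matches the paper's Example~\ref{ex:Lagranian} and the remark after Theorem~\ref{thm:mainreducible}. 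The bookkeeping you carry out---the four invariant subtori, the entropy identities $h_{\operatorname{KS}}(\operatorname{Leb}_{\T^{2d}},B)=2\chi(\tilde B)$ and $h_{\operatorname{KS}}(\nu_1\otimes\operatorname{Leb}_{\T^d},B)=\chi(\tilde B)$, the resulting formula $h_{\operatorname{KS}}(\mu,B)=(2\alpha+\alpha_1+\alpha_2)\chi(\tilde B)$, and the reduction of the weight bound to $h_{\operatorname{KS}}(\mu,B_\varepsilon)\geq(1-\varepsilon)\chi(\tilde B_\varepsilon)$---is consistent with the paper. One small correction to your closing paragraph: there is no separate ``Anantharaman--Nonnenmacher-type argument adapted to the geometry of the invariant isotropic subtori'' hidden in the entropy input; Theorem~\ref{t:entropy} from~\cite{Riviere11} is used as a black box, and the adaptation to the isotropic/Lagrangian decomposition enters entirely through the rigidity theorem and the centralizer structure theory of Section~\ref{s:centralizers}, not through a modified entropy estimate.
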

When $A$ is a general element in $\text{Sp}(2d,\Z)$ such that no ratio of eigenvalues is a root of unity, our method still allows us to prove some similar regularity statement and we refer to
Theorem~\ref{thm:mainreducible} for a precise formulation of our general and main regularity result. Roughly speaking, it will state that any semiclassical measure can be decomposed as a sum of a zero entropy measure with weight $\leq 1/2$ and a measure whose projection along invariant isotropic and symplectic subtori is the Lebesgue measure. %
Under the present form, Theorem~\ref{thm:mainintro2} already illustrates the kind of regularity property one can expect in a simplified setting. In particular, observe that if $\alpha=0$ (i.e. the measure $\mu$ has no Lebesgue component), then the zero entropy part of the measure has weight at most $\varepsilon$. Again, if we consider joint eigenmodes for the full quantum action, we can pick $\varepsilon=0$ (see Theorem~\ref{thm:mainreducible}).

\subsection{Comparison with Hecke operators for quantum maps}\label{ss:Hecke}

Looking at joint eigenmodes associated with two commuting symplectic matrices is related to the notion of Hecke operators for quantum maps as it was introduced by Kurlberg and Rudnick in~\cite{KR00} for $d=1$. This concept was extended to the case $d>1$ by Kelmer~\cite{Kelmer10} and, using the conventions from this reference, we briefly review this construction for the sake of comparison. For simplicity, we suppose that $A$ is irreducible and we refer to~\cite{Kelmer10} for the general separable case. The ring $\mathscr{D}\coloneqq\Z[X]/(\chi_A)$, where $\chi_A$ is the characteristic polynomial of $A$, can be naturally embedded in $\text{M}(2d,\Z)$ by letting $\iota:p\mapsto p(A)$. Recall that, if $B\in\text{GL}(2d,\Z)$ commutes with $A$, then $B$ is a rational polynomial in $A$ (see e.g. Lemma~\ref{la:invsubspaces}). Given $p\in\mathscr{D}$, one can define $p^*$ in such a way that $\iota(p^*)=p(A^{-1})$. Recall from~\cite[Cor.~2.2]{Kelmer10} that $\iota(p)\in\text{Sp}(2d,\Z)$ if and only the ``norm'' $\mathcal{N}(p)\coloneqq pp^*$ is equal to $1$. In Theorem~\ref{thm:mainintro}, the matrix $B_\varepsilon$ is picked inside a fixed abelian subgroup $\Lambda$ depending only on $A$ and having rank $m(A)+l(A)\geq 2$. With the above conventions, one has $\langle A,B_\varepsilon\rangle\leq \Lambda\leq \text{Ker}\ \mathcal{N}$.

According to~\cite[\S1.1.3]{Kelmer10}, $M_\mathbf{N}(B)$ depends only $B$ modulo $4\mathbf{N}$. Hence, one can introduce the map $\iota_\mathbf{N}:\mathscr{D}/4\mathbf{N}\mathscr{D}\rightarrow \text{M}(2d,\Z/4\mathbf{N}\Z)$ and the corresponding norm $\mathcal{N}_\mathbf{N}:\mathscr{D}/4\mathbf{N}\mathscr{D}\ni p\mapsto pp^* \in\mathscr{D}/4\mathbf{N}\mathscr{D}$. The Hecke group $C_A(\mathbf{N})$ is then as defined as a certain finite index subgroup of $\iota_\mathbf{N}\left(\text{Ker}\ \mathcal{N}_\mathbf{N}\right)$ with the index being bounded independently of $\mathbf{N}$. See~\cite[\S2]{Kelmer10} for more details. Kelmer then considered joint eigenfunctions for all the $M_\mathbf{N}(B)$ with $B$ lying in the Hecke group. Thanks to~\cite[Lemma~2.7]{Kelmer10}, the number of elements in $C_A(\mathbf{N})$ is bounded from below by $c_\epsilon \mathbf{N}^{d-\epsilon}$ and from above by $C_\epsilon\mathbf{N}^{d+\epsilon}$ and the lower bound plays an instrumental role in his proof of arithmetic quantum unique ergodicity. See for instance Propositions 3.6 and 3.7 in this reference.

In Theorem~\ref{thm:mainintro}, the number of (implicitly) involved unitary matrices $M_\mathbf{N}(B)$ is given by
$
\sharp\iota_\mathbf{N}(\langle A,B_\varepsilon\rangle),
$
and one has 
$$
\iota_\mathbf{N}(\langle A,B_\varepsilon\rangle)\leq \iota_\mathbf{N}(\Lambda)\leq \iota_\mathbf{N}(\text{Ker}\ \mathcal{N})\leq \iota_\mathbf{N}(\text{Ker}\ \mathcal{N}_\mathbf{N}).
$$
Hence, there are at most $\mathcal{O}(\mathbf{N}^{d+\epsilon})$ unitary matrices in that subgroup. However, the cardinal could be a priori much smaller in general and it is not clear if we could derive a good lower bound on this cardinal (that would allow to apply the arithmetic methods from~\cite{Kelmer10}) without further assumptions. Equivalently, we are picking only one (good) element $B_\varepsilon\ \text{mod}\ 4\mathbf{N}$ in the Hecke group (on top of $A\ \text{mod}\ 4\mathbf{N}$) and there is a priori no reason that these two elements generate\footnote{Recall for instance that, for $d=1$, there exist subsequences of integers $(\mathbf{N}_k)_{k\geq 1}$ for which the number of elements generated by $A$ is of order $\log \mathbf{N}_k$~\cite{KurlbergRudnick2001}.} enough elements to apply the arithmetic averaging arguments from~\cite[\S3]{Kelmer10}.  Despite that, Theorem~\ref{thm:mainintro} shows that one can already derive some equidistribution properties with only requiring to be a joint eigenmodes for $2$ elements. Moreover, the dynamical argument we develop allows to deal with more general symplectic matrices as in Theorem~\ref{thm:mainintro2} (see also Theorem~\ref{thm:mainreducible}). Finally, we refer to~\cite{RivasSaracenoOzorio2000} for numerics on the order of symplectic matrices modulo $\mathbf{N}$ when $d=2$ and to~\cite{KurlbergOstafeRudnickShparlinski2024} for lower bounds on this order along typical sequences of integers.

\subsection{Organization of the article}

In \S\ref{s:semiclassical}, we briefly review the construction of semiclassical
measures for symplectic linear automorphisms of the torus
following~\cite{DyatlovJ} and we recall what is known on the entropy of
semiclassical measures in that setting. Then, in~\S\ref{s:centralizers}, we describe
the centralizer of matrices in $\text{Sp}(2d,\Z)$.
We gather these elements with rigidity results of
Einsiedler and Lindenstrauss in \S\ref{s:rigidity} in order to derive the main
theorem of this article, namely Theorem~\ref{thm:mainreducible}. Section \ref{sec:galcond} discusses the genericity
of the assumption made on the matrix $A$ in Theorem~\ref{thm:mainintro} and
provides concrete examples of such matrices. Finally,
Appendix~\ref{a:metaplectic} is a brief reminder on the metaplectic
representation behind the construction of $M_\mathbf{N}(A)$.

\section{Semiclassical measures for abelian actions}\label{s:semiclassical}

In this section, we briefly review the quantization procedure used to look at the quantum counterpart of a symplectic linear map acting on $\T^{2d}$. We closely follow the conventions from~\cite{DyatlovJ} to which we refer for more details. We also recall the entropic results from~\cite{Riviere11} that will be used in the subsequent sections.

\begin{remark}
Compared with~\cite{DyatlovJ}, the semiclassical parameter $\mathbf{N}$ lies here in $\N$ due to the fact that we deal with higher rank actions while~\cite{DyatlovJ} allows for the more general case where $\mathbf{N}\in \frac12\N$. Yet, as they picked the convention $\mathbf{N}\in \N$ (and not $\mathbf{N}\in \frac12\N$) there is a factor $2$ that differs from their convention in this brief exposition part.
\end{remark}

\subsection{Quantum mechanics on \texorpdfstring{$\T^{2d}$}{the torus}}
\label{sec:quantummechanics}

Let $\mathbf{N}\geq 1$ be a positive integer. One can set 
$$
\mathcal{H}_{\mathbf{N}}\coloneqq\left\{u\in\mathcal{S}'(\R^d): U_w u=u\ \text{for all}\ w=(q,p)\ \in\Z^{2d}\right\},
$$
where
$$
\forall u\in\mathcal{S}(\R^d),\quad U_{q,p}u(x)\coloneqq e^{4i\pi \mathbf{N}\left\langle x,p\right\rangle}f(x-q).
$$
One can verify that this defines a finite dimensional space of dimension $(2\mathbf{N})^d$ and this space is naturally endowed with an Hilbert structure. See~\cite[Lemma~2.5]{DyatlovJ} for more details on this construction. 
\begin{remark}
We note that the general construction of these Hilbert spaces also involve a Floquet parameter $\theta\in\T^{2d}$ (which is fixed depending on $A$ and $\mathbf{N}$) and that it can be carried out for $\mathbf{N}\in\frac{1}{2}\mathbb{N}$. Here, as we aim at dealing with quantum actions, we fix $\theta=0$ and $\mathbf{N}\in\N$ so that the quantization condition~\cite[Eq.~$(2.57)$]{DyatlovJ} on $(\theta,\mathbf{N}$) required to quantize a symplectic matrix are satisfied for any $A$.
\end{remark}

Recall that a matrix $B$ in $\operatorname M(2d,\mathbb{R})$ is said to be symplectic if $B^TJB=J$ where 
$$
J\coloneqq \begin{pmatrix}
          0 &\text{Id}_d\\
          -\text{Id}_d &0
         \end{pmatrix}
,
$$
where $\text{Id}_d$ is the $d\times d$ identity matrix. The subgroup of symplectic matrices is denoted by $\text{Sp}(2d,\mathbb{R})$. To any symplectic matrix $B$ on $\R^{2d}$, one can associate its metaplectic representation $M_{\mathbf{N}}(B)$ of parameter $\mathbf{N}>0$ which acts naturally on $\mathcal{S}'(\R^d)$ and which verifies, for every $A_1$ and $A_2$ in $\text{Sp}(2d,\R)$,
\begin{equation}\label{e:representation}
 M_{\mathbf{N}}(A_1A_2)=\pm M_{\mathbf{N}}(A_1)M_{\mathbf{N}}(A_2)
\end{equation}
(see~\cite[Ch.~4]{Folland89}).
Note that this representation is projective, in the sense that it is defined up to a complex number of modulus $1$. Yet, as explained in~\cite[Th.~4.37]{Folland89}, this number can be chosen so that~\eqref{e:representation} holds true. See also Appendix~\ref{a:metaplectic} for a brief reminder.

When $\mathbf{N}$ is a positive integer and when $A\in \text{Sp}(2d,\mathbb{Z})=\text{Sp}(2d,\R)\cap \text{GL}(2d,\Z)$, one can look at the action of these matrices on the spaces $\mathcal{H}_\mathbf{N}$ defined above and verify that
$$
M_{\mathbf{N}}(A):\mathcal{H}_{\mathbf{N}}\rightarrow\mathcal{H}_{\mathbf{N}}.
$$ 
See~\cite[\S2.2.4]{DyatlovJ} for more details. Similarly, one has that 
$$
\forall a\in\mathcal{C}^\infty(\T^{2d}),\quad \Op_{(4\pi\mathbf{N})^{-1}}^w(a):\mathcal{H}_{\mathbf{N}}\rightarrow \mathcal{H}_{\mathbf{N}},
$$
where $\text{Op}_h^w(a)$ is the Weyl quantization of the symbol $a$ on $\R^{2d}$~\cite[Ch.~4]{Zworski12}. In order to emphasize the fact that one works with the restriction, we can set 
$$
\Op_{\mathbf{N}}(a)\coloneqq \Op_{(4\pi\mathbf{N})^{-1}}^w(a)|_{\mathcal{H}_{\mathbf{N}}}.
$$
A key property of this quantization procedure is the so-called Egorov property:
\begin{equation}\label{e:egorov}
M_{\mathbf{N}}(A)^{-1} \Op_{\mathbf{N}}(a)M_{\mathbf{N}}(A)=\Op_{\mathbf{N}}(a\circ A).
\end{equation}

\subsection{Semiclassical measures}

With these tools at hand and given $\psi\in\mathcal{H}_{\mathbf{N}}$ which is normalized, one can define the so-called Wigner distribution of $\psi$:
$$
W_{\psi}:a\in\mathcal{C}^\infty(\T^{2d})\mapsto \langle \Op_{\mathbf{N}}(a)\psi,\psi\rangle.
$$
From the Calder\'on-Vaillancourt Theorem on $\mathbb{R}^d$~\cite[Ch.~5]{Zworski12} combined with~\cite[Eq.~$(2.45)$]{DyatlovJ}, one has 
\begin{equation}\label{e:calderon}
\|\Op_\mathbf{N}(a)\|_{\mathcal{L}(\mathcal{H}_{\mathbf{N}})}\leq C_d\sum_{|\alpha|\leq N_d} \mathbf{N}^{-\frac{|\alpha|}{2}}\left\|\partial^\alpha a\right\|_{\mathcal{C}^0},
\end{equation}
where $C_d,N_d>0$ are constants depending only on the dimension. Hence, given a sequence
\begin{equation}\label{e:normalized}
\psi_k\in\mathcal{H}_{\mathbf{N}_k},\quad \|\psi_k\|_{\mathcal{H}_{\mathbf{N}_k}}=1,\quad\lim_{k\rightarrow\infty}\mathbf{N}_k=\infty, 
\end{equation}
the sequence $(W_{\psi_k})_{k\geq 1}$ defines a bounded sequence in $\mathcal{D}'(\T^{2d})$ and we denote by
$\mathcal{P}((\psi_k)_{k\rightarrow\infty})
$ the corresponding set of accumulation points as $k\rightarrow\infty$. Thanks to the G\aa rding inequality~\cite[Eq.~$(2.48)$]{DyatlovJ}, any accumulation point is in fact a probability measure on $\T^{2d}$. If we suppose that, in addition to~\eqref{e:normalized}, the sequence $\psi_k$ verifies
\begin{equation}\label{e:eigenmode}
\mathcal{M}_{\mathbf{N}_k}(A)\psi_k=e^{i\beta_k}\psi_k,\ \text{for some}\ \beta_k\in\R,
\end{equation}
then any measure $\mu$ in $\mathcal{P}((\psi_k)_{k\rightarrow\infty})$ verifies $A_*\mu=\mu$ thanks to the Egorov property~\eqref{e:egorov}.

The set of semiclassical measures for $A\in \text{Sp}(2d,\Z)$ is then defined as
$$
\mathcal{P}_{\text{sc}}(A):=\left\{\mu:\ \exists\ (\psi_k)_{k\geq 1}\ \text{verifying}\ \eqref{e:normalized}\ \text{and}\ \eqref{e:eigenmode}\ \text{such that}\ \mu\in\mathcal{P}((\psi_k)_{k\rightarrow\infty})\right\}.
$$
This defines a subset of the convex and compact\footnote{Recall that it is endowed with the weak-$\star$ topology induced by $\mathcal{C}^{0}(\T^{2d})$ on its dual.} set $\mathcal{P}(A)$ made of $A$-invariant probability measure on $\T^{2d}$. The following holds true
\begin{theorem}[{\cite[Thm. 1.1]{Riviere11}}]
\label{t:entropy}
 Let $A\in\operatorname{Sp}(2d,\Z)$ and set 
 $$
 \chi_+(A)\coloneqq\max\left\{\log |\lambda|:\ \lambda\in\sigma(A)\right\}.
 $$
 Then, for every $\mu\in\mathcal{P}_{\operatorname{sc}}(A)$,
 $$
 h_{\operatorname{KS}}(\mu,A)\geq\sum_{\lambda\in\sigma(A)}\max\left\{\log|\lambda|-\frac{\chi_+(A)}{2},0\right\},
 $$
 where eigenvalues are counted with multiplicity and where $h_{\operatorname{KS}}(\mu,A)$ is the Kolmogorov-Sinai entropy of the measure $\mu$ with respect to $A$.
\end{theorem}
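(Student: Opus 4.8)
The statement is an entropic lower bound of Anantharaman--Nonnenmacher type, and I would prove it by the corresponding method, using crucially that Egorov's identity~\eqref{e:egorov} is \emph{exact} in this model. Fix $\eta>0$, a finite open cover $(\Omega_j)_{1\le j\le K}$ of $\T^{2d}$ by sets of diameter $<\eta$, a subordinate smooth partition of unity $(P_j)_j$ with $\sum_jP_j\equiv1$, and write $P$ for an associated measurable partition. Set $\Pi_j\coloneqq\Op_{\mathbf{N}}(\sqrt{P_j})$ and, for a word $\boldsymbol{\epsilon}=(\epsilon_0,\dots,\epsilon_{m-1})\in\{1,\dots,K\}^m$, put $\Pi_{\boldsymbol{\epsilon}}\coloneqq M_{\mathbf{N}}(A)^{-(m-1)}\Pi_{\epsilon_{m-1}}M_{\mathbf{N}}(A)\Pi_{\epsilon_{m-2}}M_{\mathbf{N}}(A)\cdots M_{\mathbf{N}}(A)\Pi_{\epsilon_0}$, which by~\eqref{e:egorov} equals $\Op_{\mathbf{N}}(\sqrt{P_{\epsilon_{m-1}}}\circ A^{m-1})\cdots\Op_{\mathbf{N}}(\sqrt{P_{\epsilon_0}})$, i.e.\ a microlocalization to the dynamical ``tube'' $\{x:A^jx\in\Omega_{\epsilon_j},\ 0\le j<m\}$. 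For fixed $m$, one has $-\sum_{|\boldsymbol{\epsilon}|=m}\|\Pi_{\boldsymbol{\epsilon}}\psi_k\|^2\log\|\Pi_{\boldsymbol{\epsilon}}\psi_k\|^2\to H_\mu(P^{(m)})$ as $k\to\infty$ for every $\mu\in\mathcal{P}((\psi_k)_{k\ge1})$, where $P^{(m)}=\bigvee_{j=0}^{m-1}A^{-j}P$ and the $\Omega_j$ are chosen so that $\mu$ does not charge their boundaries; so it is enough to bound these quantum refined entropies from below uniformly in $\mathbf{N}$.

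\textbf{Entropic uncertainty principle.} Apply the Maassen--Uffink entropic uncertainty principle, in the weighted operator form of Anantharaman--Nonnenmacher, to the unitary $M_{\mathbf{N}_k}(A)$ on $\mathcal{H}_{\mathbf{N}_k}$, comparing the decomposition of $\psi_k$ along the ``past'' family $(\Pi_{\boldsymbol{\epsilon}})_{|\boldsymbol{\epsilon}|=n}$ with the one along the ``future'' family obtained by conjugating by $M_{\mathbf{N}_k}(A)^{n}$. Since $\psi_k$ is an eigenvector of $M_{\mathbf{N}_k}(A)$ up to a phase, its past and future refined entropies coincide up to a controlled error, and the principle yields
\[
-\!\!\sum_{|\boldsymbol{\epsilon}|=n}\!\!\|\Pi_{\boldsymbol{\epsilon}}\psi_k\|^2\log\|\Pi_{\boldsymbol{\epsilon}}\psi_k\|^2\ \ge\ -\log\Bigl(\max_{|\boldsymbol{\delta}|=2n}\bigl\|\Pi_{\boldsymbol{\delta}}\bigr\|_{\mathcal{L}(\mathcal{H}_{\mathbf{N}_k})}\Bigr)-\mathcal{O}(1),
\]
the $2n$-tubes arising by concatenating a past and a future word.

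\textbf{Hyperbolic dispersive estimate.} The crux is an operator-norm bound on $\Pi_{\boldsymbol{\delta}}$ that is uniform over all $\sim K^m$ words and valid up to the Ehrenfest time, i.e.\ for $m\le 2n_E(\mathbf{N})$ with $n_E(\mathbf{N})=\lfloor(1-\delta)\tfrac{\log\mathbf{N}}{2\chi_+(A)}\rfloor$: with eigenvalues counted with multiplicity,
\[
\max_{|\boldsymbol{\delta}|=m}\bigl\|\Pi_{\boldsymbol{\delta}}\bigr\|_{\mathcal{L}(\mathcal{H}_{\mathbf{N}})}\ \le\ C_{\eta}\,\mathbf{N}^{C\delta}\,\exp\Bigl(-\tfrac{m}{2}\!\!\sum_{\lambda\in\sigma(A)}\!\!\max\bigl\{\log|\lambda|-\tfrac{\chi_+(A)}{2},0\bigr\}+m\,o_\eta(1)\Bigr).
\]
I would prove this by passing to the metaplectic representation on $\R^d$ and following a Gaussian wave-packet of width $\sim\eta$ under the linear flow $A^j$: along an eigendirection of modulus $|\lambda|>1$ the packet elongates at rate $|\lambda|^j$ and wraps around $\T^{2d}$, which a stationary-phase estimate converts into decay of the tube mass; crucially one may only refine down to the Ehrenfest scale dictated by the \emph{fastest} rate $\chi_+(A)$, so during that window a direction of rate $\log|\lambda|$ contributes only its ``effective'' rate $\max\{\log|\lambda|-\chi_+(A)/2,0\}$, the floor $\chi_+(A)/2$ being the budget absorbed by the uncertainty principle. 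For semisimple $A$ this is a direct computation on the eigenspaces; Jordan and elliptic blocks only produce polynomial-in-$m$ corrections, absorbed in $o_\eta(1)$ after $\eta\to0$.

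\textbf{Conclusion, and the main obstacle.} Inserting the dispersive estimate into the uncertainty bound with $n=n_E(\mathbf{N}_k)$, dividing by $n$ (which absorbs $C_\eta\mathbf{N}^{C\delta}$ since $n\asymp\tfrac{\log\mathbf{N}_k}{\chi_+(A)}\to\infty$) and letting $\delta\to0$ gives $-\tfrac1n\sum_{|\boldsymbol{\epsilon}|=n}\|\Pi_{\boldsymbol{\epsilon}}\psi_k\|^2\log\|\Pi_{\boldsymbol{\epsilon}}\psi_k\|^2\ge\sum_\lambda\max\{\log|\lambda|-\tfrac{\chi_+(A)}2,0\}-o_\eta(1)-o_k(1)$. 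The quantum refined entropy is almost subadditive in the word length (again using $A$-invariance of the construction and the eigenvector property), so the same lower bound holds with $n$ replaced by any fixed $n_0$; letting $k\to\infty$ turns the left-hand side into $\tfrac1{n_0}H_\mu(P^{(n_0)})$, whence $\tfrac1{n_0}H_\mu(P^{(n_0)})\ge\sum_\lambda\max\{\log|\lambda|-\tfrac{\chi_+(A)}2,0\}-o_\eta(1)$ for all $n_0$; taking $n_0\to\infty$ and then $\eta\to0$ yields $h_{\operatorname{KS}}(\mu,A)\ge h(\mu,P)\ge\sum_{\lambda\in\sigma(A)}\max\{\log|\lambda|-\tfrac{\chi_+(A)}2,0\}$. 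The hard part is the dispersive estimate of the third step: extracting the \emph{precise} effective rate $\max\{\log|\lambda|-\chi_+(A)/2,0\}$ per eigenvalue requires an anisotropic wave-packet analysis of $M_{\mathbf{N}}(A)^m$ conjugated by the cutoffs, uniform over the exponentially many tubes up to the Ehrenfest time and handling the full generalized-eigenspace structure of $A$; this is the technical heart of~\cite{Riviere11}.
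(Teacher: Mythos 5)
This theorem is not proved in the paper under review: it is cited from~\cite{Riviere11} and used as a black box (see the reminder in \S\ref{sec:entropy}). Your sketch is a faithful reconstruction of the proof given there, which is indeed the Anantharaman--Nonnenmacher entropic method adapted to linear quantum maps: quantum partitions built through the exact Egorov identity~\eqref{e:egorov}, the Maassen--Uffink entropic uncertainty principle applied to the eigenvector $\psi_k$ to compare past and future refined entropies, a hyperbolic dispersive estimate valid up to (twice) the Ehrenfest time, and a subadditivity argument to pass from the $\mathbf{N}$-dependent word length down to a fixed one before letting $k\to\infty$. You also correctly single out the dispersive estimate as the technical heart, and your formulation of it --- with the \emph{effective} rate $\max\{\log|\lambda|-\chi_+(A)/2,0\}$ already in the exponent rather than the naive rate $\tfrac12\log|\lambda|$ with a $\mathbf{N}^{d/2}$ prefactor --- is the right one: the naive form becomes vacuous once combined with the cutoff $m\le 2n_E(\mathbf{N})$ because only the directions with $\log|\lambda|>\chi_+(A)/2$ wrap around the torus within that window, so one must (and Rivi\`ere does) prove the estimate with the effective rates directly via an anisotropic wave-packet/Ehrenfest analysis, precisely as you indicate.
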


We also refer to~\cite{AnantharamanSilberman13} for analogues of this result in the context of compact locally symmetric spaces.
See \S\ref{sec:entropy} for a reminder on the Kolmogorov-Sinai entropy.

\begin{remark} Note that the theorem in~\cite{Riviere11} is formulated for so-called quantizable matrices. This is to ensure that one can can also pick any $\mathbf{N}\in\frac12\N$ (and thus a Floquet parameter $\theta\in\T^{2d}$ adapted to $A$). As we only deal with $\mathbf{N}\in\N$, we can always pick $\theta=0$ for any choice of symplectic matrix $A$. 
\end{remark}

Finally, the set of semiclassical measures for an abelian subgroup $\Lambda\leq \operatorname{Sp}(2d,\Z)$ is defined as follows.
The subgroup $\Lambda$ is said to be \emph{higher rank quantizable} if, for every $\mathbf{N}\in \N$ and for every $A,B\in \Lambda$,
 $$
 M_\mathbf{N}(A) M_\mathbf{N}(B)=M_\mathbf{N}(B)M_\mathbf{N}(A).
 $$
\begin{remark}
	\label{r:quantizable}
	If $\Lambda\leq \operatorname{Sp}(2d,\Z)$ is abelian 
	then it is finitely generated by \cite[Cor.~2.1]{SegalPolycyclic}.
	Let $B_1,\ldots,B_k$ be generators of $\Lambda$.
	Then the finite index subgroup generated by $B_1^2,\ldots,B_k^2$ 
	is higher rank quantizable thanks to \eqref{e:representation}.
	Moreover, if $A\in \Lambda$ then
	$\langle A,B_1^2,\ldots,B_k^2\rangle$ is also a higher rank quantizable
	finite index subgroup of $\Lambda$ containing $A$.
	In particular, if $\Lambda\leq \operatorname{Sp}(2d,\Z)$ is abelian
	(and contains some element $A$)
	then there is a higher rank quantizable finite index subgroup of $\Lambda$
	(containing $A$).
\end{remark}

Suppose now that, in addition to~\eqref{e:normalized}, the sequence $(\psi_k)_{k\geq 1}$ verifies
\begin{equation}\label{e:eigenmode-action}
\forall A\in\Lambda,\quad M_{\mathbf{N}_k}(A)\psi_k=e^{i\beta_k(A)}\psi_k,\ \text{for some }\beta_k(A)\in\mathbb{R},
\end{equation}
where $\Lambda\leq \operatorname{Sp}(2d,\Z)$ is a higher rank quantizable subgroup.
We can then define \emph{the set of semiclassical measures for the abelian group $\Lambda$} as
$$
\mathcal{P}_{\text{sc}}(\Lambda)\coloneqq \left\{\mu:\ \exists\ (\psi_k)_{k\geq 1}\ \text{verifying}\ \eqref{e:normalized}\ \text{and}\ \eqref{e:eigenmode-action}\ \text{such that}\ \mu\in\mathcal{P}((\psi_k)_{k\rightarrow\infty})\right\}.
$$
From the Egorov theorem, one has $\mathcal{P}_{\text{sc}}(\Lambda)\subset\mathcal{P}_{\text{sc}}(A)\subset\mathcal{P}(A)$ for every $A\in\Lambda$. In particular, a measure $\mu\in\mathcal{P}_{\text{sc}}(\Lambda)$ is invariant under the action of $\Lambda$.

\subsection{A reminder on ergodic decomposition and Kolmogorov-Sinai entropy}
\label{sec:entropy}
Let $\mu$ be an element in $\mathcal{P}(A)$.
From the Birkoff ergodic theorem, one knows that there exists a subset $\Omega$ of $\T^{2d}$ such that $\mu(\Omega)=1$ and such that, for every $f\in\mathcal{C}^0(\T^{2d},\C)$, one can find $f^*\in L^1(\mu)$ such that
$$
\forall x\in\Omega,\quad\lim_{T\rightarrow \infty}\frac{1}{T}\sum_{k=0}^{T-1}(f\circ A^k)(x)=f^*(x).
$$
One can verify that the map 
$$
\mu_x:f\mapsto f^*(x), \quad f\in\mathcal{C}^{0}(\T^{2d}),
$$
defines an element in $\mathcal{P}(A)$ which is ergodic for $\mu$-almost every $x\in\T^{2d}$. This gives rise to the so-called \emph{ergodic decomposition} of the measure $\mu$~\cite[\S6.1]{EinsiedlerWard2011}:
\begin{equation}\label{e:ergodicdecomposition}
\mu=\int_{\T^{2d}}\mu_x\: d\mu(x). 
\end{equation}

Fix now a partition $\mathcal{B}\coloneqq \left(B_j\right)_{j=1,\ldots K}$ of $\T^{2d}$ and denote by $\mathcal{B}^{(T)}$ the refined partition made of elements of the form
$$
B_{\alpha_0}\cap A^{-1}(B_{\alpha_1})\cap\ldots\cap A^{-T+1}(B_{\alpha_{T-1}}),\quad\alpha=(\alpha_0,\alpha_1,\ldots,\alpha_{T-1})\in\{1,\ldots,K\}^T.
$$
Given $T\geq 1$, one can associate to every point $x\in\T^{2d}$ a single element $B_T(x)$ in $\mathcal{B}^{(T)}$ such that $x\in B_T(x)$. The Shannon-McMillan-Breiman theorem ensures that, for $\mu$-almost every $x\in\T^{2d}$, the limit 
$-\frac{1}{T}\ln\mu(B_T(x))$ exists and it is equal to the Kolmogorov-Sinai entropy of the measure $\mu_x$ with respect to the partition $\mathcal{B}$, i.e.
$$
h_{\operatorname{KS}}(\mu_x,A,\mathcal{B})=\lim_{T\rightarrow \infty}-\frac{1}{T}\ln\mu(B_T(x))
$$
(see~\cite[Ch.~3]{Parry69}). Moreover, the Kolmogorov-Sinai entropy of the measure $\mu$ (relative to $\mathcal{B}$) is given by
$$
h_{\operatorname{KS}}(\mu,A,\mathcal{B})=\int_{\T^{2d}}h_{\operatorname{KS}}(\mu_x,A,\mathcal{B})d\mu(x).
$$
Recall that the Kolmogorov-Sinai entropy $h_{\operatorname{KS}}(\mu,A)$ of $\mu$ is then defined as the supremum over all the finite partitions $\mathcal{B}$. For other definitions of entropy, see~\cite[Ch.~4]{Walters82}. One can show that
$$
h_{\operatorname{KS}}(\mu,A)=\int_{\T^{2d}}h_{\operatorname{KS}}(\mu_x,A)d\mu(x).
$$
Recall that these quantities are all nonnegative and that, for $\mu$-almost every $x\in\T^{2d}$,
$$
h_{\operatorname{KS}}(\mu_x,A) \leq\sum_{\lambda\in\sigma(A)}\max\left\{\log|\lambda|,0\right\}
$$
where $\sigma(A)$ is the set of eigenvalues of $A$ (counted with multiplicity)~\cite[Th.8.15]{Walters82}. In particular, as a corollary of Theorem~\ref{t:entropy}, one has 
\begin{corollary}
\label{c:entropy}
 Let $A\in\operatorname{Sp}(2d,\Z)$ and suppose that 
 $$
 \chi_+(A)=\max\left\{\log |\lambda|:\ \lambda\in\sigma(A)\right\}>0.
 $$
 Then, for every $\mu\in\mathcal{P}_{\operatorname{sc}}(A)$,
 $$
 \mu\left(\left\{x:h_{\operatorname{KS}}(\mu_x,A)>0\right\}\right)\geq\frac{\sum_{\lambda\in\sigma(A)}\max\left\{\log|\lambda|-\frac{\chi_+(A)}{2},0\right\}}{\sum_{\lambda\in\sigma(A)}\max\left\{\log|\lambda|,0\right\}},
 $$
 where eigenvalues are counted with multiplicity.
\end{corollary}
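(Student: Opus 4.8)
The plan is to combine the global entropy lower bound of Theorem~\ref{t:entropy} with the ergodic decomposition of the Kolmogorov--Sinai entropy recalled in \S\ref{sec:entropy} and the Margulis--Ruelle inequality, and then to conclude by a Markov-type estimate. Fix $\mu\in\mathcal{P}_{\operatorname{sc}}(A)$ and abbreviate
$$
S(A)\coloneqq \sum_{\lambda\in\sigma(A)}\max\Bigl\{\log|\lambda|-\tfrac{\chi_+(A)}{2},\,0\Bigr\},\qquad M(A)\coloneqq \sum_{\lambda\in\sigma(A)}\max\{\log|\lambda|,\,0\},
$$
so that the asserted inequality reads $\mu(\{x:h_{\operatorname{KS}}(\mu_x,A)>0\})\geq S(A)/M(A)$. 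First I would note that $M(A)>0$, so that the right-hand side makes sense: since $A$ is symplectic, its spectrum is invariant under $\lambda\mapsto\lambda^{-1}$, hence the hypothesis $\chi_+(A)>0$ forces at least one eigenvalue of modulus strictly bigger than $1$, and therefore $M(A)\geq\chi_+(A)>0$.

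Next, recall the ergodic decomposition identity $h_{\operatorname{KS}}(\mu,A)=\int_{\T^{2d}}h_{\operatorname{KS}}(\mu_x,A)\,d\mu(x)$ together with the pointwise bound $h_{\operatorname{KS}}(\mu_x,A)\leq M(A)$, valid for $\mu$-almost every $x$ since each $\mu_x$ is an $A$-invariant probability measure~\cite[Th.8.15]{Walters82}. Writing $E\coloneqq\{x:h_{\operatorname{KS}}(\mu_x,A)>0\}$ and using that $h_{\operatorname{KS}}(\mu_x,A)=0$ for $\mu$-a.e.\ $x\notin E$, one gets
$$
S(A)\;\leq\; h_{\operatorname{KS}}(\mu,A)\;=\;\int_{E}h_{\operatorname{KS}}(\mu_x,A)\,d\mu(x)\;\leq\; M(A)\,\mu(E),
$$
where the first inequality is precisely Theorem~\ref{t:entropy}. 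Dividing through by $M(A)>0$ gives $\mu(E)\geq S(A)/M(A)$, which is the claim.

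There is essentially no obstacle in this argument: it is a direct combination of the three cited inputs (the entropy lower bound of Theorem~\ref{t:entropy}, the ergodic decomposition of entropy, and the Ruelle inequality), the only point deserving a remark being the positivity of $M(A)$, which follows from the symplectic symmetry of $\sigma(A)$ as indicated above.
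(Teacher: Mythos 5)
Your proof is correct and is exactly the argument the paper leaves implicit: it combines Theorem~\ref{t:entropy}, the ergodic decomposition identity $h_{\operatorname{KS}}(\mu,A)=\int h_{\operatorname{KS}}(\mu_x,A)\,d\mu(x)$, and the pointwise bound $h_{\operatorname{KS}}(\mu_x,A)\leq\sum_{\lambda}\max\{\log|\lambda|,0\}$ from \cite[Th.8.15]{Walters82}, all of which are precisely the facts recalled in \S\ref{sec:entropy} just before the corollary is stated. The remark that $M(A)\geq\chi_+(A)>0$ via the symplectic spectral symmetry is a small but welcome extra check.
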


\section{Centralizers of symplectic matrices}
\label{s:centralizers}
In this section, we analyze the structure of the centralizer of a symplectic matrix with separable characteristic polynomial. The main result here is Theorem~\ref{t:structure-symplectic} describing the group structure of this centralizer. Before that, we also discuss two important cases that are used in the proof: the case of the linear group (Theorem~\ref{thm:rankGL}) and the case where the characteristic polynomial is irreducible (Theorem~\ref{thm:spforirreducible}).

\begin{definition}
	A polynomial $f\in \Q[X]$ is called \emph{separable} 
	if it has no multiple roots in its splitting field (equivalently in $\C$).
\end{definition}
\begin{remark}
	If $f\in \Q[X]$ is irreducible,
	then $f$ is separable.
	Indeed, if $\lambda$ was a root of $f$ of order $\geq 2$
	then $f(\lambda)=f'(\lambda)=0$
	which would contradict B\'ezout's identity. 

	More generally,
	if $f=p_1\cdots p_k$ with $p_i\in \Q[X]$ irreducible,
	then $f$ is separable if and only if all $p_i$ are distinct.
	Clearly, if $p^2 \mid f$ for some non-constant irreducible $p\in \Q[X]$
	then $f$ has multiple root.
	Conversely, if $f$ has a zero $\lambda$ of order $\geq 2$
	then there is $p_i$ such that $p_i(\lambda)=0$.
	Since $p_i$ is irreducible, $p_i$ has no multiple zeros 
	and hence $p_j(\lambda)=0$ for some $j\neq i$.
	By B\'ezout's identity there are $r,s\in \Q[X]$ such that $rp_i+sp_j=1$.
	This gives a contradiction when evaluating at $\lambda$.
\end{remark}

\subsection{Preliminary conventions}

We recall the following classical lemma in linear algebra:

\begin{lemma}
	\label{la:centralizerab}
Let $A\in \operatorname{GL}(m,\Q)$.
	\begin{enumerate}
		\item If the characteristic polynomial $\chi_A$ of $A$ is separable 
			then the minimal polynomial $m_A$ of $A$ coincides with its characteristic polynomial $\chi_A$.
		\item If $B\in {\operatorname{M}}(m,\Q)$ commutes with $A$
			and $m_A=\chi_A$ 
			then $B$ is a rational polynomial in $A$.
			In particular, the centralizer of $A$ in $\operatorname{GL}(m,\Q)$
			is abelian.
	\end{enumerate}
\end{lemma}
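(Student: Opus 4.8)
The plan is to prove the two items in order, using only standard facts from linear algebra over a field. For item (i), recall that the minimal polynomial $m_A$ always divides the characteristic polynomial $\chi_A$, and that every irreducible factor of $\chi_A$ is an irreducible factor of $m_A$ (since each eigenvalue in $\overline{\Q}$ is a root of $m_A$). Now if $\chi_A$ is separable it factors into \emph{distinct} irreducible polynomials over $\Q$; since $m_A$ is divisible by each of these distinct irreducible factors and divides their product, we get $m_A=\chi_A$ up to a scalar, hence equality once both are taken monic. Equivalently, one can argue over $\C$: $\chi_A$ separable means $A$ has $m$ distinct eigenvalues, so $A$ is diagonalizable with a cyclic (companion-type) structure, forcing $\deg m_A=m=\deg\chi_A$.

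For item (ii), assume $m_A=\chi_A$, so $A$ is a \emph{cyclic} (nonderogatory) matrix: there exists a vector $v\in\Q^m$ such that $v,Av,\dots,A^{m-1}v$ form a basis of $\Q^m$. This is the key structural input, and it follows from $\deg m_A=m$ by the standard argument: the primary/cyclic decomposition of $\Q^m$ as a $\Q[X]$-module (with $X$ acting as $A$) shows the module is generated by a single element precisely when the minimal and characteristic polynomials agree. Given such a cyclic vector $v$, let $B$ commute with $A$ and write $Bv=p(A)v$ for some polynomial $p\in\Q[X]$ of degree $<m$ (possible since the $A^j v$ span). Then for every $j$ we have $BA^jv=A^jBv=A^jp(A)v=p(A)A^jv$, so $B$ and $p(A)$ agree on the basis $\{A^jv\}$, hence $B=p(A)$. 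This shows the centralizer of $A$ in $\operatorname{GL}(m,\Q)$ is contained in $\Q[A]$, which is commutative, so the centralizer is abelian.

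The only mild subtlety — and the step I would be most careful about — is the existence of a cyclic vector when $m_A=\chi_A$, i.e. justifying that a nonderogatory matrix admits a cyclic vector over the base field $\Q$ (not merely over an extension). I would cite this as the structure theorem for finitely generated modules over the PID $\Q[X]$, or alternatively give the short direct argument: decompose $\Q^m=\bigoplus_i \Q[X]/(f_i)$ with $f_1\mid f_2\mid\cdots$; then $m_A=f_{\max}$ and $\chi_A=\prod_i f_i$, so $m_A=\chi_A$ forces a single summand, whose generator is the desired cyclic vector. Everything else is a routine verification, and no invertibility of $A$ is actually needed beyond placing us in $\operatorname{GL}$; the statement about $\operatorname{GL}(m,\Q)$ follows since $B=p(A)\in\operatorname{GL}(m,\Q)$ automatically lies in $\Q[A]$.
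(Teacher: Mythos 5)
Your proof is correct and follows essentially the same route as the paper: for (i) a divisibility-plus-separability argument, and for (ii) the structure theorem for finitely generated modules over the PID $\Q[X]$ to produce a cyclic vector, then writing $Bv=p(A)v$ and propagating along the basis. The only cosmetic difference is that you invoke the invariant-factor decomposition $\bigoplus_i \Q[X]/(f_i)$ with $f_1\mid\cdots\mid f_k$ to see there is a single summand, whereas the paper uses the primary decomposition $\bigoplus_i \Q[X]/(p_i^{m_i})$ and exhibits the cyclic vector as the one with all components equal to $1$ — two equivalent phrasings of the same structure theorem.
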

\begin{proof}
	Since $m_A\mid \chi_A\mid m_A^m$,
	$m_A$ and $\chi_A$ have the same irreducible factors.
	By the assumption of separability each factor occurs exactly once
	so that both polynomials must be equal. For item (ii), we
	use the structure theorem for finitely generated modules over principal ideal domains.
	We find that,
	for the $\Q[X]$-module $\Q^m$ (where $X\cdot v= Av$), $\Q^m\simeq \bigoplus_i \Q[X]/(p_i^{m_i})$ for pairwise distinct
	irreducible polynomials $p_i\in \Q[X]$
	with $\chi_A =m_A= \prod_i p_i^{m_i}$.
	The vector $v$ with all components equal to $1\in \Q[X]/(p_i^{m_i})$
	is cyclic,
	i.e.~$v,Av,\ldots,A^{m-1}v$ is a basis of $\Q^m$.
	Write then $Bv=\sum_{i=0}^{m-1}a_i A^iv$, $a_i\in \Q$,
	and it follows that $B = \sum_{i=0}^{m-1} a_iA^i$.
\end{proof}

	For a semigroup $G$ and $x\in G$ we denote by $G_x$ the centralizer
	$G_x\coloneqq \{g\in G\mid gx=xg\}$. In order to make notation more natural, we now work over a general finite dimensional
$\Q$-vector space $V$ instead of $\Q^m$.
The matrices with integer coefficients will be replaced by the following.
\begin{definition}
	\label{def:lattice}
	A \emph{lattice} in a finite dimensional $\Q$-vector space $V$
	is a free abelian subgroup $\Gamma$ of $(V,+)$ 
of rank $\dim V$,
i.e. $\Gamma = \Z v_1\oplus\cdots \oplus \Z v_{\dim V}$ for a basis $(v_i)$ of $V$.
We write $\operatorname{End}(\Gamma)$ for the endomorphisms $B$ of $V$
with $B(\Gamma)\subseteq \Gamma$
and $\operatorname{GL}(\Gamma)\coloneqq \{B\in \operatorname{GL}(V) \mid  B(\Gamma)=\Gamma\}$.
\end{definition}
Note that $\operatorname{GL}(\Gamma)$ are the units in the ring $\operatorname{End}(\Gamma)$.
Moreover, 
if we take a basis of $\Gamma$ then the matrix of $B\in \operatorname{GL}(\Gamma)$ and of $B^{-1}$
with respect to this basis has integer entries.
In particular, $\operatorname{GL}(\Z^{m})=\operatorname{GL}(m,\Z)$.
We also note that $\operatorname{GL}(\Gamma)=\{B\in \operatorname{End}(\Gamma) \mid 
\det B=\pm 1\}$.
Indeed, if $B\in \operatorname{GL}(\Gamma)$ then we saw that suitable matrices representing
$B$ and $B^{-1}$ have integer entries so that $\det B=\pm 1$.
Conversely, if $B(\Gamma)\subset\Gamma$ and $\det B=\pm 1$ then $q=(1-(-1)^{m}\det B \chi_B)/X\in \Z[X]$
and $B^{-1}=q(B)\in\Z[B]$.
Therefore, $B^{-1}(\Gamma) =q(B)(\Gamma)\subseteq \Gamma$ and $B\in \operatorname{GL}(\Gamma)$.

\subsection{The irreducible general linear case}\label{ss:dirichlet-unit}

 We begin with the following statement which is a consequence of Dirichlet's unit theorem.

\begin{theorem}
	\label{thm:rankGL}
	Let $A\in \operatorname{GL}(\Gamma)$ with characteristic polynomial $\chi_A$
which is irreducible over $\Q$. Suppose that $A$ has $r(A)$ real eigenvalues and $2c(A)$ eigenvalues in $\C\setminus\R$.
	Then $\operatorname{GL}(\Gamma)_A $
	is an abelian group of rank $r(A)+c(A)-1$,
	i.e.
	$\operatorname{GL}(\Gamma)_A \simeq F\times \Z^{r(A)+c(A)-1}$
	for a finite group $F$.
\end{theorem}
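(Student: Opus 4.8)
The plan is to identify $\operatorname{GL}(\Gamma)_A$ with the unit group of an order in the number field $K=\Q[X]/(\chi_A)$ and then invoke Dirichlet's unit theorem. First I would use Lemma~\ref{la:centralizerab}: since $\chi_A$ is irreducible it is separable, so $m_A=\chi_A$, and every $B\in\operatorname{M}(V,\Q)$ commuting with $A$ is a rational polynomial in $A$. Consequently the commutant $\Q[A]$ is a commutative subalgebra of $\operatorname{End}(V)$ of dimension $m=\deg\chi_A=\dim V$, and the evaluation map $\Q[X]/(\chi_A)\to\Q[A]$, $X\mapsto A$, is a $\Q$-algebra isomorphism. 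Thus $\Q[A]\simeq K$ is a number field with $r_1=r(A)$ real places and $r_2=c(A)$ complex places (the real and conjugate-pairs-of-complex embeddings correspond exactly to the real and non-real eigenvalues of $A$).

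Next I would pin down which elements of $K\simeq\Q[A]$ actually lie in $\operatorname{GL}(\Gamma)_A$. Set $\mathcal{O}\coloneqq\operatorname{End}(\Gamma)\cap\Q[A]=\{B\in\Q[A]\mid B(\Gamma)\subseteq\Gamma\}$. This is a subring of $\Q[A]$ containing $\Z[A]$; it is finitely generated as a $\Z$-module (it sits inside $\operatorname{End}(\Gamma)\simeq\operatorname{M}(m,\Z)$) and contains a $\Q$-basis of $\Q[A]$ (e.g. scaled powers of $A$), so it is an order in $K$. By the discussion following Definition~\ref{def:lattice}, $B\in\operatorname{GL}(\Gamma)$ iff $B\in\operatorname{End}(\Gamma)$ and $\det B=\pm1$; restricting to $B\in\Q[A]$, the determinant of $B$ as an endomorphism of $V$ equals the field norm $N_{K/\Q}(B)$ up to nothing — more precisely $\det(B)=N_{K/\Q}(b)$ when $B$ corresponds to $b\in K$, because $V$ is a free $\Q[A]$-module of rank one (it is cyclic by the argument in Lemma~\ref{la:centralizerab}), so multiplication by $b$ on $V\simeq K$ has determinant $N_{K/\Q}(b)$. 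Hence
\[
\operatorname{GL}(\Gamma)_A=\operatorname{GL}(\Gamma)\cap\Q[A]=\{b\in\mathcal{O}\mid N_{K/\Q}(b)=\pm1\}=\mathcal{O}^\times,
\]
the full unit group of the order $\mathcal{O}$.

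Finally I would apply Dirichlet's unit theorem for orders in number fields: $\mathcal{O}^\times\simeq\mu(\mathcal{O})\times\Z^{r_1+r_2-1}$, where $\mu(\mathcal{O})$ is the (finite) group of roots of unity in $\mathcal{O}$. Since $r_1=r(A)$ and $r_2=c(A)$, this gives $\operatorname{GL}(\Gamma)_A\simeq F\times\Z^{r(A)+c(A)-1}$ with $F=\mu(\mathcal{O})$ finite, as claimed; abelianness is already clear from Lemma~\ref{la:centralizerab}(ii). The only mild subtlety — and the step I expect to need the most care — is the identification of $\operatorname{GL}(\Gamma)_A$ with $\mathcal{O}^\times$ rather than with $\mathcal{O}_K^\times$ for the maximal order: the lattice $\Gamma$ need not be stable under all of $\mathcal{O}_K$, so one genuinely works with the order $\mathcal{O}=\operatorname{End}(\Gamma)\cap K$. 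This is harmless because Dirichlet's theorem holds for arbitrary orders (the unit rank depends only on $K$, not on the order), but one should state it in that generality and check that $\mathcal{O}$ is indeed an order, i.e. a full-rank subring — which follows since $\Gamma$ has finite index in a fractional $\mathcal{O}_K$-ideal after clearing denominators, or directly from $\Z[A]\subseteq\mathcal{O}$ having full rank $m$ in $K$.
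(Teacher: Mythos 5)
Your proposal is correct and follows essentially the same route as the paper: reduce the commutant to $\Q[A]\simeq K=\Q[X]/(\chi_A)$ via Lemma~\ref{la:centralizerab}, observe that $\mathcal{O}\coloneqq\operatorname{End}(\Gamma)_A$ is an order in $K$ (it contains the full-rank $\Z[A]$), identify $\operatorname{GL}(\Gamma)_A=\mathcal{O}^\times$, and apply Dirichlet's unit theorem for orders. The only cosmetic difference is that the paper identifies $\mathcal{O}^\times=\operatorname{GL}(\Gamma)_A$ tautologically (units are exactly the elements of $\operatorname{End}(\Gamma)_A$ whose inverse also lies there) and invokes Dirichlet for $\mathcal{O}_K$ together with \cite[Th.~12.12]{Neukirch} to descend to $\mathcal{O}$, whereas you argue via $\det=N_{K/\Q}$ and state Dirichlet directly for arbitrary orders; both are fine.
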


\begin{proof}
	We saw in Lemma~\ref{la:centralizerab} that 
	$\operatorname{End}(V)_A = \Q[A]$.
	We also have $\Q[A]\simeq \Q[X]/(\chi_A)$ via $p(A)\leftrightarrow p + (\chi_A)$ and we also verified after Definition~\ref{def:lattice} that $\Q(A)=\Q[A]$.
	This defines an algebraic number field as $\chi_A$ is irreducible.
	Moreover, $\Z[A] \subseteq \operatorname{End}(\Gamma)_A \subseteq \Q[A] \eqqcolon K$.
	Since $\Z[A]$ is a free abelian group of rank
	$\deg \chi_A = [ K\colon \Q]$,
	$\operatorname{End}(\Gamma)_A \eqqcolon \mathcal{O}$ is an order in $K$~\cite[\S I.12]{Neukirch}.
	Its units $\mathcal{O}^\times$ are those elements of $\operatorname{End}(\Gamma)_A$ whose inverse are also in $\operatorname{End}(\Gamma)_A$: this means precisely $\mathcal{O}^\times = \operatorname{GL}(\Gamma)_A$.
	By Dirichlet's unit theorem (see e.g.  \cite[Thm.~7.4, \S I]{Neukirch})
	for the maximal order $\mathcal{O}_K$,
	one knows that $\mathcal{O}_K^\times$ is isomorphic to $F' \times \Z^{r(A)+c(A)-1}$ 
	where $F'$ is a finite cyclic group consisting of roots of unity. Then
	it follows from~\cite[Th.12.12, \S I]{Neukirch} that $\mathcal{O}^\times \simeq F \times \Z^{r(A)+c(A)-1}$ 
	where $F$ is a finite group.
\end{proof}

\begin{corollary}
	\label{cor:gapGL}
	Let $\varepsilon>0$ and suppose that the assumptions of Theorem~\ref{thm:rankGL} are satisfied.
	Then there is $B\in \operatorname{GL}(\Gamma)_A$ such that
	\[
		\forall\lambda\in \sigma(B),\quad \frac{|\log |\lambda||}{\max\{|\log |\mu||\colon \mu\in \sigma(B)\}}
		\in [0,\varepsilon]\cup [1-\varepsilon,1],
	\]
with the convention $0/0=0$.
\end{corollary}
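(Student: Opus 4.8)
The plan is to exploit the fact that, by Theorem~\ref{thm:rankGL}, the group $\operatorname{GL}(\Gamma)_A$ has rank $\rho\coloneqq r(A)+c(A)-1$, which is at least $1$ as soon as $\operatorname{GL}(\Gamma)_A$ is infinite; if $\rho=0$ the statement is vacuous since then $\sigma(B)$ consists of roots of unity for every $B$ and all $\log|\lambda|$ vanish, so the ratio is $0$ by convention. Assume therefore $\rho\geq 1$. Fix the real embeddings $\tau_1,\dots,\tau_{r(A)}$ and the pairs of complex embeddings $\sigma_1,\bar\sigma_1,\dots,\sigma_{c(A)},\bar\sigma_{c(A)}$ of the number field $K=\Q(A)$. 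To each $B\in\operatorname{GL}(\Gamma)_A$ one associates the logarithmic vector
\[
L(B)=\bigl(\log|\tau_1(B)|,\dots,\log|\tau_{r(A)}(B)|,\ 2\log|\sigma_1(B)|,\dots,2\log|\sigma_{c(A)}(B)|\bigr)\in\R^{r(A)+c(A)}.
\]
By the proof of Dirichlet's unit theorem, the image $\Lambda\coloneqq L(\operatorname{GL}(\Gamma)_A)$ is a full-rank lattice inside the hyperplane $H=\{x:\sum_i x_i=0\}$, which has dimension $\rho$. Note that the eigenvalues of $B$ are exactly the Galois conjugates of $B$ viewed as an element of $K$, so $\sigma(B)=\{\tau_j(B)\}\cup\{\sigma_j(B),\bar\sigma_j(B)\}$, and the quantities $|\log|\lambda||$ for $\lambda\in\sigma(B)$ are precisely the absolute values of the entries of $L(B)$ (with the complex entries halved, which is irrelevant for a ratio taken among those same entries).

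The heart of the argument is then a pigeonhole/approximation statement inside this lattice: I want a nonzero $v\in\Lambda$ such that every nonzero coordinate of $v$ has absolute value very close (in the multiplicative sense, up to a factor $1\pm O(\varepsilon)$) to $\max_i|v_i|$, while the remaining coordinates are much smaller than this maximum (ideally exactly zero, but "much smaller'' suffices for the closed interval $[0,\varepsilon]$). Concretely, I would argue as follows. Pick any $w\in\Lambda$ with $w\neq 0$ and, for a large integer $n$, consider the points $nw$; these march off to infinity along a fixed ray, and for the ratio statement we may restrict attention to the single direction $w/\|w\|$. For a \emph{fixed} $w$ the ratios $|\log|\lambda||/\max|\log|\mu||$ are already independent of the scaling $n$, so it suffices to choose the \emph{direction} well. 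The set of achievable directions is dense in the unit sphere of $H$ (again because $\Lambda$ is a full lattice in $H$ and $\rho\geq 1$; for $\rho=1$ the sphere is two points and the single generator's direction already works). Hence it is enough to produce a direction $u\in H$, $\|u\|=1$, whose nonzero coordinates are all equal in absolute value and whose zero coordinates vanish, and then approximate it by a lattice direction. Such $u$ exists trivially: take $u$ supported on two coordinates $i\neq j$ with $u_i=1/\sqrt2$, $u_j=-1/\sqrt2$ (this lies in $H$ and needs only $r(A)+c(A)\geq 2$, i.e.\ $\rho\geq1$). Approximating $u$ by $v\in\Lambda\setminus\{0\}$ up to error $\delta$ in direction, and then observing that the ratio function $\lambda\mapsto |\log|\lambda||/\max_\mu|\log|\mu||$ is continuous in the direction away from the (measure-zero, closed) set where two coordinates tie at the maximum with opposite... — more carefully, I would simply note that for $v$ with direction within $\delta$ of $u$, the two coordinates $v_i,v_j$ dominate all others by a factor $\to\infty$ as $\delta\to0$, so their ratio to the max is in $[1-\varepsilon,1]$ and every other coordinate's ratio is in $[0,\varepsilon]$ once $\delta$ is small enough. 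Taking $B\in\operatorname{GL}(\Gamma)_A$ with $L(B)=v$ finishes the proof.

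The main obstacle, and the only place requiring genuine care, is the quantitative approximation step: one must make sure that "direction close to $u$'' translates into the stated closed-interval membership \emph{uniformly}, i.e.\ choose $\delta=\delta(\varepsilon,\rho,K)$ explicitly so that the subdominant coordinates really do fall below $\varepsilon\cdot\max$. This is elementary — if $v=t(u+e)$ with $\|e\|\leq\delta$ then $|v_i|,|v_j|\geq t(1/\sqrt2-\delta)$ while $|v_k|\leq t\delta$ for $k\neq i,j$, so the ratio of a subdominant coordinate to the maximum is at most $\delta/(1/\sqrt2-\delta)$, which is $\leq\varepsilon$ for $\delta$ small, and the ratio of $|v_i|$ or $|v_j|$ to the maximum lies in $[(1/\sqrt2-\delta)/(1/\sqrt2+\delta),1]\subseteq[1-\varepsilon,1]$ similarly. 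There is a harmless subtlety that the coordinates of $L(B)$ are $\log$ of \emph{absolute values} of eigenvalues, not the eigenvalues themselves, and that the complex places carry a factor $2$; neither affects the ratio among the coordinates of a single $L(B)$, since the statement of the corollary ranges over $\lambda,\mu\in\sigma(B)$ and $|\sigma_j(B)|=|\bar\sigma_j(B)|$, so each complex place simply contributes its value twice. Thus the density of lattice directions plus this trivial estimate yield the claim.
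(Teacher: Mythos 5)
Your approach is essentially the same as the paper's: both use the Dirichlet-unit picture in which $\ell\circ j(\operatorname{GL}(\Gamma)_A)$ is a full-rank lattice in the hyperplane $H$, and both obtain $B$ as a lattice point whose direction in $H$ is close to a target vector supported on two coordinates of opposite sign. The paper decomposes $(M,-M,0,\ldots,0)=x_M+z_M$ with $x_M$ in a compact fundamental domain and $z_M$ a lattice point and sends $M\to\infty$; your ``density of lattice directions in the unit sphere of $H$'' is the same argument in different clothing (and indeed is proved by exactly that decomposition).

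There is, however, a genuine gap in your handling of the factor of $2$ on complex places, which you noticed and then dismissed incorrectly. The corollary's ratio involves $|\log|\lambda||$ for eigenvalues $\lambda\in\sigma(B)$, while the coordinate of $L(B)$ at a complex place is $2\log|\sigma_j(B)|$. If your target $u$ has its two nonzero entries at one real index $i$ and one complex index $j$, then from $L(B)\approx t\,u$ you get $|\log|\tau_i(B)||\approx t/\sqrt{2}$ but $|\log|\sigma_j(B)||\approx t/(2\sqrt{2})$, so the ratio for the complex eigenvalue is $\approx 1/2$, which lies in neither $[0,\varepsilon]$ nor $[1-\varepsilon,1]$. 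Your claim that the factor of $2$ ``does not affect the ratio among the coordinates of a single $L(B)$'' is wrong precisely in this mixed case: it rescales the comparison between a real and a complex coordinate. The fix is to choose $i,j$ of the same type (both real or both complex), which is possible whenever $r(A)\geq 2$ or $c(A)\geq 2$; this covers $\rho\geq 2$, hence every application in the paper, as well as the rank-one cases $(r,c)\in\{(2,0),(0,2)\}$. In the remaining case $(r,c)=(1,1)$ the unit relation $\log|\tau_1(B)|+2\log|\sigma_1(B)|=0$ forces the ratio $1/2$ for every infinite-order $B$, so only torsion $B$ (via the $0/0=0$ convention) satisfies the conclusion. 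Note that the paper's own choice $X_M=(M,-M,0,\ldots,0)$ implicitly carries the same restriction when $r(A)\leq 1$ and should be read as ``two coordinates of the same type''.
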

{This corollary is in fact a consequence of the construction behind Dirichlet's unit theorem and we will explain it at the end of this paragraph. Besides proving this corollary, we also aim at deriving the analogues of these results in the symplectic setting. To that aim, the arguments used to show Dirichlet's unit theorem need to be discussed (and used) to take into account the symplectic structure. Hence, we briefly recall the main lines to prove Dirichlet's unit theorem in the case where the algebraic number field $K$ is constructed from an element $A\in\text{GL}(\Gamma)$ (with $\chi_A$ irreducible). We  follow~\cite[\S I.7]{Neukirch} and refer to it for more details.

Recalling that $K=\Q(A)$ and letting $W_\R=\R^{r(A)}\times \C^{c(A)}$, we define the (multiplicative) group morphism
$$
j: K^\times\to W_\R^\times ,\quad p(A) \mapsto \left(p(\lambda_1),\ldots, p(\lambda_r), p(\mu_1),\ldots, p(\mu_c)\right), 
$$
where $(\lambda_1,\ldots,\lambda_r)$ are the real eigenvalues of $A$ and $(\mu_1,\overline{\mu}_1,\ldots,\mu_c,\overline{\mu}_c)$ are the ones in $\C\setminus\R$. Note that this defines an injective map. We also set the surjective morphism
\begin{align*}
	\ell:
	\left\{
		\begin{array}{lll}
	(W_\R^\times,\cdot)&\to &(\R^{r+c},+), \\
	(u_1,\ldots, u_r, z_1,\ldots,z_c)&
	\mapsto& \left(\log |u_1|,\ldots\log|u_r|,\log |z_1|^2,\ldots,\log |z_c|^2\right).
\end{array}
\right.
\end{align*}
For $B\in K^\times $, we denote by $N_{K/\Q}(B)\in\Q^\times$ the determinant of
the map $K\ni C\mapsto BC\in K$ viewed as a $\Q$-linear map. This is the
field norm on $K/\Q$ and one has $N_{K/\Q}=N\circ j$ where
$$
N(u_1,\ldots,u_r,z_1,\ldots,z_c)\coloneqq u_1\ldots u_r |z_1|^2 \ldots |z_c|^2.
$$
Recall now that Dirichlet's unit theorem is about the group structure of the multiplicative subgroup $\mathcal{O}^\times=\{B\in\mathcal{O}:N_{K/\Q}(B)\in\{\pm 1\}\}.$ To study this question, one sets 
$$
\Lambda\coloneqq( \ell\circ j)(\mathcal{O}^\times)\subseteq
H\coloneqq \left\{X\in\R^{r+c}:\ \sum_{j=1}^{r+c} X_j=0\right\}\simeq\R^{r+c-1},
$$
and one proves that this is a (full rank) lattice in that vector space~\cite[Th.~7.3]{Neukirch}\footnote{We note that in this reference they work
	with the maximal order $\mathcal{O}_K$
but everything works equally well with any other order $\mathcal{O}$.}.
Letting $(v_1,\ldots, v_{r+c-1})$ be a $\Z$-basis for this lattice, its preimage $(g_1,\ldots,g_{r+c-1})\in\mathcal{O}^\times$ by the map $\ell\circ j$ allows to define an abelian subgroup $G_0=\langle g_1,\ldots,g_{r+c-1}\rangle$ which induces a surjective morphism onto $(\Lambda,+)$. Then, letting $F'$ be the roots of unity lying in $\mathcal{O}$, one can prove that 
$$
1\rightarrow F'\hookrightarrow \mathcal{O}^\times \stackrel{\ell\circ j}\longrightarrow\Lambda\rightarrow 0
$$
is an exact sequence~\cite[Lemma~7.2]{Neukirch} from which we deduce that 
$$
\mathcal{O}^\times=\left\{fg^{n_1}\ldots g_{r+c-1}^{n_{r+c-1}}:\ f\in F',\ (n_1,\ldots,n_{r+c-1})\in\Z^{r+c-1}\right\}\simeq F'\times\Z^{r+c-1}.
$$
With these conventions at hand, we are ready to verify Corollary~\ref{cor:gapGL}.
\begin{proof}[Proof of Corollary~\ref{cor:gapGL}]
 For $B=p(A)\in\text{GL}(\Gamma)_A$, the eigenvalues of $B$ are exactly	given by
 $$(p(\lambda_1),\ldots,p(\lambda_r), p(\mu_1),p(\overline{\mu}_1),\ldots,p(\mu_c),p(\overline{\mu}_c)).$$
Hence, for every $\lambda\in\sigma(B)$, $\log|\lambda|$ (or $2\log|\lambda|$)
is a coordinate of $\ell\circ j(B)$. Recall now that
$\text{GL}(\Gamma)_A=\text{End}(\Gamma)_A^\times$ with $\text{End}(\Gamma)_A$
being an order in $K$. Hence, there exists a compact set $C$ in $H$ so that
$H=C+(\ell\circ j)(\text{GL}(\Gamma)_A)$. For $M\in\N$, we now set
$X_M=(M,-M,0,\ldots 0)$ to be an element	in $H$. Thanks to the above
decomposition, $X_M$ can be written as $x_M+z_M$ with $x_M\in C$ and $z_M\in
\ell\circ j(\text{GL}(\Gamma)_A)$. As $C$ is a compact subset of $H$, it is contained
inside $[-R_0,R_0]^{r+c}$ for some large enough $R_0>0$ (depending only on the lattice
$\ell\circ j(\text{GL}(\Gamma)_A)$). In particular, 
$$
z_M\in[M-R_0,M+R_0]\times[-M-R_0,-M+R_0]\times [-R_0,R_0]^{r+c-2}.
$$
By construction, there exists $B_M\in\text{GL}(\Gamma)_A$ such that $\ell\circ j(B_M)=z_M$ and $B_M$ has the expected property if we pick $M$ large enough (depending on $\varepsilon$).	
\end{proof}

}

\subsection{The irreducible symplectic case}
\label{sec:irredsymp}
In this paragraph, we let $(V,\omega)$ be a finite dimensional 
symplectic vector space over $\Q$ and 
we let $A\in \operatorname{Sp}(\Gamma)\coloneqq \operatorname{GL}(\Gamma)\cap \operatorname{Sp}(V,\omega)$ whose characteristic polynomial will be irreducible in the present \S\ref{sec:irredsymp}. We aim at proving the following symplectic analogue of the result in the previous paragraph.

\begin{theorem}
	\label{thm:spforirreducible}
	Let $A\in \operatorname{Sp}(\Gamma)$ with irreducible characteristic polynomial. Suppose that $A$ have $2m(A)$ real eigenvalues
	and $4l(A)$ eigenvalues in $\C\setminus (\R\cup \mathbb{S}^1)$. 
	Then $\operatorname{Sp}(\Gamma)_A$ is 
	an abelian group of rank $m(A)+l(A)$,
	i.e. $\operatorname{Sp}(\Gamma)_A\simeq F \times \Z^{m(A)+l(A)}$ where $F$
	is a finite group.
\end{theorem}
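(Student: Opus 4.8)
The plan is to mimic the proof of Theorem \ref{thm:rankGL} (Dirichlet's unit theorem) but working inside the subgroup cut out by the symplectic condition, which after the identification $\operatorname{End}(\Gamma)_A = \Q[A] \eqqcolon K$ becomes the condition $\mathcal{N}(p) = p p^* = 1$ where $p^*(A) = p(A^{-1})$ (using that $\iota(p) \in \operatorname{Sp}$ iff $\mathcal{N}(p) = 1$, cf.\ the discussion of Hecke operators above). Concretely, since $\chi_A$ is irreducible and symplectic (hence palindromic up to sign, so $\sigma(A)$ is stable under $\lambda \mapsto \lambda^{-1}$), $K = \Q(A)$ is a number field carrying an involution $\sigma \colon p(A) \mapsto p(A^{-1})$; let $K^+$ be its fixed field, a subfield of index $2$ (or $K = K^+$ in degenerate cases, which the hypotheses rule out — one should note $[K:K^+]=2$). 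I would first set $\mathcal{O} \coloneqq \operatorname{End}(\Gamma)_A$, an order in $K$, and observe that $\operatorname{Sp}(\Gamma)_A = \{u \in \mathcal{O}^\times : u \sigma(u) = 1\} = \ker(\mathcal{N}|_{\mathcal{O}^\times})$ where $\mathcal{N} = N_{K/K^+}$. The abelianity is then immediate from Lemma \ref{la:centralizerab}(ii), since $\operatorname{Sp}(\Gamma)_A \subseteq \operatorname{GL}(\Gamma)_A$ which is abelian.

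For the rank, I would run the logarithmic-embedding argument of \S\ref{ss:dirichlet-unit} and track which coordinates survive the symplectic constraint. The real eigenvalues of $A$ come in pairs $\{\lambda, \lambda^{-1}\}$ (giving $2m(A)$ of them), the eigenvalues on $\mathbb{S}^1$ come in pairs $\{e^{i\theta}, e^{-i\theta}\}$, and the genuinely complex ones off $\R \cup \mathbb{S}^1$ come in quadruples $\{\mu, \bar\mu, \mu^{-1}, \bar\mu^{-1}\}$ (giving $4l(A)$). Under $\ell \circ j$, the element $u \in \operatorname{GL}(\Gamma)_A$ has log-coordinates indexed by archimedean places of $K$; the involution $\sigma$ identifies the place of $\lambda$ with that of $\lambda^{-1}$, and the constraint $u\sigma(u) = 1$ forces $\log|u|_v + \log|u|_{\sigma v} = 0$ for each such pair. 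So $(\ell\circ j)(\operatorname{Sp}(\Gamma)_A)$ lands in the $(-1)$-eigenspace of the induced involution on $H$. Counting: the real places split into $m(A)$ swapped pairs and some fixed places (eigenvalues $\pm 1$, excluded by irreducibility and the root-of-unity hypothesis when relevant); the complex places split into $l(A)$ swapped pairs (off the circle) and the circle places which are fixed by $\sigma$ but where $|u|=1$ automatically. A dimension count of this $(-1)$-eigenspace intersected with the trace-zero hyperplane $H$ yields $m(A) + l(A)$.

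The key remaining point — and the main obstacle — is showing that $(\ell\circ j)(\operatorname{Sp}(\Gamma)_A)$ is actually a \emph{full-rank} lattice in this $(m(A)+l(A))$-dimensional subspace, not merely a discrete subgroup of possibly smaller rank. For $\operatorname{GL}(\Gamma)_A$ this is the content of Dirichlet's theorem (discreteness plus cocompactness via Minkowski's lattice-point theorem applied to norm-bounded regions). Here I would argue that $\operatorname{Sp}(\Gamma)_A = \ker \mathcal{N}$ has finite index in the full unit group of the order $\mathcal{O}^+ \coloneqq \mathcal{O} \cap K^+$ pushed up to $\mathcal{O}$, or more cleanly: $\mathcal{O}^{+,\times}$ is a subgroup of $\operatorname{GL}(\Gamma)_A$ of rank $r(A^+)+c(A^+)-1$ (Dirichlet for $K^+$), and one checks this rank equals $m(A)+l(A)$ by relating the archimedean places of $K^+$ to the $\sigma$-orbits of places of $K$ — real places of $K^+$ correspond to swapped real-place-pairs of $K$ (there are $m(A)$) plus fixed real places, while the $\mathbb{S}^1$-eigenvalue places of $K$ give complex places of $K^+$ that do \emph{not} contribute to $\operatorname{Sp}(\Gamma)_A$ but do to $\mathcal{O}^{+,\times}$... so a little care is needed and one instead shows $\mathcal{O}^{+,\times}$ has finite index in $\operatorname{Sp}(\Gamma)_A$ directly (if $u\sigma(u)=1$ then by Hilbert 90-type or norm considerations a power or a twist of $u$ lies in $K^+$), giving equality of ranks, and then invokes the structure theorem \cite[Th.~12.12, \S I]{Neukirch} to pass from $\mathcal{O}^+$ to its maximal order and back. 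The finite group $F$ then absorbs torsion exactly as in Theorem \ref{thm:rankGL}.
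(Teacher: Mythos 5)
Your overall strategy (reduce to the logarithmic embedding of $\operatorname{GL}(\Gamma)_A$ and show that the symplectic constraint cuts out an $(m+l)$-dimensional subspace in which $\ell\circ j(\operatorname{Sp}(\Gamma)_A)$ is a full-rank lattice) is the same as the paper's, and your dimension count of the $(-1)$-eigenspace of the involution is correct. However, the step you yourself flag as the ``main obstacle'' contains a real error, and the fix you sketch would not work.

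You propose to show that $\mathcal{O}^{+,\times}$ has finite index in $\operatorname{Sp}(\Gamma)_A$, citing Hilbert-90-type considerations to claim that for $u$ with $u\sigma(u)=1$ a power of $u$ lies in $K^+$. This is false: if $u\in K^+$ and $u\sigma(u)=1$ then $u^2=1$, so $\mathcal{O}^{+,\times}\cap\operatorname{Sp}(\Gamma)_A$ consists only of torsion, and likewise $u^n\in K^+$ forces $u^{2n}=1$. Thus $\mathcal{O}^{+,\times}$ is \emph{not} a subgroup of $\operatorname{Sp}(\Gamma)_A$ at all; it is essentially its complement. The correct relationship (which you would need) is that $\mathcal{O}^{+,\times}$ and $\operatorname{Sp}(\Gamma)_A$ together generate a finite-index subgroup of $\mathcal{O}^\times$, e.g.\ via the homomorphism $u\mapsto u/\sigma(u)$ whose kernel is the $\sigma$-fixed units and whose image lies in the norm-one units; a rank count $\operatorname{rk}\mathcal{O}^\times-\operatorname{rk}\mathcal{O}^{+,\times}=(2m+k+2l-1)-(m+k+l-1)=m+l$ then gives the lower bound. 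Note also that $\operatorname{rk}\mathcal{O}^{+,\times}=m+k+l-1$, not $m+l$, since the $k$ places of $K$ over $\mathbb{S}^1$ become \emph{real} places of $K^+$ and do contribute to $\mathcal{O}^{+,\times}$.

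For what it is worth, the paper takes a cleaner route that avoids $K^+$ entirely: it sets $\tilde{G}=N^{-1}(\{\pm1\})$ and $\tilde{U}=\tilde{G}\cap j(\mathcal{O}^\times)$, observes $U=j(\operatorname{Sp}(\Gamma)_A)=\tilde{U}\cap G$ where $G=\ker(\mathcal{J})$, and deduces compactness of $\ell(G)/\ell(U)$ from the already-established compactness of $\ell(\tilde{G})/\ell(\tilde{U})$ (the content of Dirichlet for $K$). This immediately gives that $\ell(U)$ is a full-rank lattice in $\ell(G)\simeq\R^{m+l}$ without any relative-unit considerations. Your $K^+$ route can be made to work (it is the standard ``relative unit'' argument), but as written the key finite-index assertion is wrong, and the correction requires you to reverse the direction of the containment and redo the rank bookkeeping.
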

\begin{remark}
	We note that, if $\chi_A\in \Z[X]$ is the characteristic polynomial of $A\in \operatorname{Sp}(\Gamma)$,
	then $\chi_A$ is \emph{palindromic} of degree $2d=\dim V$
	(i.e.~$\chi_A(X^{-1})X^{2d} = \chi_A(X)$).
	Then 
	$\chi_A' = 2dX^{2d-1} \chi_A(X^{-1}) -X^{2d-2}\chi_A'(X^{-1})$.
	In particular, $\chi_A'(\pm 1) = \pm 2d\chi_A(\pm 1) - \chi_A'(\pm 1)$.
	This implies that if $\chi_A(\pm 1)=0$ then $\chi_A'(\pm 1)=0$.
	Hence, $\pm 1$ is not an eigenvalue of $A$
	if $\chi_A$ is separable.
\end{remark}
{
Clearly, $\operatorname{Sp}(\Gamma)_A \leq \operatorname{GL}(\Gamma)_A$
and hence $\operatorname{Sp}(\Gamma)_A \simeq F' \times \Z ^s$ with $F'= F\cap \operatorname{Sp}(\Gamma)$ 
and $s\leq r(A)+c(A) -1$.
It remains to find $s$.
In order to prove this theorem, we will let $\Q[A]\eqqcolon K$.

Recall that, thanks to our irreducibility assumption, $\pm 1$ do not belong to $\sigma(A)$.
Let us order the eigenvalues of $A$ as follows:
$\lambda_1,\lambda_1^{-1},\ldots\lambda_m,\lambda_m^{-1}\in \R$, $m=m(A)$, 
$\theta_1,\theta_1^{-1}=\overline{\theta_1},\ldots,\theta_k,\theta_k^{-1}\in \mathbb S^1$, $k= k(A)$ and
$\mu_1,\overline{\mu_1},\mu_1^{-1}, \overline{\mu_1}^{-1},\ldots,
\mu_l,\overline{\mu_l},\mu_l^{-1}, \overline{\mu_l}^{-1}\in \C\setminus (\R\cup
\mathbb{S}^1)$, $l=l(A)$. {With the conventions of
	\S\ref{ss:dirichlet-unit}, one has $k+2l=c$ and $2m=r$
	as well as $W_\R=\R^{2m}\times \C^{k}\times \C^{2l}$
	and we define $j$ accordingly.
	From \S\ref{ss:dirichlet-unit}, $\ell\circ j(\text{GL}(\Gamma)_A)$ is
	lattice of rank $r+c-1$ in $H\simeq\R^{r+c-1}$.
	We now define
	$
	W'_\R\coloneqq \R^m\times \R^k\times \C^{l}
	$
	and
\begin{equation*}
	\mathcal{J}\colon \left\{
		\begin{array}l
			W_\R \to W'_\R\\
		(u_1,\ldots,u_{2m},v_1,\ldots, v_{k},w_1\ldots,w_{2l})
	\mapsto \begin{array}l
	(u_1u_2,\ldots, u_{2m-1}u_{2m},\\
	\phantom{(}|v_1|^2,\ldots,|v_k|^2,\\ 
\phantom{(}w_1w_2,\ldots,w_{2l-1}w_{2l}).
	\end{array}
		\end{array}
		\right.
\end{equation*}
One has the following characterization of symplectic matrices in $K$.}
\begin{lemma}
	\[
		\operatorname{Sp}(\Gamma)_A = 
		\{B\in \operatorname{GL}(\Gamma)_A \mid \mathcal{J}(j(B))=(1,\ldots,1)\}
	\]
\end{lemma}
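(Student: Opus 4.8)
The plan is to rewrite the condition ``$B$ is symplectic'' for $B\in\operatorname{GL}(\Gamma)_A$ as a single identity in the field $K=\Q[A]$, and then evaluate that identity on the eigenvalues of $A$. For $C\in\operatorname{End}(V)$ let $C^\dagger$ denote its $\omega$-adjoint, defined by $\omega(Cx,y)=\omega(x,C^\dagger y)$; this is well defined since $\omega$ is nondegenerate, the map $C\mapsto C^\dagger$ is $\Q$-linear and anti-multiplicative, and $B\in\operatorname{Sp}(V,\omega)$ if and only if $B^\dagger B=\operatorname{Id}$. Since $A$ is symplectic, $A^\dagger=A^{-1}$; and since $\chi_A$ is irreducible, $K=\Q[A]$ is a field, so $A^{-1}\in K$ and $p(A^{-1})\in K$ for every $p\in\Q[X]$. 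Being $\Q$-linear and anti-multiplicative, the restriction of $\dagger$ to the commutative ring $\Q[A]$ is the ring endomorphism determined by $A\mapsto A^{-1}$, i.e.\ $p(A)\mapsto p(A^{-1})$ (this is the involution $B\mapsto B^*$ of the introduction). Hence $B=p(A)\in\operatorname{GL}(\Gamma)_A$ is symplectic if and only if $p(A^{-1})=p(A)^{-1}$, that is, $p(A)\,p(A^{-1})=\operatorname{Id}$.

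It then remains to see that, for $B=p(A)\in\operatorname{GL}(\Gamma)_A$, one has $p(A)p(A^{-1})=\operatorname{Id}$ if and only if $\mathcal{J}(j(B))=(1,\dots,1)$. I would diagonalize: as $\chi_A$ is irreducible it is separable, so $A$ is diagonalizable over $\C$ with the $2d=\dim V$ distinct eigenvalues $\lambda_1,\lambda_1^{-1},\dots,\lambda_m,\lambda_m^{-1}\in\R$, $\theta_1,\theta_1^{-1},\dots,\theta_k,\theta_k^{-1}\in\mathbb{S}^1$ and $\mu_1,\overline{\mu_1},\mu_1^{-1},\overline{\mu_1}^{-1},\dots,\mu_l,\overline{\mu_l},\mu_l^{-1},\overline{\mu_l}^{-1}\in\C\setminus(\R\cup\mathbb{S}^1)$, where the involution $\lambda\mapsto\lambda^{-1}$ of $\sigma(A)$ has no fixed point because $\pm1\notin\sigma(A)$. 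If $Av=\lambda v$ then $p(A)p(A^{-1})v=p(\lambda)p(\lambda^{-1})v$, so $p(A)p(A^{-1})=\operatorname{Id}$ is equivalent to the system of $2d$ equations $p(\lambda)p(\lambda^{-1})=1$, $\lambda\in\sigma(A)$. Since $p(\lambda)p(\lambda^{-1})$ is unchanged under $\lambda\mapsto\lambda^{-1}$, and since complex conjugation commutes with $p$ (as $p\in\Q[X]$) and sends $\{\mu_i,\mu_i^{-1}\}$ to $\{\overline{\mu_i},\overline{\mu_i}^{-1}\}$, this system is equivalent to the reduced system $p(\lambda_i)p(\lambda_i^{-1})=1$ ($1\le i\le m$), $p(\theta_i)p(\theta_i^{-1})=1$ ($1\le i\le k$), $p(\mu_i)p(\mu_i^{-1})=1$ ($1\le i\le l$).

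Finally I would match the reduced system with $\mathcal{J}(j(B))=(1,\dots,1)$. With the above ordering, $j(B)=(p(\lambda_1),p(\lambda_1^{-1}),\dots,p(\lambda_m),p(\lambda_m^{-1}),p(\theta_1),\dots,p(\theta_k),p(\mu_1),p(\mu_1^{-1}),\dots,p(\mu_l),p(\mu_l^{-1}))$, so applying $\mathcal{J}$ gives precisely $p(\lambda_i)p(\lambda_i^{-1})$ from the $i$-th real pair, $|p(\theta_i)|^2=p(\theta_i)\overline{p(\theta_i)}=p(\theta_i)p(\theta_i^{-1})$ from the $i$-th circle coordinate (using $\overline{\theta_i}=\theta_i^{-1}$ and $p\in\Q[X]$), and $p(\mu_i)p(\mu_i^{-1})$ from the $i$-th complex pair. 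These are exactly the equations of the reduced system, whence $\mathcal{J}(j(B))=(1,\dots,1)\iff p(A)p(A^{-1})=\operatorname{Id}\iff B\in\operatorname{Sp}(V,\omega)$; as $B\in\operatorname{GL}(\Gamma)$ throughout, this is the claimed identity of subsets of $\operatorname{GL}(\Gamma)_A$. I expect the only subtle point to be this last bookkeeping: one must check that $\mathcal{J}\circ j$ records exactly one representative of each pair $\{\lambda,\lambda^{-1}\}$---in particular that the lone circle coordinate really yields $p(\theta_i)p(\theta_i^{-1})$, and that the ``missing'' equations $p(\overline{\mu_i})p(\overline{\mu_i}^{-1})=1$ are automatic as complex conjugates of the retained ones. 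The first paragraph is then routine, once one notes that $\dagger$ restricted to $\Q[A]$ is nothing but the substitution $A\mapsto A^{-1}$.
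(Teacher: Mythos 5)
Your proof is correct and follows essentially the same route as the paper: identify the symplectic adjoint of $B=p(A)$ as $p(A^{-1})$, rewrite $B^*B=\operatorname{Id}$ as the eigenvalue conditions $p(\lambda)p(\lambda^{-1})=1$ for $\lambda\in\sigma(A)$, and match these (using $\overline{\theta_i}=\theta_i^{-1}$ and $p\in\Q[X]$) with the coordinates of $\mathcal{J}\circ j(B)$. The only cosmetic difference is that the paper phrases the middle step as the divisibility $\chi_A\mid p^\omega p-1$ in $\Q[X]$ before passing to eigenvalues, whereas you diagonalize directly; both are valid since $\chi_A$ is separable.
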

\begin{proof}
	For $B\in \operatorname{End}(V)$, denote by $B^\ast$ the unique endomorphism of $V$ such that
	$\omega(Bv,w)=\omega(v,B^\ast w)$ for all $v,w\in V$.

	If $B=p(A)\in \operatorname{End}(V)_A= K$
	then $B^\ast = p(A^\ast)=p(A^{-1})=p^\omega(A)$
	where\footnote{
		In \cite{Kelmer10} $p^\omega$ is denoted by $p^\ast$.
	In contrast to this notation we use $p^\ast=X^{\deg p} p(X^{-1})$.} $p^\omega = p(\frac{1-\chi}{X})\in \Q[X]$.
	Therefore, $1=B^\ast B$ if and only if $\chi_A\mid p^\omega p -1$.
	This is equivalent to the fact that $p(s^{-1})p(s)=1$ for every eigenvalue $s$ of $A$.
	These are precisely the coordinates of $\mathcal{J}\circ j(B)$.
	By observing that $p(\theta_i^{-1})=p(\overline{\theta_i})=\overline{p(\theta_i)}$
	we infer $\operatorname{Sp}(V,\omega)_A = \ker \mathcal{J}\circ j$.
	We finish the proof by intersecting with $\operatorname{GL}(\Gamma)_A$.
\end{proof}
}

\begin{proof}[Proof of Theorem~\ref{thm:spforirreducible}] Let 
$$
G\coloneqq \{w\in W_\R^\times \mid \mathcal{J}(w)=(1,\ldots,1)\}\leq (W_\R^\times ,\cdot).
$$
One has that $\ell\circ j (G)$ is a subgroup of $(\R^{2m+k+2l},+)$. By construction, one has that
\[
\ell(G)\subseteq  \{(x,-x)\mid x\in \R\}^m \times \{0\}^k \times \{(y,-y)\mid y\in \R\}^l.
\]
By putting $u_{2j+1} = e^x, u_{2j+2}=e^{-x}, v_{j}=1, w_{2j+1}=e^{y/2}, w_{2j+2}=e^{-y/2}$,
we see that equality holds. Hence, $\ell(G)\simeq \R^{m+l}$ as additive groups.
Set now $U\coloneqq G\cap j(\mathcal{O}^\times)=j(\text{Sp}(\Gamma)_A)$,
$\tilde{G}\coloneqq N^{-1}(\{\pm 1\})\leq W_\R^\times$ and $\tilde{U}\coloneqq
\tilde{G}\cap j(\mathcal{O}^\times)$. By construction, $G$ is a
subgroup of $\tilde{G}$ and $U=\tilde U \cap G$. Hence, $G/U$ embeds
into $\tilde{G}/\tilde{U}$.

It follows from the discussion on Dirichlet's unit theorem in \S\ref{ss:dirichlet-unit} that $\ell(\tilde{G})/\ell(\tilde{U})$ is compact, that $\ell(\tilde{U})$ is discrete and that $\text{ker}\,\ell|_{\tilde{U}}$ is finite. Hence, $\ell(G)/\ell(U)$ is compact and, as $\ell(U)$ is discrete, we find that $\ell(U)$ is a full rank lattice in $\ell(G)\simeq\R^{m+l}$. Therefore, $\text{Sp}(\Gamma)_A\simeq U\simeq \text{ker}\,\ell|_{U}\times\Z^{m+l}$.
\end{proof}

As for $\text{GL}(\Gamma)_A$, one has the following property as a consequence of the above construction.
\begin{corollary}
	[of proof of Theorem~\ref{thm:spforirreducible}]
	\label{cor:largegapirred}
	Let $\varepsilon>0$.
	Then there is $B\in \operatorname{Sp}(\Gamma)_A$ 
	such that 
	\[
		\frac{\max\{\log|\lambda|,0\}}{\max\{\log|\mu|:\mu\in\sigma(B)\}} \in [0,\varepsilon] \cup [1-\varepsilon,1]
	\]
	for all $\lambda\in\sigma(B)$ (with the convention $0/0=0$).
\end{corollary}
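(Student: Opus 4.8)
The plan is to deduce this corollary from the proof of Theorem~\ref{thm:spforirreducible} in exactly the way Corollary~\ref{cor:gapGL} was deduced from the proof of Theorem~\ref{thm:rankGL}. First I would dispose of the degenerate case $m(A)+l(A)=0$: by Theorem~\ref{thm:spforirreducible} the group $\operatorname{Sp}(\Gamma)_A$ is then finite, so every $B\in\operatorname{Sp}(\Gamma)_A$ has only roots of unity as eigenvalues, $\log|\lambda|=0$ for all $\lambda\in\sigma(B)$, and the quotient in the statement is $0/0=0$ for every $\lambda$ by the stated convention. So from now on assume $m\coloneqq m(A)$ and $l\coloneqq l(A)$ satisfy $m+l\geq 1$.

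Keeping the notation of the proof of Theorem~\ref{thm:spforirreducible}, recall that $U=j(\operatorname{Sp}(\Gamma)_A)$ and that $\ell(U)$ is a full-rank lattice in $\ell(G)\simeq\R^{m+l}$, the isomorphism being obtained by keeping from $\ell(j(B))=(x_1,-x_1,\dots,x_m,-x_m,0,\dots,0,y_1,-y_1,\dots,y_l,-y_l)$ one representative of each $\pm$-pair and discarding the $k(A)$ vanishing coordinates coming from the unimodular eigenvalues. For $B=p(A)\in\operatorname{Sp}(\Gamma)_A$ the eigenvalues of $B$ are $p(\lambda_i^{\pm1})$, $p(\theta_i^{\pm1})$, $p(\mu_i)^{\pm1}$, $p(\overline{\mu_i})^{\pm1}$, with $|p(\theta_i^{\pm1})|=1$, $\log|p(\lambda_i^{-1})|=-\log|p(\lambda_i)|$ and $\log|p(\overline{\mu_i})|=\log|p(\mu_i)|=-\log|p(\mu_i^{-1})|$ by the symplectic relation $p(s)p(s^{-1})=1$; in the above identification one has $x_i=\log|p(\lambda_i)|$ and $y_i=2\log|p(\mu_i)|$ (the factor $2$ coming from the convention $\log|z|^2$ at complex places). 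Hence
$$
\{\max\{\log|\lambda|,0\}:\lambda\in\sigma(B)\}=\{0\}\cup\{|x_i|:1\le i\le m\}\cup\{\tfrac12|y_j|:1\le j\le l\},
$$
and $\max\{\log|\mu|:\mu\in\sigma(B)\}$ is the largest element of this finite set. So, writing $t(B)=(|x_1|,\dots,|x_m|,\tfrac12|y_1|,\dots,\tfrac12|y_l|)\in\R_{\ge0}^{m+l}$, the corollary is equivalent to finding $B$ for which every coordinate of $t(B)$, divided by $\max_i t_i(B)$, lies in $[0,\varepsilon]\cup[1-\varepsilon,1]$.

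Now I would repeat the argument of the proof of Corollary~\ref{cor:gapGL}. Since $\ell(U)$ is a full-rank lattice in $\R^{m+l}$, there is a compact set $C\subseteq[-R_0,R_0]^{m+l}$ with $\R^{m+l}=C+\ell(U)$. Given $M>0$, write $(M,0,\dots,0)=c_M+z_M$ with $c_M\in C$ and $z_M\in\ell(U)$; then the first coordinate of $z_M$ lies in $[M-R_0,M+R_0]$ and the remaining ones in $[-R_0,R_0]$. Let $B_M\in\operatorname{Sp}(\Gamma)_A$ be such that $\ell(j(B_M))$ corresponds to $z_M$. Then $t_1(B_M)\geq\tfrac12(M-R_0)$ while $t_i(B_M)\leq R_0$ for $i\geq2$, so for $M>3R_0$ we get $\max_i t_i(B_M)=t_1(B_M)$, whence $t_1(B_M)/\max_i t_i(B_M)=1\in[1-\varepsilon,1]$, $t_i(B_M)/\max_i t_i(B_M)\leq R_0/\bigl(\tfrac12(M-R_0)\bigr)\leq\varepsilon$ for $i\geq2$ once $M$ is large enough (depending only on $\varepsilon$ and $R_0$), and $0/\max_i t_i(B_M)=0$. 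Taking $B=B_M$ for such an $M$ finishes the proof. (If $m=0$ the index $1$ here refers to a complex place rather than a real one, which changes nothing beyond the harmless constant $\tfrac12$ already present.)

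The only step requiring genuine care — and the only real subtlety I foresee — is the bookkeeping in the second paragraph: one must track how the symplectic constraint forces the logarithms of reciprocal eigenvalues into opposite pairs $(x_i,-x_i)$, how the $k(A)$ unimodular eigenvalues contribute only zeros, and how the $\log|z|^2$ convention at complex places inserts the factor $\tfrac12$. Once these identifications are in place, the lattice input from Theorem~\ref{thm:spforirreducible} makes the Dirichlet-style ``one large coordinate, all others bounded'' mechanism of Corollary~\ref{cor:gapGL} apply verbatim, with no further obstacle.
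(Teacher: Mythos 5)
Your proof is correct and follows essentially the same approach as the paper's: identify the eigenvalues of $B=p(A)$ via the map $\ell\circ j$, use that $\ell(U)$ is a full-rank lattice in $\ell(G)$ to get a compact covering set $C$, and then pick a target point with one large coordinate and the rest zero to extract a suitable $B_M$. The only (cosmetic) difference is that you work in the reduced coordinates $\R^{m+l}$ with the target $(M,0,\ldots,0)$, whereas the paper works directly in the ambient $\ell(G)$ with the target $(M,-M,0,\ldots,0)$; your more detailed bookkeeping of the $\tfrac12$ factor and the degenerate case is slightly more explicit than the paper's but amounts to the same thing.
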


\begin{proof}
For this we observe that if $B=p(A)\in \operatorname{Sp}(\Gamma)_A$,
then the eigenvalues of $B$ are
\[
	p(\lambda_i),p(\lambda_i^{-1}),p(\theta_i),\overline{p(\theta_i)},p(\mu_i),\overline{p(\mu_i)}, p(\mu_i^{-1}),\overline{p(\mu_i^{-1}}).
\]
Therefore, one has again that $\log |\lambda|$ or $2\log|\lambda|$ (with $\lambda\in\sigma(B)$) correspond to the coordinates of $\ell(j(B))$.
We saw above that $\ell(U)$ is a lattice of full rank in $\ell(G)$.
Hence there is a compact set $C$ in $\ell(G)$ such that $C+\ell(U) = \ell(G)$.
We infer that for each $M\in \N$ there is $j(B_M)=u_M\in U$ and $x_M\in C$ such that 
$\ell(j(B_M)) + x_M = (M,-M,0,\ldots,0)\in \ell(G)$ in the case $m\geq 1$.
We have
\[
	\ell(j(B_M)) \in [M-R,M+R]\times [-M-R,-M+R]\times [-R,R]^{2m+k+2l-2}.
\]
The claim follows by picking $M$ large enough. The case $m=0$ works similarly by picking the last coordinates. If $m=l=0$ then all eigenvalues are of modulus $1$ so we have $0/0$.
\end{proof}

\subsection{The general symplectic case}\label{ss:structure-symplectic}	
We will now discuss the case of a general symplectic matrix in $\text{Sp}(2d,\mathbb{Z})$ with separable characteristic polynomial. To that aim, we first collect a few statements taken from~\cite[\S 2.2]{Kelmer10}. Given a polynomial $p\in\Z[X]$, we set $p^*(X)=X^{\text{deg} p}p(X^{-1})\in\Z[X]$. We can write 
$$
\chi_A= \prod_{i=1}^r p_i \prod_{j=1}^s \rho_j\rho_j^\ast,
$$ 
with $p_i=p_i^\ast,
\rho_j\in \Z[X]$ irreducible, pairwise distinct and $\rho_j\neq \rho_j^\ast$. Then, according to~\cite[Prop.~2.4 and Rk.~2.4]{Kelmer10}, one has 
\begin{equation}\label{e:decomposition-irreducible-symplectic}
	\Q^{2d}=\bigoplus_{i=1}^r\text{ker}\,p_i(A)\oplus \bigoplus_{j=1}^s\left(\text{ker}\,\rho_j(A)\oplus \text{ker}\,\rho_j^*(A)\right),
\end{equation}
where 
\begin{itemize}
 \item for every $i,j$, $\text{ker}\,p_i(A)$ and $\text{ker}\,\rho_j(A)\oplus\text{ker}\,\rho_j^*(A)$ are orthogonal with respect to the symplectic form,
 \item for $i\neq i'$, $\text{ker}\,p_i(A)$ and $\text{ker}\,p_{i'}(A)$ are orthogonal with respect to the symplectic form,
 \item for $j\neq j'$, $\text{ker}\,\rho_j(A)\oplus \text{ker}\,\rho_j^*(A)$ and $\text{ker}\,\rho_{j'}(A)\oplus \text{ker}\,\rho_{j'}^*(A)$ are orthogonal with respect to the symplectic form,
 \item for every $1\leq j\leq s$, $\text{ker}\,\rho_j(A)$ and $ \text{ker}\,\rho_j^*(A)$ are isotropic spaces,
 \item for every $i,j$, $\text{ker}\,p_i(A)$ and $\text{ker}\,\rho_j(A)\oplus\text{ker}\,\rho_j^*(A)$ are symplectic subspaces,
 \item for every $i,j$, $\text{ker}\,p_i(A)$, $\text{ker}\,\rho_j(A)$ and $\text{ker}\,\rho_j^*(A)$ are irreducible subspaces for the action of $A$.
\end{itemize}
\begin{remark}If we denote by $V_0^{\perp_\omega}$ the symplectic orthogonal of a linear subspace $V_0$, recall that $V_0$ is Lagrangian when $V_0^{\perp_\omega}=V_0$. When $V_0\subseteq V_0^{\perp_\omega}$ (resp. $V_0^{\perp_\omega}\subseteq V_0$), we say that $V_0$ is isotropic (resp. coisotropic). When $V_0^{\perp_\omega}\cap V_0=\{0\}$, the subspace is symplectic for $\omega|_{V_0}$.
 \end{remark}
In the following, we shall use the following convention, for every $1\leq i\leq r$ and for every $1\leq j\leq s$,
\begin{equation}\label{e:list-invariant-subspaces}
V_i\coloneqq \text{ker}\,p_i(A),\ W_j\coloneqq \text{ker}\,\rho_j(A)\oplus \text{ker}\,\rho_j^*(A),\ \overline{W}_j\coloneqq \text{ker}\,\rho_j(A),\ \text{and}\ \overline{W}_j^*\coloneqq  \text{ker}\,\rho_j^*(A).
\end{equation}

By taking sums of the subspaces appearing
in~\eqref{e:list-invariant-subspaces}, we have a description of all the
$A$-invariant subspaces of $\Q^{2d}$.
We introduce
the following sublattices:
$$
\Delta_i\coloneqq V_i\cap\Z^{2d},\ \Gamma_j\coloneqq W_j\cap\Z^{2d},\
\overline{\Gamma}_j\coloneqq \overline{W}_j\cap\Z^{2d}
\ \text{and}\ \overline{\Gamma}_j^*\coloneqq \overline{W}_j^*\cap\Z^{2d}.
$$
Their ranks are given by the dimension of $V_i$, $W_j$, $\overline{W}_j$ and $\overline{W}_j^*$ respectively.
Indeed,
if we start with a basis $v_1,\ldots,v_n$ of one of the subspaces $V_i$, $W_j$, $\overline{W}_j$ or $\overline{W}_j^*$
then there is an integer $N$ such that, for any $i$, $Nv_i$ is in the sublattice. 
This shows that $\Delta_i$, $\Gamma_j$, $\overline{\Gamma}_j$,
and $\overline{\Gamma}_j^\ast$ are lattices in the respective subspaces.
In particular, $\Gamma\coloneqq \bigoplus_i\Delta_i\oplus\bigoplus_j\Gamma_j$ has finite index inside $\mathbb{Z}^{2d}$. One has 
\begin{lemma}
	\label{la:finiteindex-lineargroup}
	Let $G$ be a subgroup of $\operatorname{GL}(2d,\R)$.
	Then $G\cap \operatorname{GL}(2d,\Z)$ has finite index in 
	$G\cap \operatorname{GL}(\Gamma)$. 
\end{lemma}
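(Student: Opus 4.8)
The plan is to exploit the fact that $\Gamma$ is a finite-index sublattice of $\mathbb{Z}^{2d}$, together with the observation that both $\operatorname{GL}(2d,\mathbb{Z}) = \operatorname{GL}(\mathbb{Z}^{2d})$ and $\operatorname{GL}(\Gamma)$ sit inside the ambient group $\operatorname{GL}(2d,\mathbb{R})$ of automorphisms of $V = \mathbb{Q}^{2d}$ preserving \emph{some} lattice commensurable with $\mathbb{Z}^{2d}$. The key general fact I would invoke is: if $\Gamma_1, \Gamma_2$ are two lattices in $V$ with $\Gamma_2 \subseteq \Gamma_1$ of finite index $[\Gamma_1 : \Gamma_2] = m$, then any $B \in \operatorname{GL}(V)$ preserving $\Gamma_1$ automatically satisfies $B(m\Gamma_1) \subseteq m\Gamma_1 \subseteq \Gamma_2$; more to the point, the subgroup of $\operatorname{GL}(\Gamma_1)$ fixing the finite abelian group $\Gamma_1/\Gamma_2$ pointwise has finite index in $\operatorname{GL}(\Gamma_1)$ (it is the kernel of the finite homomorphism $\operatorname{GL}(\Gamma_1) \to \operatorname{Aut}(\Gamma_1/\Gamma_2)$, but one must be slightly careful since $B \in \operatorname{GL}(\Gamma_1)$ need not preserve $\Gamma_2$ a priori — what is true is that $B$ permutes the finitely many lattices between $m\Gamma_1$ and $\Gamma_1$).

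\medskip

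\textbf{Step 1.} Reduce to a statement about two commensurable lattices. Since $\Gamma \subseteq \mathbb{Z}^{2d}$ has finite index $m := [\mathbb{Z}^{2d} : \Gamma]$, we have the chain $m\mathbb{Z}^{2d} \subseteq \Gamma \subseteq \mathbb{Z}^{2d}$, and also $m\mathbb{Z}^{2d} \subseteq m' \Gamma \subseteq \Gamma$ for a suitable $m'$ (indeed $m\mathbb{Z}^{2d}\subseteq \Gamma$ already). Let $L_0 := m\,\mathbb{Z}^{2d}$, a sublattice contained in both $\Gamma$ and $\mathbb{Z}^{2d}$ and of finite index in each. \textbf{Step 2.} Show $G \cap \operatorname{GL}(\Gamma)$ stabilizes the finite set $\mathcal{L}$ of all lattices $L$ with $L_0 \subseteq L \subseteq \Gamma$: if $B \in \operatorname{GL}(\Gamma)$ then $B(L_0) = B(m\mathbb{Z}^{2d})$; here I need $B$ to preserve $\mathbb{Z}^{2d}$ to conclude $B(L_0)=L_0$, which is not given — so instead I argue that $\operatorname{GL}(\Gamma)$ acts on the (finite) set of sublattices of $\Gamma$ of index dividing $m^{2d}$ and containing $(\det)\cdot\Gamma$-type sublattices; more cleanly, pass to $L_0' := $ the largest sublattice of $\Gamma$ that is $\operatorname{GL}(\Gamma)$-invariant and contained in $\mathbb{Z}^{2d}$, e.g. $L_0' = \bigcap_{B \in \operatorname{GL}(\Gamma)} B(\mathbb{Z}^{2d} \cap \Gamma)$, but this intersection could be infinite. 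The robust route: the map $\operatorname{GL}(\Gamma) \to \operatorname{Sym}(\mathcal{S})$ where $\mathcal{S}$ is the finite set of lattices $L$ with $m\Gamma \subseteq L \subseteq \Gamma$ is a homomorphism with finite image; its kernel $K$ has finite index in $\operatorname{GL}(\Gamma)$, and for $B \in K$ we get $B(\mathbb{Z}^{2d}\cap\Gamma)$... \textbf{Step 3.} Conclude: any $B$ in the finite-index subgroup $K \le \operatorname{GL}(\Gamma)$ fixing every lattice in $\mathcal{S}$ in particular fixes $\Gamma \cap \mathbb{Z}^{2d} = \Gamma$ and, since $m\mathbb{Z}^{2d} \subseteq \Gamma \subseteq \mathbb{Z}^{2d}$ and $B$ preserves $\Gamma$ with $B(\Gamma) = \Gamma$, and $\mathbb{Z}^{2d}/m\mathbb{Z}^{2d}$ contains $\Gamma/m\mathbb{Z}^{2d}$... one sees $B$ permutes the finitely many lattices between $m\mathbb{Z}^{2d}$ and $\mathbb{Z}^{2d}$ among which is $\mathbb{Z}^{2d}$ itself, and by shrinking $K$ further to fix those too, $B \in \operatorname{GL}(2d,\mathbb{Z})$. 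Intersecting with $G$ gives $G \cap K \subseteq G \cap \operatorname{GL}(2d,\mathbb{Z})$, and $G \cap K$ has finite index in $G \cap \operatorname{GL}(\Gamma)$.

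\medskip

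\textbf{The hard part} is bookkeeping: making precise that $\operatorname{GL}(\Gamma)$, although it need not preserve $\mathbb{Z}^{2d}$ element-wise, does act on a \emph{finite} poset of commensurable lattices in such a way that a finite-index subgroup fixes $\mathbb{Z}^{2d}$. The clean formulation I would actually write: let $N = m^{2d}$ (or simply $N = |\mathbb{Z}^{2d}/m\mathbb{Z}^{2d}| = m^{2d}$ with $m=[\mathbb{Z}^{2d}:\Gamma]$), note $m\mathbb{Z}^{2d}\subseteq\Gamma\subseteq\mathbb{Z}^{2d}$, so $N\mathbb{Z}^{2d}\subseteq m\Gamma\subseteq\Gamma\subseteq\mathbb{Z}^{2d}\subseteq\tfrac1m\Gamma$; every $B\in\operatorname{GL}(\Gamma)$ preserves $\Gamma$ hence preserves $N\Gamma\subseteq\Gamma$ and acts on the finite group $\Gamma/N\Gamma$, giving a homomorphism $\operatorname{GL}(\Gamma)\to\operatorname{Aut}(\Gamma/N\Gamma)$ with finite image and kernel $K_0$ of finite index. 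For $B\in K_0$ and any lattice $L$ with $N\Gamma\subseteq L\subseteq\Gamma$ (in particular $L = \mathbb{Z}^{2d}$, since $N\mathbb{Z}^{2d}\subseteq N\Gamma \cdot$... — here I need $N\Gamma \subseteq \mathbb{Z}^{2d}$, true since $\Gamma\subseteq\mathbb{Z}^{2d}$), $B$ fixes $L/N\Gamma$ setwise hence $B(L)=L$; thus $B(\mathbb{Z}^{2d})=\mathbb{Z}^{2d}$, i.e. $B\in\operatorname{GL}(2d,\mathbb{Z})$. Hence $K_0\le\operatorname{GL}(2d,\mathbb{Z})\cap\operatorname{GL}(\Gamma)$, so $G\cap K_0\subseteq G\cap\operatorname{GL}(2d,\mathbb{Z})$ has finite index in $G\cap\operatorname{GL}(\Gamma)$, which is what we want. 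I would double-check the single subtle point that $\mathbb{Z}^{2d}$ really does lie between $N\Gamma$ and $\Gamma$: $\Gamma\subseteq\mathbb{Z}^{2d}$ is given, and $N\Gamma\subseteq N\mathbb{Z}^{2d}\subseteq m\mathbb{Z}^{2d}\subseteq\Gamma\subseteq\mathbb{Z}^{2d}$ provided $N\ge m$, which holds. This makes the argument go through cleanly.
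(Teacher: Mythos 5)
Your strategy is essentially the paper's: pass to the finite quotient $\Gamma/N\Gamma$, take the kernel $K_0$ of the induced map $\operatorname{GL}(\Gamma)\to\operatorname{Aut}(\Gamma/N\Gamma)$ (a finite-index subgroup), show $K_0\subseteq\operatorname{GL}(2d,\Z)$, and intersect with $G$. That is correct as a plan. But the step where you conclude $B(\Z^{2d})=\Z^{2d}$ for $B\in K_0$ has a genuine gap: you argue that $B$ fixes every lattice $L$ with $N\Gamma\subseteq L\subseteq\Gamma$ and then specialize to $L=\Z^{2d}$, but $\Z^{2d}$ is \emph{not} contained in $\Gamma$ --- the containment goes the other way, $\Gamma\subseteq\Z^{2d}$. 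Your own ``double-check'' chain
$N\Gamma\subseteq N\Z^{2d}\subseteq m\Z^{2d}\subseteq\Gamma\subseteq\Z^{2d}$
actually places $\Gamma$, not $\Z^{2d}$, in the middle, so it proves the wrong containment; $\Z^{2d}$ does not belong to the poset you let $K_0$ act on, and the argument that $B$ fixes it does not follow.

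The repair is small and is what the paper does directly. For $B\in K_0$ one has $(B-I)(\Gamma)\subseteq N\Gamma$, so $B=I+NX$ with $X\coloneqq \tfrac1N(B-I)$ satisfying $X(\Gamma)\subseteq\Gamma$. Since $N\Z^{2d}\subseteq\Gamma$ (true for $N=m$ already, and a fortiori for your $N=m^{2d}$), one computes
$(B-I)(\Z^{2d})=NX(\Z^{2d})=X(N\Z^{2d})\subseteq X(\Gamma)\subseteq\Gamma\subseteq\Z^{2d}$,
and likewise for $B^{-1}$; hence $B\in\operatorname{GL}(2d,\Z)$. (Alternatively, to keep your lattice-poset phrasing, let $K_0$ be the kernel of the action on the larger finite quotient $\tfrac1m\Gamma/N\Gamma$, which does contain $\Z^{2d}/N\Gamma$ as a subgroup.) With that correction your proof goes through and matches the paper's argument; the paper's version is just the direct computation, avoiding the poset bookkeeping and the unnecessarily large choice $N=m^{2d}$.
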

\begin{proof} Since the property of having finite index is stable under intersection with $G$,
	we can without loss of generality assume that $G=\operatorname{GL}(2d,\R)$.
	There is $N\in \N$ such that $N\Z^{2d} \subseteq\Gamma$.
	Let $g\in \operatorname{GL}(\Gamma)$ then $g^{\pm1} (\Gamma)\subseteq \Gamma$.
	Therefore, $g$ acts on the finite space $\Gamma/N\Gamma$
	so that there is $M\in \N$ such that $g^{\pm M}$ is the identity on $\Gamma/N\Gamma$.
	Hence, $g^{\pm M} = I + NX_\pm$ with $X_\pm(\Gamma)\subseteq\Gamma$.
	It follows that
	\[
		(g^{\pm M} - I)(\Z^{2d}) \subseteq X_\pm(\Gamma)\subseteq \Gamma\subseteq\Z^{2d}.
	\]
	This implies $g^M\in \operatorname{GL}(\Z^{2d})$.
	Since $\operatorname{GL}(\Gamma)$ is finitely generated
	(as it is isomorphic to $\operatorname{GL}(2d,\Z)$ by choosing a basis),
	the lemma follows.
\end{proof}
In particular, this lemma shows that $\text{Sp}(2d,\Z)_A$ has finite index in 
\begin{equation}\label{e:isomorphism}
\text{Sp}(\Gamma)_A\simeq \prod_{i=1}^r\text{Sp}(\Delta_i)_{A|_{V_i}}\times \prod_{j=1}^s\text{Sp}(\Gamma_j)_{A|_{W_j}}.
\end{equation}
The group structure of $\text{Sp}(\Delta_i)_{A|_{V_i}}$ was already described in Theorem~\ref{thm:spforirreducible}
as the characteristic polynomial of $A|_{V_i}$ is $p_i=p_i^\ast$ which is irreducible.
Hence, it remains to describe the group structure of $\text{Sp}(\Gamma_j)_{A|_{W_j}}$. To that aim, observe first that the same argument as in the proof of Lemma~\ref{la:finiteindex-lineargroup} shows that $\text{Sp}(\Gamma_j)_{A|_{W_j}}$ has finite index in $\text{Sp}(\overline{\Gamma}_j\oplus\overline{\Gamma}_j^*)_{A|_{W_j}}$. Then it remains to use the following two lemmas:

\begin{lemma}
	\label{la:structureisotropic}
	Let $\tilde{A}\in \operatorname{Sp}(W,\omega)$ with characteristic
polynomial $\chi_{\tilde{A}} = \rho \rho^\ast$ where $\rho\in \Q[X]$ irreducible and $\rho\neq \rho^\ast$. Then, there is an $\tilde{A}$-invariant Lagrangian subspace $\overline{W}\leq W$ such that
	$W\simeq \overline{W}\times \overline{W}^\ast$ (with $\overline{W}^\ast$ the dual space to $\overline{W}$) and 
	$\omega((v,\lambda),(w,\mu))=\lambda(w)-\mu(v)$.
	Moreover, $\operatorname{Sp}(W,\omega)_{\tilde{A}} \simeq \{ \diag(B,(B^{-1})^T)\mid B\in \operatorname{GL}(\overline{W})_{\tilde{A}|_{\overline{W}}}\}\simeq \operatorname{GL}(\overline{W})_{\tilde{A}|_{\overline{W}}}$.
\end{lemma}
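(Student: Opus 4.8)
The plan is to build the claimed splitting $W \simeq \overline{W} \times \overline{W}^\ast$ from the generalized-eigenspace decomposition and then transport the symplectic and dynamical data through it. First I would set $\overline{W} \coloneqq \ker \rho(\tilde A)$ and $\overline{W}' \coloneqq \ker \rho^\ast(\tilde A)$; by the discussion preceding the lemma (or by a direct argument using that $\rho, \rho^\ast$ are coprime since $\rho \neq \rho^\ast$ are distinct irreducibles), $W = \overline{W} \oplus \overline{W}'$, both summands are $\tilde A$-invariant, both are isotropic for $\omega$, and $\dim \overline{W} = \dim \overline{W}' = \deg \rho = \tfrac12 \dim W$. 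Since $\overline{W}$ is isotropic of half dimension it is Lagrangian, and likewise $\overline{W}'$. The next step is to identify $\overline{W}'$ with the dual $\overline{W}^\ast$: because $\omega$ is nondegenerate and $\overline{W}, \overline{W}'$ are complementary Lagrangians, the pairing $\overline{W}' \times \overline{W} \ni (\lambda, v) \mapsto \omega(v, \lambda)$ is a perfect pairing, giving an isomorphism $\Phi\colon \overline{W}' \xrightarrow{\sim} \overline{W}^\ast$. Under the resulting identification $W \simeq \overline{W} \times \overline{W}^\ast$, a routine check shows $\omega((v,\lambda),(w,\mu)) = \mu(v) - \lambda(w)$ up to a global sign, which one normalizes by replacing $\Phi$ with $-\Phi$ to get exactly $\lambda(w) - \mu(v)$.

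Now I would compute the centralizer. Let $\tilde B \in \operatorname{Sp}(W,\omega)_{\tilde A}$. Since $\tilde B$ commutes with $\tilde A$, it preserves each generalized eigenspace, hence preserves $\overline{W}$ and $\overline{W}'$; write $B \coloneqq \tilde B|_{\overline{W}} \in \operatorname{GL}(\overline{W})_{\tilde A|_{\overline{W}}}$ and $B' \coloneqq \tilde B|_{\overline{W}'}$. The condition that $\tilde B$ is symplectic, spelled out on the product coordinates using $\omega((v,\lambda),(w,\mu)) = \lambda(w) - \mu(v)$, forces $\omega(Bv, B'\mu) = \omega(v,\mu)$ for all $v \in \overline W$, $\mu \in \overline W^\ast$, which under the duality pairing is precisely $(B'\mu)(Bv) = \mu(v)$, i.e. $B^T B' = \operatorname{Id}$, so $B' = (B^{-1})^T$ (here $B^T\colon \overline{W}^\ast \to \overline{W}^\ast$ is the transpose). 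Conversely, for any $B \in \operatorname{GL}(\overline{W})_{\tilde A|_{\overline W}}$ the block matrix $\diag(B, (B^{-1})^T)$ commutes with $\tilde A = \diag(\tilde A|_{\overline W}, (\tilde A|_{\overline W}^{-1})^T)$ — note $\tilde A|_{\overline{W}'} = (\tilde A|_{\overline W}^{-1})^T$ follows from the same symplectic computation applied to $\tilde A$ itself, or from the fact that the eigenvalues of $\tilde A$ on $\overline{W}'$ are the reciprocals of those on $\overline{W}$ — and is symplectic by the displayed relation. This establishes the isomorphism $\operatorname{Sp}(W,\omega)_{\tilde A} \simeq \{\diag(B,(B^{-1})^T) : B \in \operatorname{GL}(\overline{W})_{\tilde A|_{\overline W}}\}$, and the second isomorphism with $\operatorname{GL}(\overline{W})_{\tilde A|_{\overline W}}$ is just the projection $\diag(B,(B^{-1})^T) \mapsto B$.

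The main obstacle, such as it is, is bookkeeping rather than depth: getting the sign conventions in $\omega$ consistent so that one lands exactly on the normal form $\omega((v,\lambda),(w,\mu)) = \lambda(w) - \mu(v)$, and then being careful about which transpose/inverse appears when translating "symplectic" into a block identity — it is easy to end up with $(B^T)^{-1}$ versus $(B^{-1})^T$ written on the wrong space, though these agree. One should also make sure $\overline{W}$ is genuinely $\tilde A$-invariant and Lagrangian, which uses separability ($\rho \neq \rho^\ast$, so $\gcd(\rho,\rho^\ast) = 1$ in $\Q[X]$) together with the orthogonality facts listed before the lemma; I would simply cite those. Nothing here requires the integral structure — the lattice statements come afterward — so the whole argument takes place over $\Q$.
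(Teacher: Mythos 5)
Your proposal is correct and follows essentially the same route as the paper's proof: both derive the Lagrangian splitting $W=\overline{W}\oplus\overline{W}'$ from the coprimality of $\rho$ and $\rho^\ast$ (the paper spells out the B\'ezout computation inline while you cite the preceding orthogonality discussion), identify $\overline{W}'\simeq\overline{W}^\ast$ via the $\omega$-pairing, and read off the block structure $\diag(B,(B^{-1})^T)$ of the centralizer. The only difference is cosmetic: you unpack the final "clearly" of the paper into an explicit transpose computation, which is fine.
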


\begin{remark}
 Under the assumption of this lemma, $\tilde{A}$ has no eigenvalues of modulus $1$.
Indeed, if $\rho(\theta)=0$ for $|\theta|=1$ then
$\rho(\overline{\theta})=0=\rho^\ast(\theta^{-1})$ which would contradict the separability as $\overline{\theta}=\theta^{-1}$.
\end{remark}

\begin{proof}
	We write $1=r\rho+s\rho^\ast$.
	Let $v,v'\in \overline W\eqqcolon \ker p(\tilde{A})$.
	Then $v=s\rho^\ast(\tilde{A})v$ and $\omega(v,v')=\omega(sp^\ast(\tilde{A}),v')=\omega(v,s^\ast p(\tilde{A})v)=0$.
	Hence $\overline W\subseteq \overline{W}^{\perp_\omega}$.
	Since $\dim \overline{W}=\dim \ker p^\ast(\tilde{A}) =\frac 12 \dim W$ 
	we have $\overline{W}=\overline{W}^{\perp_\omega}$.
	Letting $\overline{W}'=\ker \rho^\ast(\tilde{A})$ we have $\overline{W}'^{\perp_\omega} = \overline{W}'$ and 
	$W=\overline{W}\oplus \overline{W}'$.
	It follows that $w'\mapsto \omega(w',\cdot)$ defines an isomorphism $J\colon \overline{W}'\to \overline{W}^\ast$.
	For $w\in \overline{W}$ and $\lambda\in \overline{W}^\ast$,
	we have $\omega(J^{-1}(\lambda),w)=\lambda(w)$.
	Hence, $(W,\omega)$ is $\overline{W}\times \overline{W}'$ with the standard 
	symplectic form.
	Then clearly $B\in \operatorname{Sp}(W,\omega)_ {\tilde{A}}$ if and only if $B$ is of the form 
	$\diag(\overline{B},(\overline{B}^{-1})^T)$ for 
	$\overline{B} \in \operatorname{GL}(\overline W)_{\tilde{A}|_{\overline{W}}}$.
\end{proof}

\begin{lemma}
	In the situation of Lemma~\ref{la:structureisotropic}
	if $\overline \Gamma$ is a lattice in $\overline{W}$
	and $\overline \Gamma^\ast $ is a lattice in $\overline{W}^\ast$
and $\tilde A\in \operatorname{Sp}(\overline \Gamma\otimes \overline{\Gamma}^\ast)$
then $\operatorname{Sp}(\overline\Gamma\oplus \overline \Gamma^\ast)_{\tilde A}$
	is a finite index subgroup of $\operatorname{GL}(\overline \Gamma)_{\tilde A|_{\overline W}}$
	under the isomorphism $\operatorname{Sp}(W,\omega)_{\tilde A} \simeq \operatorname{GL}(\overline{W})_{\tilde A|_{\overline{W}}}$.
\end{lemma}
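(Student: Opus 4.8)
The plan is to combine the explicit description of the centralizer given by Lemma~\ref{la:structureisotropic} with a commensurability argument modelled on the proof of Lemma~\ref{la:finiteindex-lineargroup}. Recall that Lemma~\ref{la:structureisotropic} identifies $\operatorname{Sp}(W,\omega)_{\tilde A}$ with $\operatorname{GL}(\overline W)_{\tilde A|_{\overline W}}$ through the map $\Phi\colon B\mapsto \diag(B,(B^{-1})^T)$, where $\overline W^\ast$ is the dual of $\overline W$ and $\omega$ is the canonical pairing. Introduce the dual lattice $\overline\Gamma^\vee\coloneqq\{\phi\in\overline W^\ast\mid \phi(\overline\Gamma)\subseteq\Z\}$. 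Two elementary observations drive the argument. First, $\diag(B,(B^{-1})^T)$ preserves the lattice $\overline\Gamma\oplus\overline\Gamma^\vee$ exactly when $B$ preserves $\overline\Gamma$ (the ``only if'' by restriction and invertibility, the ``if'' because the contragredient of a lattice automorphism preserves the dual lattice); hence $\Phi$ restricts to an isomorphism $\operatorname{GL}(\overline\Gamma)_{\tilde A|_{\overline W}}\cong\operatorname{Sp}(\overline\Gamma\oplus\overline\Gamma^\vee)_{\tilde A}$. Second, any two lattices in the finite-dimensional $\Q$-vector space $\overline W^\ast$ are commensurable, so one can fix $N\in\N$ with $N\overline\Gamma^\vee\subseteq\overline\Gamma^\ast$ and $N\overline\Gamma^\ast\subseteq\overline\Gamma^\vee$. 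Transporting the statement through $\Phi$, it then suffices to show that $H\coloneqq\{B\in\operatorname{GL}(\overline\Gamma)_{\tilde A|_{\overline W}}\mid (B^{-1})^T(\overline\Gamma^\ast)=\overline\Gamma^\ast\}$ has finite index in $G\coloneqq\operatorname{GL}(\overline\Gamma)_{\tilde A|_{\overline W}}$, since $\Phi$ identifies $H$ with $\operatorname{Sp}(\overline\Gamma\oplus\overline\Gamma^\ast)_{\tilde A}$ and $G$ with $\operatorname{GL}(\overline\Gamma)_{\tilde A|_{\overline W}}$.

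To prove $[G:H]<\infty$ I would run the argument of Lemma~\ref{la:finiteindex-lineargroup} verbatim. Every $B\in G$ preserves $\overline\Gamma$, so $(B^{-1})^T$ preserves $\overline\Gamma^\vee$, and $B\mapsto (B^{-1})^T$ yields an action of $G$ on the finite group $\overline\Gamma^\vee/N^2\overline\Gamma^\vee$; let $G_0\leq G$ be its kernel, which has finite index. For $B\in G_0$ one has $((B^{-1})^T-I)(\overline\Gamma^\vee)\subseteq N^2\overline\Gamma^\vee$, hence for any $x\in\overline\Gamma^\ast$ we get $Nx\in\overline\Gamma^\vee$, then $((B^{-1})^T-I)(Nx)\in N^2\overline\Gamma^\vee$, and therefore $((B^{-1})^T-I)(x)\in N\overline\Gamma^\vee\subseteq\overline\Gamma^\ast$; this gives $(B^{-1})^T(\overline\Gamma^\ast)\subseteq\overline\Gamma^\ast$. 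Applying the same to $B^{-1}\in G_0$ gives $B^T(\overline\Gamma^\ast)\subseteq\overline\Gamma^\ast$, hence $(B^{-1})^T(\overline\Gamma^\ast)=\overline\Gamma^\ast$, i.e.\ $G_0\subseteq H$. Since $H\subseteq G$ and $[G:G_0]<\infty$, we conclude $[G:H]<\infty$, which is the claim.

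The only point needing care — and the one I would flag explicitly — is to keep the two lattices in $\overline W^\ast$ apart: the lattice $\overline\Gamma^\ast$ appearing in the statement is an arbitrary lattice (in the intended application it is $\overline W_j^\ast\cap\Z^{2d}$, which need not be the dual of $\overline\Gamma_j=\overline W_j\cap\Z^{2d}$), whereas it is for the dual lattice $\overline\Gamma^\vee$ that $\Phi$ produces an isomorphism rather than merely a finite-index inclusion. There is no substantive obstacle here; beyond Lemma~\ref{la:structureisotropic} and the technique of Lemma~\ref{la:finiteindex-lineargroup}, the proof is pure bookkeeping. Together with~\eqref{e:isomorphism} and Theorem~\ref{thm:spforirreducible}, this lemma pins down the commensurability class, and in particular the rank, of $\operatorname{Sp}(2d,\Z)_A$ for any $A$ with separable characteristic polynomial.
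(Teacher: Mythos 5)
Your proof is correct and follows essentially the same route as the paper's: transport the problem through the isomorphism of Lemma~\ref{la:structureisotropic}, introduce the dual (co-)lattice of $\overline\Gamma$ in $\overline W^\ast$, and settle the finite-index claim by the commensurability argument underlying Lemma~\ref{la:finiteindex-lineargroup}. The only cosmetic difference is that you rerun the commensurability argument explicitly (with the $N$, $N^2$ trick) and spell out that the dual lattice produces a genuine isomorphism rather than a mere inclusion, whereas the paper chains the two inclusions $\mathcal{R}(\operatorname{Sp}(\overline\Gamma\oplus\overline\Gamma^\ast)_{\tilde A})\subseteq\operatorname{GL}(\overline\Gamma)_{\tilde A|_{\overline W}}\subseteq\mathcal{R}(\operatorname{Sp}(\overline\Gamma\oplus\operatorname{co}(\overline\Gamma))_{\tilde A})$ and invokes Lemma~\ref{la:finiteindex-lineargroup} by reference; your version is a little more self-contained but not a different argument.
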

\begin{proof}
	As the isomorphism
	$\mathcal{R}\colon \operatorname{Sp}(W,\omega)_{\tilde A}
	\simeq \operatorname{GL}(\overline{W})_{\tilde A|_{\overline{W}}}$
	is given by restriction to $\overline{W}$,
	$\operatorname{Sp}(\overline{\Gamma}\oplus \overline{\Gamma}^\ast)_{\tilde A}$
	is mapped into $\operatorname{GL}(\overline \Gamma)_{\tilde A|_{\overline{W}}}$.
	Let us define the so-called colattice
	$\operatorname{co}(\overline{\Gamma})\coloneqq \{\lambda\in \overline{W}^\ast\mid
	\lambda(\overline{\Gamma})\subseteq\Z\}$.
	Then for $B=\diag(\overline{B},\overline{B}^{-T})\in \operatorname{Sp}(W,\omega)_{\tilde A}$
	with $\overline{B}\in \operatorname{GL}(\overline{\Gamma})$
	we have $(\overline{B}^{-1})^T(\lambda)(\gamma) = \lambda(\overline{B}^{-1} \gamma) \in \Z$
	for $\lambda\in \operatorname{co}(\overline{\Gamma})$ and $\gamma\in \overline{\Gamma}$.
	Hence, $B\in \operatorname{Sp}(\overline \Gamma \oplus \operatorname{co}(\overline{\Gamma}))$.
	It follows
	$\mathcal{R}(\operatorname{Sp}(\overline{\Gamma} \oplus \overline{\Gamma}^\ast)_{\tilde A})
	\subseteq \operatorname{GL}(\overline{\Gamma})_{\tilde A|_{\overline{W}}}
	\subseteq \mathcal{R}(\operatorname{Sp}(\overline{\Gamma} \oplus \operatorname{co}( \overline{\Gamma}))_{\tilde A})$.
	By applying again Lemma~\ref{la:finiteindex-lineargroup}
	we find that 
	$\operatorname{Sp}(\overline{\Gamma} \oplus \overline{\Gamma}^\ast)_{\tilde A}$
	has finite index in 
	$ \operatorname{Sp}(\overline{\Gamma} \oplus \operatorname{co}( \overline{\Gamma}))_{\tilde A}$.
	This implies that 
	$\mathcal{R}(\operatorname{Sp}(\overline{\Gamma} \oplus \overline{\Gamma}^\ast)_{\tilde A})
	$ has finite index in $\operatorname{GL}(\overline{\Gamma})_{\tilde A|_{\overline{W}}}$.
\end{proof}

As a consequence of this lemma and of the decomposition~\eqref{e:isomorphism}, we can deduce that $\text{Sp}(2d,\mathbb{Z})_A$ has finite index in a subgroup isomorphic to
$$
\prod_{i=1}^r\text{Sp}(\Delta_i)_{A|_{V_i}}\times \prod_{j=1}^s\text{GL}(\overline{\Gamma}_j)_{A|_{\overline{W}_j}}.
$$
Combined with Theorems~\ref{thm:rankGL} and~\ref{thm:spforirreducible} and the fact that $A|_{\overline{W}_j}$ has no eigenvalue of modulus $1$, we infer the following structure theorem:
\begin{theorem}\label{t:structure-symplectic}
	Let $A\in \operatorname{Sp}(2d,\Z)$ with separable characteristic polynomial and suppose that $A$ has $2m(A)$ real eigenvalues 
	and $4l(A)$ eigenvalues in $\C\setminus (\R\cup \mathbb{S}^1)$. Then, one has
	\[
		\operatorname{Sp}(2d,\Z)_A \simeq F \times\Z^{m(A)+l(A)-I(A)}
	\]
	where $F$ is a finite abelian group (consisting of matrices $B$ with $B^{|F|}=1$)
	and $2I(A)$ is the number of irreducible isotropic invariant subspaces of $A$ in $\Q^{2d}$.
\end{theorem}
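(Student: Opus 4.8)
The plan is to read off the conclusion from the reduction already obtained, namely that $\operatorname{Sp}(2d,\Z)_A$ is isomorphic to a finite-index subgroup of
$$
P\coloneqq\prod_{i=1}^r\operatorname{Sp}(\Delta_i)_{A|_{V_i}}\times\prod_{j=1}^s\operatorname{GL}(\overline{\Gamma}_j)_{A|_{\overline{W}_j}},
$$
so that it suffices to compute the rank of $P$ and match it with $m(A)+l(A)-I(A)$. First I would apply Theorem~\ref{thm:spforirreducible} to each $A|_{V_i}$, whose characteristic polynomial $p_i=p_i^\ast$ is irreducible: writing $2m_i$ (resp.\ $4l_i$) for the number of real eigenvalues (resp.\ eigenvalues in $\C\setminus(\R\cup\mathbb{S}^1)$) of $A|_{V_i}$, this gives $\operatorname{Sp}(\Delta_i)_{A|_{V_i}}\simeq F_i\times\Z^{m_i+l_i}$. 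Then I would apply Theorem~\ref{thm:rankGL} to each $A|_{\overline{W}_j}$, whose characteristic polynomial $\rho_j$ is irreducible: by the remark following Lemma~\ref{la:structureisotropic}, $A|_{\overline{W}_j}$ has no eigenvalue of modulus $1$, so writing $r_j$ for its number of real eigenvalues and $2c_j$ for its number of non-real ones (all then lying in $\C\setminus(\R\cup\mathbb{S}^1)$), one gets $\operatorname{GL}(\overline{\Gamma}_j)_{A|_{\overline{W}_j}}\simeq F_j'\times\Z^{r_j+c_j-1}$. Hence $P\simeq F_0\times\Z^{\varrho}$ with $F_0=\prod_iF_i\times\prod_jF_j'$ finite abelian and $\varrho=\sum_i(m_i+l_i)+\sum_j(r_j+c_j-1)$.

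The core of the argument is then the bookkeeping identifying $\varrho$ with $m(A)+l(A)-I(A)$. From the decomposition $\Q^{2d}=\bigoplus_iV_i\oplus\bigoplus_j(\overline{W}_j\oplus\overline{W}_j^\ast)$ in \eqref{e:decomposition-irreducible-symplectic}, the spectrum of $A$ is the disjoint union of the spectra of $A|_{V_i}$, $A|_{\overline{W}_j}$ and $A|_{\overline{W}_j^\ast}$; since $\rho_j^\ast(X)=X^{\deg\rho_j}\rho_j(X^{-1})$ and $A$ is invertible, the eigenvalues of $A|_{\overline{W}_j^\ast}$ are the inverses of those of $A|_{\overline{W}_j}$, so $A|_{\overline{W}_j^\ast}$ again has $r_j$ real and $2c_j$ non-real eigenvalues. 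Counting all real eigenvalues of $A$ gives $2m(A)=\sum_i2m_i+\sum_j2r_j$, and counting all eigenvalues in $\C\setminus(\R\cup\mathbb{S}^1)$ gives $4l(A)=\sum_i4l_i+\sum_j4c_j$. To identify $I(A)$: as $\chi_A$ is separable, the pieces $V_1,\ldots,V_r,\overline{W}_1,\overline{W}_1^\ast,\ldots,\overline{W}_s,\overline{W}_s^\ast$ are pairwise non-isomorphic simple $\Q[A]$-submodules whose direct sum is $\Q^{2d}$, so every irreducible $A$-invariant subspace of $\Q^{2d}$ is one of them; each $V_i$ is a nonzero symplectic subspace, hence not isotropic, while each $\overline{W}_j$ and $\overline{W}_j^\ast$ is isotropic by \eqref{e:decomposition-irreducible-symplectic}. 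Thus there are exactly $2s$ irreducible isotropic invariant subspaces, i.e.\ $I(A)=s$. Substituting, $\varrho=\bigl(\sum_im_i+\sum_jr_j\bigr)+\bigl(\sum_il_i+\sum_jc_j\bigr)-s=m(A)+l(A)-I(A)$.

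Finally I would transfer the structure from $P$ to its finite-index subgroup $\operatorname{Sp}(2d,\Z)_A$: being a subgroup of the finitely generated abelian group $P$ it is itself finitely generated abelian, and having finite index in $P$ it has the same torsion-free rank $\varrho$, so by the structure theorem for finitely generated abelian groups it is isomorphic to $F\times\Z^{\varrho}$ with $F$ its torsion subgroup, which is finite and abelian, and every $B\in F$ satisfies $B^{|F|}=1$ by Lagrange's theorem. (One can also invoke Lemma~\ref{la:centralizerab} to see directly that $\operatorname{Sp}(2d,\Z)_A$ is abelian.) The one step requiring genuine care is the spectral/subspace bookkeeping — especially checking that the irreducible isotropic invariant subspaces are precisely the $2s$ spaces $\overline{W}_j,\overline{W}_j^\ast$ — which rests on separability of $\chi_A$ (forcing multiplicity one, hence simplicity of the pieces $V_i,\overline{W}_j,\overline{W}_j^\ast$) together with the orthogonality and isotropy properties recorded after \eqref{e:decomposition-irreducible-symplectic}; everything else is routine assembly.
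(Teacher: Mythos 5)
Your proposal is correct and follows essentially the same route as the paper: both start from the reduction, established in \S\ref{ss:structure-symplectic}, that $\operatorname{Sp}(2d,\Z)_A$ is commensurable with $\prod_{i}\operatorname{Sp}(\Delta_i)_{A|_{V_i}}\times\prod_{j}\operatorname{GL}(\overline{\Gamma}_j)_{A|_{\overline{W}_j}}$, then apply Theorems~\ref{thm:spforirreducible} and~\ref{thm:rankGL} factor by factor and match ranks. The paper leaves the spectral/subspace bookkeeping (computing $m(A)$, $l(A)$, $I(A)$ from the $m_i$, $l_i$, $r_j$, $c_j$, and $s$) entirely implicit; you have simply spelled it out, and the accounting — $m(A)=\sum_i m_i+\sum_j r_j$, $l(A)=\sum_i l_i+\sum_j c_j$, $I(A)=s$, using that $\sigma(A|_{\overline{W}_j^\ast})=\sigma(A|_{\overline{W}_j})^{-1}$, that these pieces have no eigenvalue on $\mathbb{S}^1$, and that separability forces the $V_i,\overline{W}_j,\overline{W}_j^\ast$ to be exactly the simple $\Q[A]$-submodules — is correct.
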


\section{Reducibility and rigidity of actions on tori}
\label{s:rigidity}

In this section, we combine the constructions from the previous sections with a rigidity result of Einsiedler and Lindenstrauss~\cite{EL03, EinsiedlerLindenstrauss2022} in view of describing the regularity of semiclassical measures. Along the way, we also review some material from rigidity of $\Z^r$-actions on tori and describe criteria on the matrix $A$ we started with where these results apply. This allows us to prove Theorems~\ref{thm:mainintro} and~\ref{thm:mainintro2} from the introduction and to state and prove our main Theorem (Theorem~\ref{thm:mainreducible}) on semiclassical measures for joint eigenmodes.

\subsection{Preliminary conventions}

Let $m\in \N$.
We consider the action of $\text{GL}(m,\Z)$ on $\T^m$.
The following notion will be used in Theorem~\ref{thm:rigidity} below.
\begin{definition}[{\cite{EL03}}]\label{def:irred}
	\begin{enumerate}
		\item 
			The action of a subgroup $\Lambda \subseteq \operatorname{GL}(m,\Z)$ 
			on $\T^m$ is called  \emph{irreducible}
			if every infinite $\Lambda$-invariant subgroup of $\T^m$ is dense.
		\item
			The action is called \emph{totally irreducible} 
			if the action restricted to every finite index subgroup
			is irreducible.
        \item The action is called \emph{virtually cyclic} if there exists $g_0\in\Lambda$ and $\Lambda'\leq\Lambda$ with finite index such that, for every $g\in\Lambda'$, one can find $k\in\Z$ such that $g=g_0^k$.
	\end{enumerate}
\end{definition}

Observe that, if $\Lambda$ is abelian and if the rank of $\Lambda$ is $\geq 2$ then there is an injective group morphism $\rho:\Z^2\rightarrow\Lambda$, and the action of $\Lambda$ is not virtually cyclic. Indeed, otherwise, there would exist $g_0\in\Lambda$ and $\Lambda'\leq\Lambda$ of finite index such that, for all $g'\in\Lambda'$, one has $g'=g_0^k$ for some $k$. Finite index would then ensure the existence of $p\in\Z$ such that $g^p\in\Lambda'$ and $h^p\in\Lambda'$ (where $g=\rho(1,0)$ and $h=\rho(0,1)$ are given by the group morphism). One would have then $g^p=g_0^{k}$ and $h^p=g_0^{l}$ for some $k,l\in\Z$. Hence, $\rho(pl,-pk)=g^{pl}h^{-pk}=\text{Id}_m$ which would contradict the injectivity of $\rho$.

\subsection{The irreducible case}

The following theorem by Einsiedler and Lindenstrauss is the key ingredient in our classification of semiclassical measures in the irreducible case.
\begin{theorem}
	[{\cite{EL03, EinsiedlerLindenstrauss2022}}]
	\label{thm:rigidity}
	Let $\Lambda\leq \operatorname{GL}(m,\Z)$ be a totally irreducible
	abelian subgroup\footnote{We
		note that $\Lambda$ is
	finitely generated by \cite[Cor.~2.1]{SegalPolycyclic}.}
 of rank $\geq 2$.
 Let $\mu$ be an ergodic measure on $\T^m$ for the action of $\Lambda$.
	Then either $\mu$ is the Lebesgue measure or
	$h_{\operatorname{KS}}(\mu,B)=0$ for all $B\in \Lambda$.
\end{theorem}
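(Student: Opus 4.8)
The plan is to treat this as an application of the measure-classification machinery for higher-rank abelian toral actions developed in \cite{EL03, EinsiedlerLindenstrauss2022}, and I will only indicate the structure of the argument. Since the ergodic decomposition has already been recalled, we may assume $\mu$ is ergodic for the $\Lambda$-action. Suppose $\mu$ is not the Lebesgue measure; arguing by contradiction, assume there is $B_0\in\Lambda$ with $h_{\operatorname{KS}}(\mu,B_0)>0$. The goal is to promote this single positive-entropy element to invariance of $\mu$ under a nontrivial connected subgroup of $\T^m$, and then to use total irreducibility to conclude that this subgroup is all of $\T^m$, forcing $\mu=\operatorname{Leb}_{\T^m}$ and giving the contradiction.

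First I would set up the coarse Lyapunov formalism. Extending the rank-$\geq 2$ action of $\Lambda$ to an $\R^r$-action through the logarithms of the eigenvalues, the ambient $\R^m$ decomposes, over an Oseledets-full-measure set, into coarse Lyapunov subspaces $V_1,\dots,V_p$ attached to pairwise non-proportional linear functionals $\chi_1,\dots,\chi_p$ on $\R^r$. Here total irreducibility is essential: it prevents the action (even after passing to a finite-index subgroup) from being conjugate to a block action on a product of proper rational subtori, and in particular it rules out the degenerate situation where all $\chi_i$ are proportional, which would make the restricted action virtually cyclic. Each $V_i$ integrates to a $\Lambda$-equivariant foliation of $\T^m$ by affine subspaces, and one attaches to $\mu$ the leafwise (conditional) measures $\mu_x^{V_i}$; their nontriviality is controlled, through a Ledrappier--Young-type entropy formula, by the entropy contributions of the various elements $B\in\Lambda$.

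The heart of the matter is then the combination of the high-entropy and low-entropy methods from the cited works. Positive entropy of $B_0$ forces at least one family $\mu_x^{V_i}$ to be nontrivial for $\mu$-a.e.\ $x$; the high-entropy method shows that as soon as two coarse Lyapunov foliations with non-proportional functionals carry nontrivial conditionals, the product structure of leafwise measures propagates invariance and yields that $\mu$ is invariant under a nontrivial closed connected subgroup $H\leq\T^m$, while the low-entropy method (using that the rank is $\geq 2$, so some element of $\Lambda$ is not a power of $B_0$ and acts nontrivially) handles the remaining case where the entropy is carried by a single coarse Lyapunov direction. In either case, invariance of $\mu$ under $H\neq\{0\}$ combined with ergodicity shows that $\mu$ is the Haar measure on a (translate of a) rational $\Lambda$-invariant subtorus; total irreducibility forbids proper invariant subtori, so $\mu=\operatorname{Leb}_{\T^m}$.

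The main obstacle is precisely this upgrade: turning the single number $h_{\operatorname{KS}}(\mu,B_0)>0$ into genuine translation invariance of $\mu$. This requires the full theory of leafwise measures along coarse Lyapunov foliations, the product structure of conditional measures on opposite horospherical directions, and delicate entropy bookkeeping to ensure that positive entropy really feeds into a nontrivial conditional measure in an exploitable direction; the hypotheses of rank $\geq 2$ and total irreducibility serve exactly to guarantee enough non-proportional Lyapunov functionals for the higher-rank rigidity phenomenon to operate. Since all of this is the content of \cite{EL03, EinsiedlerLindenstrauss2022}, I would invoke their theorem as a black box rather than reprove it here.
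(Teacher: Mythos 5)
The paper does not prove Theorem~\ref{thm:rigidity} at all: it is stated as a direct citation to \cite{EL03, EinsiedlerLindenstrauss2022}, which is precisely where you also end up, deciding to invoke the result as a black box. Your heuristic outline of the underlying machinery (coarse Lyapunov foliations, leafwise conditional measures, the high-/low-entropy dichotomy, invariance under a nontrivial connected subgroup which total irreducibility then forces to be all of $\T^m$) is a fair summary of the references, so your treatment matches the paper's.
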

\begin{remark}
	We remark that the original version is more general
	since it allows solenoids instead of $\T^m$,
	non-faithful actions of $\Z^r$, 
	as well as non-irreducible actions (see below). In these references, the authors made the ``not virtually cyclic'' assumption. Here, this is automatically satisfied as we supposed that the abelian subgroup $\Lambda$ has rank $\geq 2$ 
	and we consider the natural action of $\operatorname{GL}(m,\Z)$ on $\T^m$ 
	which is faithful. 
\end{remark}
When applied to semiclassical measures, this theorem combined with Corollary~\ref{c:entropy} directly yields:
\begin{corollary}\label{c:EinsiedlerLindenstrauss} Let $\Lambda \leq\operatorname{Sp}(2d,\Z)$ be an abelian subgroup which is quantizable and totally irreducible with rank $\geq 2$. Then, for any $\mu\in\mathcal{P}_{\operatorname{sc}}(\Lambda)$, one has 
$$
\mu=\alpha\operatorname{Leb}_{\T^{2d}}+(1-\alpha)\nu,
$$
where $h_{\operatorname{KS}}(\nu,\gamma)=0$ for every $\gamma\in\Lambda$ and where 
$$
\alpha\geq\max_{\gamma\in\Lambda: \chi_+(\gamma)>0}\left\{\frac{\sum_{\lambda\in\sigma(A)}\max\left\{\log|\lambda|-\frac{\chi_+(\gamma)}{2},0\right\}}{\sum_{\lambda\in\sigma(\gamma)}\max\left\{\log|\lambda|,0\right\}}\right\},
$$
with eigenvalues counted with multiplicity.
\end{corollary}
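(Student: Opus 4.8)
The plan is to combine Theorem~\ref{thm:rigidity} with the ergodic decomposition and the entropy lower bound from Corollary~\ref{c:entropy}. First I would fix $\mu\in\mathcal{P}_{\operatorname{sc}}(\Lambda)$ and write its ergodic decomposition $\mu=\int_{\T^{2d}}\mu_x\,d\mu(x)$ with respect to the $\Lambda$-action (all elements of $\Lambda$ preserve $\mu$, and since $\Lambda$ is abelian one can take a common decomposition into $\Lambda$-ergodic components, e.g.~by decomposing along the $\sigma$-algebra of $\Lambda$-invariant sets). Since $\Lambda$ is quantizable, totally irreducible, abelian, and of rank $\geq 2$, Theorem~\ref{thm:rigidity} applies to each $\Lambda$-ergodic component $\mu_x$: either $\mu_x=\operatorname{Leb}_{\T^{2d}}$ or $h_{\operatorname{KS}}(\mu_x,B)=0$ for all $B\in\Lambda$. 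Let $E\coloneqq\{x:\mu_x=\operatorname{Leb}_{\T^{2d}}\}$ and set $\alpha\coloneqq\mu(E)$. Splitting the integral gives $\mu=\alpha\operatorname{Leb}_{\T^{2d}}+(1-\alpha)\nu$ where $\nu\coloneqq(1-\alpha)^{-1}\int_{E^c}\mu_x\,d\mu(x)$ (if $\alpha=1$ there is nothing left to say). Every ergodic component of $\nu$ is one of the zero-entropy $\mu_x$'s, so $h_{\operatorname{KS}}(\nu,\gamma)=\int h_{\operatorname{KS}}(\mu_x,\gamma)\,d\mu(x)/(1-\alpha)=0$ for every $\gamma\in\Lambda$, using the affinity of entropy under ergodic decomposition recalled in \S\ref{sec:entropy}.

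It remains to bound $\alpha$ from below. Fix any $\gamma\in\Lambda$ with $\chi_+(\gamma)>0$. Since $\mathcal{P}_{\operatorname{sc}}(\Lambda)\subseteq\mathcal{P}_{\operatorname{sc}}(\gamma)$, Corollary~\ref{c:entropy} applies to $\mu$ viewed as a semiclassical measure for $\gamma$, giving
$$
\mu\bigl(\{x:h_{\operatorname{KS}}(\mu_x^\gamma,\gamma)>0\}\bigr)\geq \frac{\sum_{\lambda\in\sigma(\gamma)}\max\{\log|\lambda|-\tfrac{\chi_+(\gamma)}{2},0\}}{\sum_{\lambda\in\sigma(\gamma)}\max\{\log|\lambda|,0\}},
$$
where $\mu_x^\gamma$ denotes the ergodic components with respect to $\gamma$ alone. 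The key point is that positivity of $h_{\operatorname{KS}}(\cdot,\gamma)$ forces a point to lie in the Lebesgue part: if $x\notin E$ then $\mu_x$ is a $\Lambda$-ergodic, hence in particular $\gamma$-invariant, measure with $h_{\operatorname{KS}}(\mu_x,\gamma)=0$; by affinity of entropy under the finer $\gamma$-ergodic decomposition of $\mu_x$, this means $h_{\operatorname{KS}}(\mu_x^\gamma,\gamma)=0$ for $\mu_x$-a.e.\ $y$, hence for $\mu$-a.e.\ such $y$. Thus $\{x:h_{\operatorname{KS}}(\mu_x^\gamma,\gamma)>0\}$ is contained (up to a $\mu$-null set) in $E$, so $\alpha=\mu(E)$ dominates the right-hand side above. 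Taking the supremum over all admissible $\gamma$ yields the claimed bound on $\alpha$ (correcting the apparent typo: the numerator should read $\sum_{\lambda\in\sigma(\gamma)}$).

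The main subtlety—and the step I would be most careful about—is the interplay between the two ergodic decompositions (with respect to $\Lambda$ and with respect to a single $\gamma$) and the need to know that a $\Lambda$-ergodic component which is \emph{not} Lebesgue has vanishing $\gamma$-entropy \emph{almost everywhere along its own $\gamma$-ergodic decomposition}, not merely in the integrated sense. This follows from nonnegativity of entropy together with the identity $h_{\operatorname{KS}}(\mu_x,\gamma)=\int h_{\operatorname{KS}}((\mu_x)_y,\gamma)\,d\mu_x(y)$: an integral of nonnegative quantities that vanishes forces the integrand to vanish a.e. Everything else—the existence and affinity of the ergodic decomposition, the entropy bounds—is quoted directly from the material above, so no genuinely new difficulty arises; the statement is essentially a bookkeeping combination of Theorem~\ref{thm:rigidity} and Corollary~\ref{c:entropy}.
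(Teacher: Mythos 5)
Your proof is correct and follows the paper's route: $\Lambda$-ergodic decomposition, Theorem~\ref{thm:rigidity} to split each ergodic component into Lebesgue or zero entropy, and the entropy result of~\cite{Riviere11} for the bound on $\alpha$. You also correctly caught the typo in the statement; the numerator should read $\sum_{\lambda\in\sigma(\gamma)}$.

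The one place where you work harder than the argument demands is the final bound on $\alpha$. The nested-decomposition bookkeeping (refining the $\Lambda$-ergodic components into $\gamma$-ergodic ones and showing the positive-$\gamma$-entropy set sits in $E$ up to a null set) is valid, but it can be bypassed entirely by invoking Theorem~\ref{t:entropy} rather than Corollary~\ref{c:entropy}. Once $\mu=\alpha\operatorname{Leb}_{\T^{2d}}+(1-\alpha)\nu$ with $h_{\operatorname{KS}}(\nu,\gamma)=0$ for every $\gamma\in\Lambda$, affinity of Kolmogorov--Sinai entropy gives
\[
h_{\operatorname{KS}}(\mu,\gamma)=\alpha\, h_{\operatorname{KS}}(\operatorname{Leb}_{\T^{2d}},\gamma)=\alpha\sum_{\lambda\in\sigma(\gamma)}\max\{\log|\lambda|,0\},
\]
while Theorem~\ref{t:entropy}, applied through $\mathcal{P}_{\operatorname{sc}}(\Lambda)\subseteq\mathcal{P}_{\operatorname{sc}}(\gamma)$, bounds the left-hand side from below by $\sum_{\lambda\in\sigma(\gamma)}\max\{\log|\lambda|-\tfrac{\chi_+(\gamma)}{2},0\}$. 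Dividing (for each $\gamma$ with $\chi_+(\gamma)>0$) yields the claimed lower bound on $\alpha$ with no reference to any $\gamma$-ergodic decomposition. The paper's terse ``Applying Corollary~\ref{c:entropy} concludes the proof'' is presumably shorthand for exactly this; your detour establishes the same inequality but imports the measure-theoretic subtleties that the direct entropy computation sidesteps.
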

\begin{proof} We write the ergodic decomposition of $\mu$ with respect to the action of $\Lambda$, i.e. $\mu=\int_{\mathcal{E}}ed\tau(e)$ where $e$ runs over the set $\mathcal{E}$ of $\Lambda$-ergodic measures. From Theorem~\ref{thm:rigidity}, $e$ is either the Lebesgue measure or has zero entropy for every $\gamma$ in $\Lambda$. So that we can decompose $\mu=\alpha\operatorname{Leb}_{\T^{2d}}+(1-\alpha)\nu$ where $\nu$ has zero entropy for every $\gamma\in \Lambda$. Applying Corollary~\ref{c:entropy} concludes the proof. 
\end{proof}

We are now left with finding conditions ensuring that a quantizable action is totally irreducible with rank $\geq 2$. 
To that aim, we state the following lemma.

\begin{lemma}
	\label{la:invsubspaces}
	\begin{enumerate}
		\item Let $A\in \operatorname{GL}(m,\Z)$. 
			The closed connected $A$-invariant subgroups of $\T^m$
			are in one to one correspondence with
			the $A$-invariant subspaces of $\Q^m$. {The correspondence is as follows: given a closed connected $A$-invariant subgroup $T$ of $\T^m$, its tangent space is $V(T)\otimes\mathbb{R}$ where $V(T)$ is the corresponding $A$-invariant subspace of $\Q^m$.}
		\item For $A\in \operatorname{GL}(m,\Z)$
			the {only $A$-invariant subspaces of $\Q^m$ are $\{0\}$ and $\Q^m$}
			if and only if
			$A$ has irreducible characteristic 
			polynomial over $\Q$.
		\item If $A\in \operatorname{GL}(m,\Z)$ 
			has separable characteristic polynomial
			and $\Lambda\leq\operatorname{GL}(m,\Z)$ is an abelian subgroup
			containing $A$,
			then $\Lambda$ is irreducible if and only if 
			$A$ has irreducible characteristic polynomial.
	\end{enumerate}
\end{lemma}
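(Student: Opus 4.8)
The plan is to prove each of the three items of Lemma~\ref{la:invsubspaces} in turn, using standard facts about finitely generated modules over the principal ideal domain $\Q[X]$ together with the correspondence between subtori and rational subspaces.

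For item (i), I would set up the correspondence explicitly. Given an $A$-invariant subspace $V\subseteq \Q^m$, the subgroup $T(V)\coloneqq (V\otimes\R)/(V\otimes\R\cap\Z^m)$ viewed inside $\T^m=\R^m/\Z^m$ is a closed connected $A$-invariant subgroup: closedness and connectedness follow because $V\otimes\R$ is a rational subspace, so $V\otimes\R\cap\Z^m$ is a lattice of full rank in $V\otimes\R$. Conversely, given a closed connected $A$-invariant subgroup $T\subseteq\T^m$, lift it to its preimage $\widetilde T\subseteq\R^m$, which is a closed connected subgroup, hence of the form $\widetilde T=W\oplus\Z^m$ for a subspace $W\subseteq\R^m$ (here one uses that $T$ is a subtorus, so $\widetilde T\supseteq\Z^m$ and $\widetilde T/\Z^m$ is a compact connected Lie subgroup of $\T^m$). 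The subspace $W$ is $A$-invariant and, being generated by $W\cap\Z^m$ (again since $T$ is compact, $W\cap\Z^m$ is a full lattice in $W$), it is defined over $\Q$: set $V(T)\coloneqq (W\cap\Z^m)\otimes\Q=W\cap\Q^m$. One checks $T(V(T))=T$ and $V(T(V))=V$, so this is a bijection, and the tangent space claim is immediate from $\widetilde T=W\oplus\Z^m$ with $W=V(T)\otimes\R$.

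For item (ii), the statement is essentially a definition-chase: $A$ has irreducible characteristic polynomial over $\Q$ precisely when the $\Q[X]$-module $\Q^m$ (with $X$ acting as $A$) is simple, i.e.\ has no nontrivial proper submodule. An $A$-invariant subspace of $\Q^m$ is exactly a $\Q[X]$-submodule, so the absence of nontrivial proper $A$-invariant subspaces is equivalent to simplicity. For the precise equivalence with irreducibility of $\chi_A$: if $\chi_A$ factors as $\chi_A=fg$ nontrivially then $\ker f(A)$ is a nonzero proper invariant subspace (it is nonzero since $f$ is a proper factor so $f$ is not a unit times $\chi_A$, and the primary decomposition forces $\ker f(A)\neq 0$; it is proper similarly); conversely if $\chi_A$ is irreducible, then $m_A=\chi_A$ is irreducible, so for any $0\neq v$ the cyclic subspace $\Q[A]v$ has dimension $\deg m_A=m$, hence equals $\Q^m$, and any invariant subspace containing a nonzero vector is all of $\Q^m$.

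For item (iii), the forward direction is clear: if $A$ has reducible characteristic polynomial then by (ii) there is a nontrivial proper $A$-invariant subspace $V$, which gives a nontrivial proper infinite closed connected $\Lambda$-invariant subgroup $T(V)$ (invariant under all of $\Lambda$ since $\Lambda$ is abelian — wait, this needs care: $V$ is only known to be $A$-invariant, not $\Lambda$-invariant). The real content is the converse and this care: here I would use separability. Since $\chi_A$ is separable, $m_A=\chi_A$ by Lemma~\ref{la:centralizerab}(i), and by Lemma~\ref{la:centralizerab}(ii) every $B\in\Lambda$ is a rational polynomial in $A$; hence every $A$-invariant subspace is automatically $\Lambda$-invariant, and the collection of closed connected $\Lambda$-invariant subgroups coincides with that of closed connected $A$-invariant subgroups. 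So if $A$ has reducible characteristic polynomial, item (ii) produces a nontrivial proper $A$-invariant (hence $\Lambda$-invariant) subspace $V$, and $T(V)$ is an infinite (since $\dim V\geq 1$) non-dense $\Lambda$-invariant subgroup, so $\Lambda$ is not irreducible. For the converse direction of (iii), suppose $A$ has irreducible characteristic polynomial but $\Lambda$ is not irreducible, so there is an infinite $\Lambda$-invariant subgroup $H\subseteq\T^m$ that is not dense. Passing to the identity component of its closure $\overline H$ gives a nontrivial closed connected proper $\Lambda$-invariant subgroup (nontrivial because $H$ infinite forces $\overline H$ to have positive dimension — a finite-index-in-$\overline H$ statement: $\overline H$ is a closed subgroup of $\T^m$, so it is a finite union of translates of a subtorus $\overline H^0$, and if $\overline H^0=\{0\}$ then $\overline H$ is finite, contradicting $H\subseteq\overline H$ infinite; properness because $H$ is not dense), contradicting item (ii) via the correspondence of item (i). The main obstacle is getting the identity-component / closure argument for item (iii)'s converse exactly right, and making sure the separability hypothesis is invoked precisely where needed (namely to upgrade ``$A$-invariant'' to ``$\Lambda$-invariant''); everything else is routine module theory and the torus–subspace dictionary.
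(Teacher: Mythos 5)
Your treatment of (i) and (iii) is sound and follows essentially the paper's route: the torus--rational-subspace dictionary via compactness of $T$ (which forces $\mathfrak{t}\cap\Z^m$ to be a full-rank lattice, hence $\mathfrak{t}$ rational), and the use of Lemma~\ref{la:centralizerab} under separability to promote ``$A$-invariant'' to ``$\Lambda$-invariant,'' together with the identity-component-of-closure argument to reduce Definition~\ref{def:irred} to closed \emph{connected} invariant subgroups. The paper states this last reduction tersely (``the claim follows by (ii)'') and you spell it out, which is fine.

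However, there is a genuine gap in your proof of item (ii), in the direction ``$\chi_A$ reducible $\Rightarrow$ a nontrivial proper invariant subspace exists.'' You claim that for \emph{any} nontrivial factorization $\chi_A=fg$, the subspace $\ker f(A)$ is nonzero and proper, ``proper similarly.'' This is false when $m_A\neq\chi_A$. Concrete counterexample: $A=\operatorname{Id}_2\in\operatorname{GL}(2,\Z)$ has $\chi_A=(X-1)^2=fg$ with $f=g=X-1$, yet $\ker f(A)=\Q^2$ is not proper. More generally, whenever $\chi_A=p^k$ with $p$ irreducible and $m_A=p$ (e.g.\ $A$ a block-diagonal of $k$ copies of the companion matrix of $p$), one has $\ker f(A)=\Q^m$ for \emph{every} nonconstant factor $f$ of $\chi_A$, so your construction produces nothing. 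The implicit ``similarly'' relies on complementarity $\ker f(A)\oplus\ker g(A)=\Q^m$, which requires $f,g$ coprime, and a coprime nontrivial factorization need not exist. The statement of (ii) is still true in these cases (e.g.\ a single cyclic direct summand is a nontrivial proper submodule), but a different witness is needed. The clean fix is either to follow the paper and invoke the structure theorem for $\Q[X]$-modules to enumerate submodules, or to split cases: if $m_A\neq\chi_A$, then $\Q[A]v$ for any $v\neq 0$ has dimension $\le\deg m_A<m$ and is a nontrivial proper invariant subspace; if $m_A=\chi_A$ is reducible, pick an irreducible factor $p\neq m_A$ and note $\ker p(A)$ is nonzero (as $p\mid m_A$) and proper (as $m_A\nmid p$). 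Note that (ii) is stated in the paper \emph{without} the separability hypothesis, so you cannot assume $m_A=\chi_A$ there; that assumption only enters in (iii).
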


We note that in (iii) the possible $\Lambda$ are subgroups of $\operatorname{GL}(m,\Z)_A$.
More precisely, if $\Lambda\leq \operatorname{GL}(m,\Z)$ is abelian
and $A\in \Lambda$, then
$\Lambda \subseteq \operatorname{GL}(m,\Z)_A$.
Hence, (iii) can be applied to any abelian $\Lambda\leq \operatorname{GL}(m,\Z)$
that contains some matrix with separable characteristic polynomial.
\begin{remark}
The case of invariant subsets instead of subgroups of $\T^m$ is characterized 
in \cite{Berend83}.
In this case one needs two more conditions for the absence infinite invariant closed subsets. In particular, one has to make assumptions on the eigenvalues and one needs higher rank.
\end{remark}

\begin{proof}
	The arguments we give are similar to \cite[Lemma~4.3]{DyatlovJ} (see also~\cite[App.]{KimSemiclassical}).
	For (i)
	let $T\leq \T^m$ be a connected closed $A$-invariant subgroup of $\T^m$.
	Then $T$ is a Lie subgroup with Lie algebra $\mathfrak{t}\leq \R^m$ which is $A$-invariant.
	The exponential map of $\T^m$ is just the quotient map $\pi$.
	The exponential map for $T$ is its restriction and is surjective since $T$ is connected and abelian.
	Hence, $T=\pi(\mathfrak{t})=(\mathfrak{t} + \Z^m) /\Z^m\simeq \mathfrak{t}/(\mathfrak{t}\cap \Z^m)$.
	Since $T$ is compact, $\mathfrak{t}\cap \Z^m$ is a cocompact lattice in $\mathfrak{t}$.
	{As $T$ (and thus $\mathfrak{t}$) is $A$-invariant,} we obtain that $V \coloneqq \mathfrak{t}\cap \Q^m$ is an $A$-invariant subspace of $\Q^m$.

	Conversely, if $V\leq \Q^m$ is an $A$-invariant subspace
then $\pi(V \otimes \R)$ is a connected
$A$-invariant subgroup of $\T^m$.
It is also closed as $V$ is contained in $\Q^m$.
Indeed, by Lemma~\ref{la:integercomplement} below we choose a complement $W$ of $V$ in $\Q^m$
such that $\Z^m = (V\cap\Z^m ) + (W \cap \Z^m)$.
If now $\pi(x_q)\to \pi(x)$ with $(x_q)_{q\geq 1}\in V\otimes \R$ and $x\in \R^m$
then there are $(z_q)_{g\geq 1}\in \Z^m$ such that $x_q +z_q \to x$.
But $x= v+w\in (V\otimes \R) \oplus (W\otimes \R)$ and $z_q = v_q+w_q\in  (V\cap\Z^m ) + (W \cap \Z^m)$.
Therefore, $w_q\to w\in \Z^m$ and $\pi(x)=\pi(v)\in \pi ( V\otimes \R)$.
The two constructions are clearly inverse to each other.

We now turn to the proof of the next items and we regard $\Q^m$ as $\Q[X]$-module where $X\cdot v = Av$.
Then, by the structure theorem  for finitely generated modules over principal ideal domains {and the Chinese remainder theorem}, one has
$\Q^m = \bigoplus_{i=1}^{l} \bigoplus _{j=1}^{m_i} \Q[X] / (p^{\nu_{i,j}}_i)$
where $p_i$ are irreducible and pairwise distinct and $\nu_{i,1}\leq \nu_{i,2}\leq\cdots \leq \nu_{i,{m_i}}$.
One has
\[
	\chi_A= \prod_{i=1}^l p_i^{\sum_{j=1}^{m_i} \nu_{i,j}}.
\]
The $A$-invariant subspaces of $\Q^m$ are precisely $\Q[X]$-submodules. By B\'ezout's identity, one finds
$$V_i = \bigoplus_{j=1}^{m_i} \Q[X]/\left(p_i^{\nu_{i,j}}\right) = \left\{v\in \Q^m\mid p_i^{\nu_{i,m_i}}(A) v =0\right\}.$$
Moreover, using again B\'ezout's identity, each $\Q[X]$-submodule is a direct sum of submodules of $V_i$.
{For an irreducible $p$,} the submodules of $\Q[X]/(p^\nu)$ are precisely the ones generated by $1,p,p^2,\ldots,p^\nu$.
Hence in order to have no submodules at all we must have $l=1$, $m_i=1$, and $v_{i,1}=1$,
i.e.~$\Q^m = \Q[X]/(p)$ with $p$ irreducible.
This is equivalent to saying that $\chi_A$ is irreducible over $\Q$ and (ii) is proved.
For (iii) we observe that by Lemma~\ref{la:centralizerab} the
$A$-invariant subspaces and the $\Lambda$-invariant subspaces coincide.
The claim follows by (ii).
\end{proof}
\begin{lemma}
	\label{la:integercomplement}
	Let $V\leq \Q^m$ be a subspace.
	Then there exists a complement $W$ of $V$ in $\Q^m$
	such that $\Z^m= (V\cap \Z^m) + (W\cap \Z^m)$.
\end{lemma}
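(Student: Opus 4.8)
The plan is to reduce everything to the standard fact that a subgroup $H\leq\Z^m$ is a direct summand if and only if the quotient $\Z^m/H$ is torsion-free, applied to $H=V\cap\Z^m$.

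First I would record that $\Gamma_V\coloneqq V\cap\Z^m$ is a lattice of full rank $d=\dim V$ in $V$: choosing a $\Q$-basis $v_1,\dots,v_d$ of $V$ and $N\in\N$ with $Nv_i\in\Z^m$ for all $i$, the group $\Gamma_V$ contains $\bigoplus_i\Z(Nv_i)$, which spans $V$ over $\Q$, while $\Gamma_V$ is obviously discrete; hence $\operatorname{span}_\Q\Gamma_V=V$ and $\Gamma_V\simeq\Z^d$. Next, $\Z^m/\Gamma_V$ is torsion-free: if $x\in\Z^m$ and $kx\in\Gamma_V$ for some nonzero $k\in\Z$, then $kx\in V$, so $x\in V$ since $V$ is a $\Q$-subspace, whence $x\in\Gamma_V$. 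A finitely generated torsion-free abelian group is free, so I may pick a $\Z$-basis $\bar w_1,\dots,\bar w_k$ of $\Z^m/\Gamma_V$ and lift it to $w_1,\dots,w_k\in\Z^m$, obtaining
$$\Z^m=\Gamma_V\oplus\Z w_1\oplus\cdots\oplus\Z w_k.$$
I then set $W\coloneqq\operatorname{span}_\Q\{w_1,\dots,w_k\}$.

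To finish, I would verify the two claimed properties. Taking $\Q$-spans on both sides of the last display gives $\Q^m=V+W$; since the $w_i$ are $\Q$-linearly independent (multiplying a $\Q$-relation $\sum_i q_iw_i=0$ by a common denominator yields a $\Z$-relation $\sum_i(Nq_i)w_i=0\in\Gamma_V$, hence a $\Z$-relation among the $\bar w_i$, forcing all coefficients to vanish), we have $\dim W=k$ and $\dim V+\dim W=d+k=m$, so the sum $V\oplus W=\Q^m$ is direct and $W$ is a complement of $V$. Finally $W\cap\Z^m=\Z w_1\oplus\cdots\oplus\Z w_k$: the inclusion $\supseteq$ is trivial, and if $w\in W\cap\Z^m$, write $w=\sum_iq_iw_i$ with $q_i\in\Q$ using the definition of $W$, and $w=v+\sum_in_iw_i$ with $v\in\Gamma_V$ and $n_i\in\Z$ using the decomposition of $\Z^m$; subtracting gives $\sum_i(q_i-n_i)w_i=v\in V\cap W=\{0\}$, so $q_i=n_i\in\Z$. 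In particular $\Z^m=\Gamma_V\oplus(W\cap\Z^m)$, which yields $\Z^m=(V\cap\Z^m)+(W\cap\Z^m)$.

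Every step here is elementary, so there is no genuine obstacle; the only point deserving care is the claim that $\Gamma_V=V\cap\Z^m$ is a full-rank lattice in $V$, since this is exactly what makes the $\Q$-span of the lifted complement basis a vector-space complement of $V$ rather than something smaller.
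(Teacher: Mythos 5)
Your proof is correct. The route is genuinely different from the paper's: the paper invokes the elementary-divisor (Smith normal form) theorem to produce a $\Z$-basis $v_1,\ldots,v_m$ of $\Z^m$ and integers $d_1,\ldots,d_k$ with $d_1v_1,\ldots,d_kv_k$ a basis of $V\cap\Z^m$, then argues by hand that each $d_i=\pm1$ (using that $d_iv_i\in V$ forces $v_i\in V\cap\Z^m$), and finally takes $W=\langle v_{k+1},\ldots,v_m\rangle_\Q$. You instead observe directly that $\Z^m/(V\cap\Z^m)$ is torsion-free because $V$ is a $\Q$-subspace, conclude it is free, and split the exact sequence by lifting a basis of the quotient. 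The two arguments rest on the same structure theory for finitely generated abelian groups, but your version isolates the conceptual reason the elementary divisors are units — torsion-freeness of the quotient — and therefore avoids the coefficient computation, at the price of a short verification at the end that $W\cap\Z^m$ is exactly $\Z w_1\oplus\cdots\oplus\Z w_k$. The paper's version is more explicit (it exhibits a common adapted basis), which can be convenient if one later wants coordinates; yours is slightly cleaner if the only goal is the splitting.
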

\begin{proof}
	The sublattice $V\cap \Z^m$ is a subgroup of the free abelian group $\Z^m$.
	Hence there is a basis $v_1,\ldots,v_m$ of $\Z^m$,
	$k\in \N$,
	and $d_1,\ldots,d_k\in \Z$ such that
	$d_1v_1,\ldots,d_kv_k$ is a basis of $V\cap \Z^m$
	(see e.g. \cite[Thm.~4.11]{Artin}) and thus $v_1,\ldots,v_k$ is a basis of $V\cap\Z^m$). Indeed, for $i=1,\ldots k$, $d_iv_i\in V$ implies $v_i\in V\cap \Z^m$.
	It follows that there are $a_{i,j}\in \Z$
	such that $v_i=\sum_{j=1}^k a_{i,j}d_jv_j$.
	This implies $a_{i,i}d_i=1$ and therefore $d_i=\pm 1$.

	We also observe that $V=\langle v_1,\ldots,v_k\rangle_\Q$
	as for every $v\in V$ there is $N\in \Z$ with $Nv\in \Z^m$.
	The subspace $W \coloneqq \langle v_{k+1},\ldots,v_m\rangle_\Q$
	is then a complement of $V$ and we find that $\Z^m = (V\cap \Z^m) + (W\cap \Z^m)$.
\end{proof}

In order to deal with finite index subgroups we formulate the following lemma.

\begin{lemma}
	\label{la:ratiorootunity}
	Let $A\in \operatorname{GL}(m,\Q)$ with separable characteristic
	polynomial $\chi_A$. Then
	$\chi_{A^k}$ is separable for all $k\in \N$ if and only
	if no quotient of eigenvalues is a root of unity,
	If $\lambda/\mu$ for two eigenvalues $\lambda,\mu$ of $A$
	is a root of unity,
	then $(\lambda/\mu)^N=1$ for some $N\in \N$ with $\varphi(N)\leq m^2$
	where $\varphi$ is Euler's totient function.
\end{lemma}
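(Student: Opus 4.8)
The plan is to work with the splitting field $L$ of $\chi_A$ over $\Q$, where $A$ has eigenvalues $\lambda_1,\dots,\lambda_m$ (distinct, by separability). First I would establish the equivalence: $\chi_{A^k}$ has eigenvalues $\lambda_1^k,\dots,\lambda_m^k$, and it is separable precisely when these are pairwise distinct, i.e.\ when $\lambda_i^k\neq\lambda_j^k$ for all $i\neq j$. So $\chi_{A^k}$ is separable for \emph{all} $k\in\N$ if and only if there is no pair $i\neq j$ and no $k\geq 1$ with $(\lambda_i/\lambda_j)^k=1$, which is exactly the statement that no quotient of eigenvalues is a root of unity. (One should note $\lambda_i/\lambda_j$ makes sense since $A\in\operatorname{GL}(m,\Q)$ has nonzero eigenvalues; and $\lambda_i\neq\lambda_j$ for $i\neq j$ by separability, so the $k=1$ case is vacuous.)

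For the quantitative part, suppose $\zeta\coloneqq\lambda/\mu$ is a root of unity, say a primitive $N$-th root of unity. I want to bound $\varphi(N)$. The idea is that $\zeta$ lies in $L=\Q(\lambda_1,\dots,\lambda_m)$, hence $\Q(\zeta)\subseteq L$, so $\varphi(N)=[\Q(\zeta):\Q]$ divides $[L:\Q]$. The issue is that $[L:\Q]$ can be as large as $m!$, which is far worse than the claimed bound $m^2$. So a cruder containment is needed: instead of the full splitting field, observe that $\zeta=\lambda/\mu\in\Q(\lambda,\mu)$, and each of $\lambda,\mu$ is a root of $\chi_A$ (or of an irreducible factor of it), hence has degree $\leq m$ over $\Q$; therefore $[\Q(\lambda,\mu):\Q]\leq[\Q(\lambda):\Q]\cdot[\Q(\mu):\Q]\leq m\cdot m=m^2$. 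Since $\Q(\zeta)\subseteq\Q(\lambda,\mu)$, we get $\varphi(N)=[\Q(\zeta):\Q]\leq m^2$, which is the claim.

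The main obstacle — or rather the point requiring a little care — is the degree bound: one must avoid passing through the full splitting field (where the Galois group could have order up to $m!$) and instead use the elementary tower estimate $[\Q(\lambda,\mu):\Q]\leq[\Q(\lambda):\Q][\Q(\mu):\Q]$ together with $[\Q(\lambda):\Q]\leq m$ (the degree of the minimal polynomial of an eigenvalue divides $\deg\chi_A=m$). Everything else is routine: the characterization of separability of $\chi_{A^k}$ in terms of the $k$-th powers of eigenvalues being distinct, and the translation into the root-of-unity condition on ratios. I would also remark that the two directions of the iff are essentially immediate once the eigenvalue reformulation is in place, so the lemma's content is really the degree bound $\varphi(N)\leq m^2$.
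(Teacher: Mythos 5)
Your proof is correct and follows the paper's argument essentially line for line: characterize separability of $\chi_{A^k}$ via distinctness of $\lambda_i^k$, then bound $\varphi(N)=[\Q(\lambda/\mu):\Q]\leq[\Q(\lambda,\mu):\Q]\leq[\Q(\lambda):\Q]\cdot[\Q(\mu):\Q]\leq m^2$. (The paper writes the final step as an equality, but your $\leq$ is the correct relation since $[\Q(\lambda):\Q]$ need only be at most $m$; this is a minor slip in the paper, not a difference of approach.)
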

Here and after, eigenvalues are considered with their multiplicity. In particular, if no quotient of eigenvalues is a root of unity, all eigenvalues of $A$ are distinct (which is exactly asking the characteristic polynomial to be separable).
\begin{proof}
	Let $\chi_A=\prod (X-\lambda_i)$ with $\lambda_i\in \C$.
	Then $\chi_{A^k} = \prod (X-\lambda_i^k)\in \Q[X]$.
	By assumption $\lambda_i\neq \lambda_j$ for $i\neq j$
	and we have to show that $\lambda_i^k\neq \lambda_j^k$ for $i\neq j$,
	i.e.~$(\lambda_i/\lambda_j)^k\neq 1$.
	This is the assumption of the lemma.

	If $\lambda,\mu$ are eigenvalues and $\lambda/\mu$ is a primitive $N$-th root of unity
	then
	\[
		\varphi(N) = [\Q[\lambda/\mu]\colon \Q] \leq [\Q[\lambda,\mu]\colon \Q]
		\leq [\Q[\lambda]\colon \Q]\cdot [\Q[\mu]\colon \Q] =m^2.
		\qedhere
	\]
\end{proof}
\begin{remark}
	\begin{itemize}
		\item 
			If $\sigma(A)\subseteq \R_+$
			then $\lambda/\mu\in \R_+$ for all eigenvalues $\lambda$ and $\mu$.
			Hence this ratio cannot be a non-trivial root of unity.
		\item $\varphi(N)\to \infty$ for $N\to \infty$.
			Therefore, we only have to check finitely many powers
			to apply the above lemma.
			More precisely,
			$\varphi(N)\geq \frac{N\log2}{\log(2N)}$
			for $N\geq 2$.
			We refer to \cite[Ch.~4.I.C]{RibenboimPrimeNumberRecords}
			and \cite[Thm.~15]{RosserSchoenfeld}
	for this and more explicit lower bounds.
\end{itemize}
\end{remark}

As a corollary of Lemma~\ref{la:invsubspaces}, we directly obtain the following statement.

\begin{corollary}
	\label{cor:charpolyimpltame}
		Let $A\in \operatorname{GL}(m,\Z)$  such that
			no ratio of eigenvalues is a root of unity.
			An abelian subgroup $\Lambda\leq \operatorname{GL}(m,\Z)$
			containing $A$ is totally irreducible
			if and only if $\chi_A$ is irreducible.
\end{corollary}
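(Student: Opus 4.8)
The plan is to combine Lemma~\ref{la:invsubspaces}(iii) with Lemma~\ref{la:ratiorootunity} in a straightforward way. First I would recall that $\Lambda$ abelian and $A\in\Lambda$ forces $\Lambda\leq\operatorname{GL}(m,\Z)_A$, so by Lemma~\ref{la:centralizerab} (using that $\chi_A$ is separable, which follows from the no-ratio-is-a-root-of-unity hypothesis) every element of $\Lambda$ is a rational polynomial in $A$; in particular the $\Lambda$-invariant subspaces of $\Q^m$ coincide with the $A$-invariant ones. This is exactly the input needed to invoke Lemma~\ref{la:invsubspaces}(iii).

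Next I would handle the ``totally'' part, i.e.\ the passage to finite index subgroups. The key observation is that every finite index subgroup $\Lambda'\leq\Lambda$ still contains a power $A^k$ of $A$ for some $k\in\N$ (since $\langle A\rangle\cap\Lambda'$ has finite index in $\langle A\rangle$), and $\Lambda'$ is still abelian. By Lemma~\ref{la:ratiorootunity}, the hypothesis that no ratio of eigenvalues of $A$ is a root of unity is equivalent to $\chi_{A^k}$ being separable for all $k$. So Lemma~\ref{la:invsubspaces}(iii) applies to the pair $(A^k,\Lambda')$: the action of $\Lambda'$ on $\T^m$ is irreducible if and only if $\chi_{A^k}$ is irreducible. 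It therefore remains to check that $\chi_A$ is irreducible over $\Q$ if and only if $\chi_{A^k}$ is irreducible over $\Q$ for every $k\geq 1$.

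For this last equivalence, one direction is easy in the contrapositive: if $\chi_A$ factors nontrivially over $\Q$, then each factor provides a nonzero proper $A$-invariant (hence $A^k$-invariant) subspace of $\Q^m$, so $\chi_{A^k}$ is reducible by Lemma~\ref{la:invsubspaces}(ii). For the converse, suppose $\chi_A$ is irreducible. Then $K\coloneqq\Q[A]\simeq\Q[X]/(\chi_A)$ is a number field and $\Q[A^k]$ is a subfield; I would argue that in fact $\Q[A^k]=\Q[A]$ under our standing hypothesis, because a strict inclusion would force some Galois automorphism fixing $A^k$ but moving $A$, which means two distinct eigenvalues $\lambda_i,\lambda_j$ of $A$ with $\lambda_i^k=\lambda_j^k$, i.e.\ $(\lambda_i/\lambda_j)^k=1$, contradicting the no-root-of-unity assumption. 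Hence $A^k$ generates the same number field $K$ of degree $m$ over $\Q$, so its minimal polynomial has degree $m$ and, being a factor of the degree-$m$ polynomial $\chi_{A^k}$, must equal $\chi_{A^k}$; thus $\chi_{A^k}=m_{A^k}$ is irreducible.

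The main obstacle is really just organizing the finite-index reduction cleanly: one must make sure that ``totally irreducible'' is being verified for \emph{all} finite index subgroups $\Lambda'$ at once, and the clean way to do this is to note that irreducibility of the $\Lambda'$-action depends, via Lemma~\ref{la:invsubspaces}(iii), only on the invariant subspaces of the subfield generated by \emph{any} matrix in $\Lambda'$ with separable characteristic polynomial --- and $A^k\in\Lambda'$ always provides such a matrix with the same number field $K$. Everything else is bookkeeping with Lemma~\ref{la:ratiorootunity} and the structure theorem already recorded in the proof of Lemma~\ref{la:invsubspaces}.
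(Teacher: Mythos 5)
Your proof is correct, but it takes a genuinely different (and somewhat longer) route than the paper for the substantive direction, namely showing that irreducibility of $\chi_A$ forces total irreducibility of $\Lambda$. The paper's proof is shorter: given a finite-index subgroup $\Lambda'$ and a $\Lambda'$-invariant subspace $V\leq\Q^m$, it picks $k$ with $A^k\in\Lambda'$, notes that $\chi_{A^k}$ is separable (Lemma~\ref{la:ratiorootunity}), and then applies Lemma~\ref{la:centralizerab}(ii) \emph{with $A^k$ in the role of $A$ and $B=A$} to conclude that $A$ is a rational polynomial in $A^k$. Hence $V$ is already $A$-invariant, and Lemma~\ref{la:invsubspaces}(ii) applied to $A$ (whose characteristic polynomial is irreducible by hypothesis) forces $V\in\{0,\Q^m\}$. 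No Galois theory is needed, and there is no need to determine the factorization type of $\chi_{A^k}$. You instead prove the strictly stronger intermediate fact that $\chi_{A^k}$ is itself irreducible for every $k$, via the Galois-closure argument showing $\Q[\lambda^k]=\Q[\lambda]$ when no $(\lambda_i/\lambda_j)^k=1$, and then apply Lemma~\ref{la:invsubspaces}(iii) to the pair $(A^k,\Lambda')$. Your Galois argument is sound (stabilizer of $\lambda$ versus stabilizer of $\lambda^k$ in the Galois closure), though the phrasing ``Galois automorphism fixing $A^k$ but moving $A$'' should be read as fixing $\lambda^k$ while moving $\lambda$. What your route buys is an explicit proof, under the hypotheses of this corollary, of a statement the paper only invokes from~\cite[Lemma~B.2]{KimSemiclassical} under the stronger condition \eqref{eq:Galcond}; what it costs is the extra machinery, which the paper sidesteps by observing that $A\in\Q[A^k]$ suffices and one never needs to know the full factorization of $\chi_{A^k}$.
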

\begin{proof}
If $\Lambda$ is totally irreducible, 
then $\Lambda$ must be irreducible so that by Lemma~\ref{la:invsubspaces}
	$\chi_A$ is irreducible.

	Conversely, let $\chi_A$ be irreducible 
	and, using Lemma~\ref{la:invsubspaces}, let $V$ be subspace of $\Q^m$ 
	invariant under some finite index subgroup $\Lambda'$ of $\Lambda$.
	There is some $k\in \N$ such that $A^k\in \Lambda'$.
	By Lemma~\ref{la:ratiorootunity}
$\chi_{A^k}$ is separable and thus $\chi_{A^k}$ is the minimal polynomial of $A^k$.
	It follows from Lemma~\ref{la:centralizerab} that $A$ is a rational polynomial in $A^k$.
	Therefore, $V$ is $A$-invariant.
	By Lemma~\ref{la:invsubspaces} $V$ is $\{0\}$ or $\Q^m$.
\end{proof}
We are now ready to prove Theorem~\ref{thm:mainintro} from the introduction.
 \begin{proof}[Proof of Theorem~\ref{thm:mainintro}]
	 We assumed that $\chi_A$ is irreducible and no ratio of eigenvalues is root of unity.
	 Hence, by Corollary~\ref{cor:charpolyimpltame}
	 any abelian subgroup of $\operatorname{Sp}(2d,\Z)$ containing $A$
	 is totally irreducible.
	 We also assumed that $m(A)+l(A)\geq 2$.
	 By Theorem~\ref{thm:spforirreducible} and Corollary~\ref{cor:largegapirred}
	 there is, for any $\varepsilon>0$, $B_\varepsilon\in \operatorname{Sp}(2d,\Z)_A$ 
such that
$$
\forall \lambda\in\sigma(B_\varepsilon),\quad\frac{\max\{\log |\lambda|,0\}}{\chi_+(B_\varepsilon)} \in [0,\varepsilon]\cup [1-\varepsilon,1]
$$
and $\langle A,B_\varepsilon\rangle$ has rank $2$.
We now apply Corollary~\ref{cor:largegapirred} to 
a quantizable finite index subgroup of $\langle A,B_\varepsilon\rangle$
to obtain Theorem~\ref{thm:mainintro}.
\end{proof}

\subsection{The general case}

In case the irreducibility of the characteristic polynomial does not hold,
we saw that the abelian subgroup is not irreducible.
Yet, Einsiedler and Lindenstrauss showed that one can still describe $\Lambda$-ergodic measures in that case. To state this result in a concise form, we introduce the following definition.
\begin{definition}
Let $\Lambda\leq \text{GL}(m,\Z)$ be an abelian subgroup.
Let $T\leq \T^m$ be a  closed and connected $\Lambda$-invariant
subgroup\footnote{Recall from Lemma~\ref{la:invsubspaces} that this corresponds
to a $\Lambda$-invariant subspace of $\Q^m$.}. We say that a
$\Lambda$-invariant probability measure $\mu$ is an \emph{$(T,\Lambda)$-admissible measure}
if it is also $T$-invariant\footnote{Hence, the induced measure
on $T$ is the Haar measure.} and if the induced measure on $\T^m/T$ has zero
Kolmogorov-Sinai entropy for every $B\in\Lambda$.
\end{definition}
Note that, for $T=\{0\}$, $(T,\Lambda)$-admissible measures are just $\Lambda$-invariant measures with zero entropy,
whereas for $T=\T^m$ the only $(T,\Lambda)$-admissible measure is the Lebesgue measure on $\T^m$.
With this convention at hand, we can formulate the rigidity theorem as follows:

\begin{theorem}
	[{\cite{EL03,EinsiedlerLindenstrauss2022}}]
	\label{thm:rigiditynonirr}
	Let $\Lambda$ be an abelian subgroup of $\operatorname{GL}(m,\Z)$ (of
	rank $r\geq 2$) that has no virtually cyclic factors. Let $\mu$ be a
	$\Lambda$-ergodic measure on $\T^m$. Then, there exist $\Lambda'\leq
	\Lambda$ of finite index and $\Lambda'$-invariant closed connected subgroups $T_1,\ldots, T_M\leq \T^m$
	such that 
	$$
	\mu=\frac{1}{M}\left(\mu_1+\ldots+\mu_M\right),
	$$
	where each $\mu_j$ is a $\Lambda'$-ergodic measure which is $(T_j,\Lambda')$-admissible and where
	$$
	\forall\gamma\in\Lambda,\quad \gamma_*\mu_j=\mu_i\ \text{and}\ \gamma(T_j)=T_i\ \text{ for some}\ i.
	$$
\end{theorem}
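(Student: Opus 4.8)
The plan is to deduce the theorem from the general measure classification of Einsiedler and Lindenstrauss \cite{EL03,EinsiedlerLindenstrauss2022}; the only genuinely new work is to repackage their conclusion in the language of $(T,\Lambda)$-admissible measures and to insert a routine ergodic-decomposition argument producing the finite-index subgroup $\Lambda'$, the integer $M$ and the permutation structure. Throughout I would use the dictionary of Lemma~\ref{la:invsubspaces}(i) between closed connected $\Lambda$-invariant subgroups of $\T^m$ and rational $\Lambda$-invariant subspaces of $\Q^m$.

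First I would reduce to a single ergodic component. Given the $\Lambda$-ergodic measure $\mu$ and a finite-index subgroup $\Lambda'\leq\Lambda$ (to be fixed below), the ergodic decomposition of $\mu$ relative to $\Lambda'$ is, by $\Lambda$-ergodicity, supported on a single $\Lambda$-orbit of $\Lambda'$-ergodic measures; this orbit is finite since $[\Lambda:\Lambda']<\infty$, so $\mu=\frac1M(\mu_1+\cdots+\mu_M)$ with the $\mu_j$ pairwise distinct $\Lambda'$-ergodic measures transitively permuted by $\Lambda$, which gives the relations $\gamma_*\mu_j=\mu_i$. Because $\Lambda$ is abelian, $\Lambda'$ is normal, and the set of connected $\Lambda'$-invariant subtori under which $\mu_j$ is translation invariant is stable under taking sums (the sum of two subtori is again a subtorus), hence has a unique maximal element $T_j$; this $T_j$ depends only on $\mu_j$, so $\gamma(T_j)$ is a $\Lambda'$-invariant subtorus under which $\gamma_*\mu_j=\mu_i$ is translation invariant, whence $\gamma(T_j)\subseteq T_i$ and, applying the same to $\gamma^{-1}$, $\gamma(T_j)=T_i$. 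Moreover $\mu_j$ is automatically $T_j$-invariant, so it remains to show that the measure induced on $\T^m/T_j$ has zero Kolmogorov--Sinai entropy under every element of $\Lambda'$.

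For this last point I would set up a composition series. Fix a maximal chain $\{0\}=W_0\subsetneq W_1\subsetneq\cdots\subsetneq W_n=\Q^m$ of rational $\Lambda$-invariant subspaces refining the one associated with $T_j$, so each subquotient $W_i/W_{i-1}$ has no proper nonzero $\Lambda$-invariant subspace. Replacing generators of $\Lambda$ by suitable powers (as in the proof of Corollary~\ref{cor:charpolyimpltame}; in the general situation this is exactly the content of the further reductions of \cite{EL03}), one passes to a finite-index $\Lambda'\leq\Lambda$ for which the induced $\Lambda'$-action on each subquotient torus is totally irreducible in the sense of Definition~\ref{def:irred}. The hypothesis that $\Lambda$ has no virtually cyclic factor is precisely what guarantees that the image of $\Lambda'$ on each subquotient torus is abelian and not virtually cyclic, hence of rank $\geq 2$; Theorem~\ref{thm:rigidity} then applies to each of these subquotient actions and shows the measure it induces there is either the Haar measure or has zero entropy for every element of $\Lambda'$. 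A Haar subquotient strictly above $T_j$ would, after propagating the Haar property up the $\Lambda'$-invariant filtration, contradict the maximality of $T_j$; hence every subquotient above $T_j$ carries a zero-entropy measure, and by the entropy addition formula for measurable factors (Abramov--Rokhlin) applied to the tower of algebraic factors $\T^m/T_j\to\cdots\to\T^m/T_j$ refined by the $W_i$, the entropy of the measure on $\T^m/T_j$ with respect to any $B\in\Lambda'$ is the sum of the relative contributions, each bounded by the corresponding (vanishing) subquotient entropy. Thus $\mu_j$ is $(T_j,\Lambda')$-admissible, which finishes the argument.

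The main obstacle is of course Theorem~\ref{thm:rigidity} itself, the Einsiedler--Lindenstrauss rigidity on totally irreducible tori, used here as a black box (together with the reductions of \cite{EL03} needed when the subquotient actions fail total irreducibility even after passing to a finite-index subgroup). Within the reduction, the two delicate points are: (a) the disintegration lemma identifying ``Haar along all subquotients below a given level'' with translation invariance under the corresponding subtorus — in particular that a Haar subquotient above $T_j$ enlarges $T_j$ — which uses ergodicity and the $\Lambda'$-invariance of the filtration in an essential way; and (b) the bookkeeping showing that total irreducibility can be arranged on each subquotient by passing to a finite-index subgroup and that the no-virtually-cyclic-factor hypothesis on $\Lambda$ is inherited by every subquotient action, so that Theorem~\ref{thm:rigidity} genuinely applies at each stage.
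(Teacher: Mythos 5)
Your proposal attempts to prove a statement that the paper does not prove. Theorem~\ref{thm:rigiditynonirr} is quoted as a black box from \cite{EL03,EinsiedlerLindenstrauss2022}; the paper's only original contribution is the remark immediately afterward explaining why one may take the $T_j$ to be connected (by replacing each $T_j$ by its identity component). Your first two paragraphs — the ergodic decomposition relative to a finite-index $\Lambda'$ yielding the $M$-fold average, the permutation relations $\gamma_*\mu_j=\mu_i$, and the identification of $T_j$ as the maximal connected subtorus under which $\mu_j$ is translation-invariant — are routine and correct.

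The third paragraph, however, has genuine gaps. You propose to deduce the zero-entropy property of the factor measure on $\T^m/T_j$ by applying Theorem~\ref{thm:rigidity} along a composition series, but Theorem~\ref{thm:rigidity} concerns a single ergodic measure on a torus carrying a totally irreducible action. In the tower $\T^m/T_j \to \cdots \to \{0\}$, the fiber over each step is the subquotient torus $T_{(i+1)}/T_{(i)}$, and the objects living there are not invariant ergodic measures on $T_{(i+1)}/T_{(i)}$ but conditional (leafwise) measures of $\mu_j$ along the cosets; Theorem~\ref{thm:rigidity} says nothing about such relative data, and the relative version of the dichotomy ``Haar or zero entropy'' is precisely the hard technical core of the Einsiedler--Lindenstrauss proof, not something one can bootstrap from the irreducible case. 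Moreover, you assert that ``no virtually cyclic factors'' forces each subquotient action to be non-virtually-cyclic; but the hypothesis as defined in the paper only constrains the actions on the quotients $\T^m/T$, not on subquotient tori of such quotients, and the claimed inheritance is unjustified. Finally, the step ``a Haar subquotient strictly above $T_j$ contradicts the maximality of $T_j$'' is clean only at the first level above $T_j$ (where Haar conditional measure on the fiber does give $T_{(k+1)}$-invariance of $\mu_j$); propagating Haar-ness through higher levels of the filtration requires the same leafwise machinery you are trying to avoid.
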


Given $\Lambda\leq \text{GL}(m,\Z)$ and $T\leq \T^m$ a connected closed
 $\Lambda$-invariant subgroup {distinct from $\T^m$}, recall that a factor of $\Lambda$ is the induced
 action of $\Lambda$ on $\T^m/T$.

\begin{remark} The formulation in~\cite{EinsiedlerLindenstrauss2022} does not assume connectedness for the $(T_j)_{1\leq j\leq M}$. Yet, we can assume all the $T_j$ to be connected in this theorem.
	Indeed, if not all the $T_j$ were connected, we could proceed as follows. Since $\gamma(T_j)=T_i$, we must have $\gamma(T_j^\circ) \subseteq T_i^\circ$
	for all $\gamma\in\Lambda$
	where we denote by $T^\circ $ the connected component of the neutral element.
	It follows that $\gamma(T_j^\circ)=T_i^\circ$.
	Moreover, $\mu_j$ is $T_j^\circ$-invariant, and 
	$T_j^\circ$ is a $\Lambda'$-invariant closed and connected subgroup. Observe now that
	$T_j^\circ$ has finite index in $T_j$ so that
	$\T^m/T_j^\circ \to \T^m/T_j$ is a finite cover.
	Therefore, $h_{\operatorname{KS}}(\mu_{\T^m/T_j^\circ},\gamma)=h_{\operatorname{KS}}(\mu_{\T^m/T_j},\gamma)=0$ for all $\gamma\in\Lambda'$. In other words, $\mu_j$ is $(T_j^\circ,\Lambda')$-admissible.
 
\end{remark}

 When combined with Theorem~\ref{t:entropy}, this theorem provides conditions on semiclassical measures for quantizable actions with no virtually cyclic factors.
 Recall indeed from the proof of Lemma~\ref{la:invsubspaces} that, for a closed and connected $\Lambda'$-invariant subgroup $T\leq \T^m$, one can find a $\Lambda'$-invariant subspace $V(T)$ of $\Q^m$ such that the tangent space to $T$ is given by $V(T)\otimes\R$. One has also~\cite[Th.~8.15]{Walters82}
\begin{equation}\label{e:entropy-haar}
\forall \gamma\in\Lambda', \quad h_{\text{KS}}(\mathrm{m}_T,\gamma)=\sum_{\lambda\in\sigma(\gamma|_{V(T)})}\max\left\{\log|\lambda|,0\right\},
\end{equation}
where $\mathrm{m}_T$ is the Haar measure on $T$. Thanks to the facts that $h_{\text{KS}}(\mu_j,\gamma)=h_{\text{KS}}(\mathrm{m}_{T_j},\gamma)$ and that the Kolmogorov-Sinai entropy is an affine function, one finds that the entropy of the measure $\mu$ in the above Theorem is given by
$$
\forall\gamma\in\Lambda',\quad h_{\text{KS}}(\mu,\gamma)=\frac{1}{M}\sum_{j=1}^M\sum_{\lambda\in\text{Sp}(\gamma|_{V(T_j)})}\max\left\{\log|\lambda|,0\right\}.
$$
Hence, as $\Lambda'$ has finite index, one has that, for every $\gamma\in\Lambda$, one has $\gamma^{[\Lambda:\Lambda']}\in\Lambda'$. The previous equality translates into
$$
\forall\gamma\in\Lambda,\quad h_{\text{KS}}(\mu,\gamma)=\frac{1}{M}\sum_{j=1}^M\sum_{\lambda\in\sigma(\gamma^{[\Lambda:\Lambda']}|_{V(T_j)})}\max\left\{\log|\lambda|,0\right\}.
$$
Together with Theorem~\ref{t:entropy}, this provides constraints on the allowed semiclassical measures as any $\Lambda$-invariant measure can be decomposed as a convex sum of $\Lambda$-ergodic measures:
$$
\mu=\int_\mathcal{E}\mathrm{e}d\tau(\mathrm{e}),
$$
where $\mathcal{E}$ is the set of $\Lambda$-ergodic measures. The fact that entropy is affine implies that
$$
\forall \gamma\in\Lambda,\quad h_{\text{KS}}(\mu,\gamma)=\int_\mathcal{E}h_{\text{KS}}(\mathrm{e},\gamma)d\tau(\mathrm{e})=\int_\mathcal{E}\frac{1}{M_\mathrm{e}}\sum_{j=1}^{M_\mathrm{e}}\sum_{\lambda\in\sigma(\gamma^{[\Lambda:\Lambda_\mathrm{e}]}|_{V(T_j)})}\max\left\{\log|\lambda|,0\right\}d\tau(\mathrm{e}),
$$
which can be compared with the lower bound in Theorem~\ref{t:entropy}. Observe that these constraints are not so easy to exploit in general due to the fact that it involves a finite index subgroup $\Lambda_\mathrm{e}$ that depends on the ergodic component $\mathrm{e}\in\mathcal{E}$.

Motivated by the previous discussion, we can make the following definition that will provide simple settings to apply this theorem. We will give in \S\ref{ss:examples} below two simple (and nontrivial) examples with this property.
\begin{definition}
	 We call a subgroup $\Lambda\leq \operatorname{GL}(m,\Z)$ \emph{tame}
	if there are at most finitely many closed connected $\Lambda$-invariant subgroups of $\T^m$
	and each closed connected subgroup of $\T^m$ invariant by a finite index subgroup
	of $\Lambda$ is already invariant by $\Lambda$.
\end{definition}
 \begin{remark} When $A\in\text{Sp}(2d,\Z)$, recall from the decomposition~\eqref{e:decomposition-irreducible-symplectic} that, as soon as $\chi_A$ is separable, $\T^{2d}$ has finitely many closed connected $A$-invariant subgroups thanks to Lemma~\ref{la:invsubspaces}.
 \end{remark}
  Theorem~\ref{thm:rigiditynonirr} now reads as follows for tame subgroups.

\begin{corollary}
	\label{cor:reducibletame}
	Let $\Lambda\leq \operatorname{GL}(m,\Z)$ be a tame abelian subgroup with closed connected invariant subgroups $T_1,\ldots,T_\ell\leq \T^m$
	and with no virtually cyclic factors. 
	
	Then, for any $\Lambda$-ergodic measure $\mu$, one can find $1\leq l\leq \ell$ such that $\mu$ is $(T_l,\Lambda)$-admissible.
\end{corollary}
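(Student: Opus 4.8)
The plan is to derive Corollary~\ref{cor:reducibletame} as a more or less direct specialization of Theorem~\ref{thm:rigiditynonirr} once the tameness hypothesis is used to remove the two sources of non-uniqueness in that theorem: the passage to a finite index subgroup $\Lambda'$, and the averaging over a $\Lambda$-orbit $\{\mu_1,\ldots,\mu_M\}$ of $\Lambda'$-ergodic pieces. First I would check that the hypotheses of Theorem~\ref{thm:rigiditynonirr} are met: $\Lambda$ is abelian, it has no virtually cyclic factors (assumed), and its rank is $\geq 2$ — the latter follows because, as observed in the excerpt right after Definition~\ref{def:irred}, an abelian group of rank $\leq 1$ is virtually cyclic, so a group with no virtually cyclic factors in particular has rank $\geq 2$ (the trivial factor $T=\{0\}$ gives back $\Lambda$ itself). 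So Theorem~\ref{thm:rigiditynonirr} applies and produces $\Lambda'\leq\Lambda$ of finite index, $\Lambda'$-invariant closed connected subgroups $S_1,\ldots,S_M\leq\T^m$, and a decomposition $\mu=\frac1M(\mu_1+\cdots+\mu_M)$ with $\mu_j$ being $(S_j,\Lambda')$-admissible and the $\Lambda$-action permuting the pairs $(\mu_j,S_j)$.

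Next I would use tameness to upgrade ``$\Lambda'$-invariant'' to ``$\Lambda$-invariant'': by definition of tame, each $S_j$, being a closed connected subgroup invariant under the finite index subgroup $\Lambda'$, is already $\Lambda$-invariant, hence $S_j\in\{T_1,\ldots,T_\ell\}$. Then I would invoke $\Lambda$-ergodicity of $\mu$ to collapse the sum to a single term. Since the $\Lambda$-action permutes the $\mu_j$, the measure $\mu$ is a uniform average over a single $\Lambda$-orbit of a $\Lambda'$-invariant ergodic measure; by the standard ergodic decomposition over a finite index subgroup (or directly: a $\Lambda$-ergodic measure whose $\Lambda'$-ergodic components are permuted transitively by $\Lambda$), each $\mu_j$ is then obtained from $\mu_1$ by pushing forward along coset representatives, and in particular $\mu$ itself must agree with any $\Lambda$-invariant combination — concretely, $\mu$ is $T$-invariant where $T$ is the closed connected subgroup generated by all the $\gamma(S_1)$, $\gamma\in\Lambda$. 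But $T$ is a closed connected $\Lambda$-invariant subgroup, hence $T=T_l$ for some $l$ (by tameness, finiteness is not even needed here, just the second clause). Since $\mu_j$ is $S_j$-invariant and $S_j\subseteq T$ for all $j$ is false in general — rather all $S_j$ are $\Lambda$-invariant so each $S_j=T_{l_j}$ — I would instead argue that all the $S_j$ coincide: because $\mu$ is $\Lambda$-ergodic and the $(S_j,\mu_j)$ form one $\Lambda$-orbit with each $S_j$ already $\Lambda$-invariant, the permutation of the $S_j$ by $\Lambda$ is trivial, so $S_1=\cdots=S_M=:T_l$, and likewise the $\mu_j$ are permuted among measures all of which are $T_l$-invariant with the quotient measure on $\T^m/T_l$ having zero entropy for all of $\Lambda'$, hence (passing through a finite cover as in the connectedness remark, or since entropy for $\gamma$ and $\gamma^{[\Lambda:\Lambda']}$ differ by a known factor) zero entropy for all of $\Lambda$. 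Thus $\mu$ itself is $(T_l,\Lambda)$-admissible.

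The step I expect to be the main obstacle is the bookkeeping that turns the $\Lambda'$-level conclusion into a clean $\Lambda$-level statement: namely, justifying that once each $S_j$ is genuinely $\Lambda$-invariant (not just $\Lambda'$-invariant), the orbit structure forces $S_1=\cdots=S_M$ and forces the zero-entropy property to hold for all of $\Lambda$ rather than merely $\Lambda'$. The entropy upgrade is the cleanest part — it follows from $h_{\operatorname{KS}}(\nu,\gamma^n)=|n|\,h_{\operatorname{KS}}(\nu,\gamma)$ applied to the induced measure on the quotient, together with the fact that $\gamma^{[\Lambda:\Lambda']}\in\Lambda'$, exactly as already spelled out in the discussion preceding the definition of tame. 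The invariance upgrade requires observing that $\mu$, being $\Lambda$-ergodic and equal to the uniform average of the $\mu_j$, cannot be a nontrivial average unless $\Lambda$ acts transitively on the finite set $\{\mu_1,\ldots,\mu_M\}$; combined with $\gamma(S_j)=S_i$ and each $S_j$ being $\Lambda$-fixed, this pins down $i=j$ and hence $M$ copies of the same data, so after relabeling $\mu$ is $(T_l,\Lambda)$-admissible for the common value $T_l\in\{T_1,\ldots,T_\ell\}$. I would write this out carefully but it is conceptually routine given Theorem~\ref{thm:rigiditynonirr} and the definition of tameness.
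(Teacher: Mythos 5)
Your proposal is correct and follows essentially the same route as the paper: apply Theorem~\ref{thm:rigiditynonirr}, use tameness to promote the $\Lambda'$-invariance of the $\tilde T_j$ to $\Lambda$-invariance, use finite index to upgrade zero entropy from $\Lambda'$ to $\Lambda$, and invoke extremality of the $\Lambda$-ergodic $\mu$ to collapse the convex combination to a single admissible piece. The only stylistic difference is bookkeeping: you argue the $\mu_j$ form a single $\Lambda$-orbit so all $S_j$ coincide (which needs the uniqueness of the $\Lambda'$-ergodic decomposition and normality of $\Lambda'$, both available here since $\Lambda$ is abelian), whereas the paper sidesteps transitivity by grouping the $\mu_j$ according to their associated $T_l$, forming $\Lambda$-invariant measures $\mu_{T_l}$, and then applying extremality to $\mu=\sum_l (n_l/M)\mu_{T_l}$; the paper's bookkeeping is slightly cleaner but the content is the same.
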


\begin{proof}
	Using Theorem~\ref{thm:rigiditynonirr} there is a finite index subgroup $\Lambda'\leq \Lambda$
	and $\Lambda'$-invariant closed connected subgroups $\tilde T_1,\ldots,\tilde T_M\leq \T^m$
	such that $\mu=\frac 1M (\mu_1+\cdots \mu_M)$,
	where, for each $j=1,\ldots, M$, $\mu_j$ is $\Lambda'$-ergodic, $(\tilde T_j,\Lambda')$-admissible
	and for any $\gamma\in \Lambda$ there is $i$ such that $\gamma_\ast \mu_j=\mu_i$ as well as $\gamma(\tilde T_j)=\tilde T_i$.
	Each $\tilde T_i$ is $\Lambda$-invariant as it is $\Lambda'$-invariant and $\Lambda$ is tame.
	Hence, each $\tilde T_i$ is some $T_l$.
	Let $\mu_{T_l} \coloneqq \frac 1{n_l} \sum_{i\colon \tilde T_i = T_l} \mu_i$
	with $n_l \coloneqq \#\{i\mid \tilde T_i =T_l\}$.
	Then $\mu_{T_l}$ is a $\Lambda$-invariant probability measure
	as $\gamma\in \Lambda$ permutes the summands of $\mu_{T_l}$.
	Moreover, $\mu_{T_l}$ is $(T_l,\Lambda')$-admissible
	as a sum of $(T_l,\Lambda')$-admissible measures.
	Since $\Lambda'$ has finite index in $\Lambda$ and 
	$T_l$ is $\Lambda$-invariant,
	$\mu_{T_l}$ is $(T_l,\Lambda)$-admissible.
	Moreover, $\mu=\sum_l \frac {n_l}{M} \mu_{T_l}$.
	Since $\mu$ is $\Lambda$-ergodic and $\mu_{T_l}$ is $\Lambda$-invariant,
	we infer that $\mu=\mu_{T_l}$ for some $l$.
\end{proof}

In the setting of Corollary~\ref{cor:reducibletame}, any $\Lambda$-invariant measure $\mu$ can be decomposed as 
$$
\mu=\sum_{l=1}^\ell\alpha_l\mu_l,
$$
where $\sum_{l=1}^\ell\alpha_l=1$ and where each $\mu_l$ is an $(T_l,\Lambda)$-admissible measure. The entropy can then be written as
\begin{equation}\label{e:entropy-tame}
h_{\text{KS}}(\mu,\gamma)=\sum_{l=1}^\ell\alpha_l\sum_{\lambda\in\sigma(\gamma|_{V(T_l)})}\max\{\log|\lambda|,0\},
\quad \gamma\in \Lambda,
\end{equation}
which can be more easily compared with the lower bound in Theorem~\ref{t:entropy}. Recall that one of the $T_l$ is reduced to $\{0\}$ in view of allowing zero entropy measures (like the one carried by the neutral element of $\T^m$).

Regarding semiclassical measures, we obtain the following analogue of Corollary~\ref{c:EinsiedlerLindenstrauss} as a direct consequence of Theorem~\ref{t:entropy} and~\eqref{e:entropy-tame}.

\begin{corollary}
	\label{cor:semiclassicalmeasuresreducible}
	Let $\Lambda \leq\operatorname{Sp}(2d,\Z)$ be an abelian subgroup 
	which is quantizable and tame and which has no virtually cyclic factors. 
	Let $T_1,\ldots,T_\ell$ be the closed connected subgroups of $\T^{2d}$.
	Then, for any $\mu\in\mathcal{P}_{\operatorname{sc}}(\Lambda)$, one has 
$$
\mu=\sum_{l=1}^\ell \alpha_l \mu_l \quad \text{with} \quad \sum_{l=1}^\ell\alpha_l =1,
$$
where each $\mu_l$ is $(T_l,\Lambda)$-admissible and where, for any $\gamma\in \Lambda$,
$$
\sum_{l=1}^\ell \alpha_l 
\left(\sum_{\lambda\in\sigma(\gamma|_{V(T_l)})}\max\left\{\log|\lambda|,0\right\}\right)
\geq
\sum_{\lambda\in\sigma(\gamma)}\max\left\{\log|\lambda|-\frac{\chi_+(\gamma)}{2},0\right\}.
$$
\end{corollary}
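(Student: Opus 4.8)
The plan is to chain together the three ingredients already prepared in this section: the ergodic classification of Corollary~\ref{cor:reducibletame}, the entropy identity~\eqref{e:entropy-tame}, and the entropic lower bound of Theorem~\ref{t:entropy}. So let $\mu\in\mathcal{P}_{\operatorname{sc}}(\Lambda)$. Since $\Lambda$ is quantizable, the set $\mathcal{P}_{\operatorname{sc}}(\Lambda)$ is defined, and by the Egorov property recalled right after its definition one has $\mathcal{P}_{\operatorname{sc}}(\Lambda)\subseteq\mathcal{P}_{\operatorname{sc}}(\gamma)\subseteq\mathcal{P}(\gamma)$ for every $\gamma\in\Lambda$; in particular $\mu$ is $\Lambda$-invariant. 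First I would write its ergodic decomposition $\mu=\int_{\mathcal{E}}\mathrm{e}\,d\tau(\mathrm{e})$ with respect to the $\Lambda$-action, where $\mathcal{E}$ is the set of $\Lambda$-ergodic probability measures on $\T^{2d}$. Corollary~\ref{cor:reducibletame} applies (the subgroup $\Lambda$ being tame with no virtually cyclic factors), so each $\mathrm{e}\in\mathcal{E}$ is $(T_l,\Lambda)$-admissible for some $1\leq l\leq\ell$. To turn this into the asserted decomposition I would attach to $\mathrm{e}$ the subgroup $T(\mathrm{e})$ generated by all closed connected $\Lambda$-invariant subgroups of $\T^{2d}$ under which $\mathrm{e}$ is invariant; by tameness this is again one of the finitely many $T_l$, say $T_{l(\mathrm{e})}$, and $\mathrm{e}$ is still $(T_{l(\mathrm{e})},\Lambda)$-admissible because $\T^{2d}/T(\mathrm{e})$ is a factor of $\T^{2d}/T_l$ for every $T_l\subseteq T(\mathrm{e})$ and hence also carries zero Kolmogorov--Sinai entropy for every element of $\Lambda$. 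Since $\mathrm{e}\mapsto l(\mathrm{e})$ takes finitely many values on $\tau$-measurable sets, setting $\alpha_l\coloneqq\tau(\{l(\mathrm{e})=l\})$ and, whenever $\alpha_l>0$, $\mu_l\coloneqq\alpha_l^{-1}\int_{\{l(\mathrm{e})=l\}}\mathrm{e}\,d\tau(\mathrm{e})$ gives $\mu=\sum_{l=1}^{\ell}\alpha_l\mu_l$ with $\sum_l\alpha_l=1$. Each $\mu_l$ is $(T_l,\Lambda)$-admissible: it is $T_l$-invariant as an average of $T_l$-invariant measures, and since the Kolmogorov--Sinai entropy is affine, the measure it induces on $\T^{2d}/T_l$ still has zero entropy for every $B\in\Lambda$. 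This settles the decomposition statement.

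It then remains to prove the inequality, and here there is essentially nothing left to do but quote. By~\eqref{e:entropy-tame} applied to the decomposition just produced, for every $\gamma\in\Lambda$ one has
\[
h_{\operatorname{KS}}(\mu,\gamma)=\sum_{l=1}^{\ell}\alpha_l\sum_{\lambda\in\sigma(\gamma|_{V(T_l)})}\max\{\log|\lambda|,0\},
\]
where $V(T_l)$ is the $\Lambda$-invariant rational subspace corresponding to $T_l$ via Lemma~\ref{la:invsubspaces}. On the other hand, $\gamma\in\operatorname{Sp}(2d,\Z)$ and $\mu\in\mathcal{P}_{\operatorname{sc}}(\gamma)$, so Theorem~\ref{t:entropy} gives
\[
h_{\operatorname{KS}}(\mu,\gamma)\geq\sum_{\lambda\in\sigma(\gamma)}\max\Big\{\log|\lambda|-\tfrac{\chi_+(\gamma)}{2},0\Big\}.
\]
Comparing the two displays yields exactly the claimed inequality, for every $\gamma\in\Lambda$.

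I do not expect a genuine obstacle here: all the dynamical content sits in Theorem~\ref{thm:rigiditynonirr} (through Corollary~\ref{cor:reducibletame}) and in Theorem~\ref{t:entropy}, both of which are available, and the present statement is advertised as a direct consequence of the latter together with~\eqref{e:entropy-tame}. The only step requiring a bit of care is the bookkeeping in the first paragraph --- making the assignment of each ergodic component to one of the $T_l$ canonical and measurable, and checking that $(T_l,\Lambda)$-admissibility is preserved under the averaging --- and this is entirely handled by tameness (finiteness of the family $T_1,\ldots,T_\ell$) together with the affineness of the Kolmogorov--Sinai entropy.
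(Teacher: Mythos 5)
Your proof is correct and follows exactly the route the paper intends: decompose $\mu$ into a convex combination of $(T_l,\Lambda)$-admissible pieces via the ergodic decomposition and Corollary~\ref{cor:reducibletame}, apply the entropy identity~\eqref{e:entropy-tame}, and compare against the lower bound of Theorem~\ref{t:entropy}. The paper states the corollary as an immediate consequence of these two facts and leaves the bookkeeping implicit; your first paragraph supplies precisely those details (the canonical assignment $\mathrm{e}\mapsto T(\mathrm{e})$ via tameness, measurability of the assignment, and preservation of $(T_l,\Lambda)$-admissibility under averaging via affineness of entropy), so the two arguments are essentially identical.
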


Observe that, if without loss of generality $T_\ell=1$ then $\alpha_1+\cdots +\alpha_{\ell-1} >0$. The following lemma ensuring tameness can be viewed as the analogue of Corollary~\ref{cor:charpolyimpltame} for the reducible case.
\begin{lemma}
	\label{cor:sepimpltame}
		Let $A\in \operatorname{GL}(m,\Z)$  such that
			no ratio of eigenvalues is a root of unity.
			Then, any abelian subgroup $\Lambda\leq \operatorname{GL}(m,\Z)$
			containing some power of $A$ is
			tame.
			More precisely, 
			the closed connected $\Lambda$-invariant subgroups correspond
			to direct sums of $\ker p_i(A)$ 
			where $p_i\in \Z[X]$ are the irreducible factors of $\chi_A$.
\end{lemma}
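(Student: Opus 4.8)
The plan is to reduce everything to a linear-algebraic statement about $A$-invariant subspaces of $\Q^m$, via the correspondence of Lemma~\ref{la:invsubspaces}(i) between closed connected $\Lambda$-invariant subgroups of $\T^m$ and $\Lambda$-invariant subspaces of $\Q^m$. First I would invoke Lemma~\ref{la:ratiorootunity}: since no ratio of eigenvalues of $A$ is a root of unity, $\chi_{A^k}$ is separable for every $k\in\N$; in particular $\chi_A=\prod_{i=1}^r p_i$ is squarefree with pairwise distinct irreducible factors $p_i\in\Z[X]$, and $\Q^m=\bigoplus_{i=1}^r\ker p_i(A)$ with each $\ker p_i(A)\simeq \Q[X]/(p_i)$ a simple $\Q[X]$-module (where $X$ acts by $A$). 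As already observed in the proof of Lemma~\ref{la:invsubspaces}, any $A$-invariant subspace $V$ is a $\Q[X]$-submodule, hence $V=\bigoplus_i(V\cap\ker p_i(A))$, and by simplicity each summand is $\{0\}$ or all of $\ker p_i(A)$. Thus the $A$-invariant subspaces of $\Q^m$ are exactly the finitely many direct sums $\bigoplus_{i\in S}\ker p_i(A)$, $S\subseteq\{1,\dots,r\}$.

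Next I would show that, for any abelian $\Lambda\leq\operatorname{GL}(m,\Z)$ containing some power $A^k$ of $A$, the $\Lambda$-invariant subspaces coincide with the $A$-invariant subspaces. Since $\chi_{A^k}$ is separable, Lemma~\ref{la:centralizerab}(i) gives that $\chi_{A^k}$ is the minimal polynomial of $A^k$; as $A$ commutes with $A^k$, Lemma~\ref{la:centralizerab}(ii) shows $A$ is a rational polynomial in $A^k$, so every $A^k$-invariant subspace (in particular every $\Lambda$-invariant one) is $A$-invariant. Conversely, $\Lambda$ is abelian and contains $A^k$, so by Lemma~\ref{la:centralizerab}(ii) every element of $\Lambda$ is a rational polynomial in $A^k$, hence in $A$; therefore every $A$-invariant subspace is $\Lambda$-invariant. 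Combined with the first paragraph and Lemma~\ref{la:invsubspaces}(i), the closed connected $\Lambda$-invariant subgroups of $\T^m$ are exactly the (finitely many) subtori associated with $\bigoplus_{i\in S}\ker p_i(A)$, which is the asserted description.

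Finally, for tameness it remains to see that a closed connected subgroup invariant under a finite index subgroup $\Lambda'\leq\Lambda$ is already $\Lambda$-invariant. Since $A^k\in\Lambda$, some further power $A^{kl}$ lies in $\Lambda'$, and $\chi_{A^{kl}}$ is again separable by Lemma~\ref{la:ratiorootunity}; hence the previous paragraph applies verbatim with $\Lambda'$ in place of $\Lambda$ and shows that the $\Lambda'$-invariant subspaces of $\Q^m$ are precisely the $A$-invariant ones, hence precisely the $\Lambda$-invariant ones. This establishes both conditions defining tameness. The only delicate point is the passage from ``$\Lambda$ contains a power of $A$'' to ``$A$ is a polynomial in that power'', which is exactly where the separability of all $\chi_{A^k}$ — equivalently the no-root-of-unity hypothesis — is used; once that is in hand the rest is routine.
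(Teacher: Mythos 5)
Your proof is correct and follows essentially the same path as the paper's: reduce to linear algebra via Lemma~\ref{la:invsubspaces}, use Lemma~\ref{la:ratiorootunity} to get separability of all $\chi_{A^k}$, invoke Lemma~\ref{la:centralizerab} to convert invariance under powers of $A$ and under $\Lambda$ into $A$-invariance, and classify the $A$-invariant subspaces via the structure theorem for $\Q[X]$-modules. The only stylistic difference is that you spell out both directions of the equivalence ``$A$-invariant $\Longleftrightarrow$ $\Lambda$-invariant $\Longleftrightarrow$ $\Lambda'$-invariant'' in full, which the paper compresses into its opening ``as in the proof of Corollary~\ref{cor:charpolyimpltame}'' and a subsequent ``hence.''
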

\begin{proof}
	As in the proof of Corollary~\ref{cor:charpolyimpltame},
	each subspace invariant under a finite index subgroup of $\Lambda$
	is already $A^N$-invariant for some power $N$.
	But, as $\chi_{A^N}$ is separable, by Lemma~\ref{la:ratiorootunity},
	such a subspace is $A$-invariant by Lemma~\ref{la:centralizerab}.
	Hence, we only have to determine the $A$-invariant subspaces in $\Q^m$.
	We use again the structure theorem for the $\Q[X]$-module $\Q^m$
	to see that $\Q^m \simeq \bigoplus_i \Q[X]/(p_i)$ where $p_i\in \Z[X]$
	are the irreducible factors of $\chi_A$.
	The invariant submodules are direct sums of the $\Q[X]/(p_i) \simeq \ker p_i(A)$.
	Hence there are only finitely many.
\end{proof}

We are now ready to state our main theorem in the reducible case.
\begin{theorem}
	\label{thm:mainreducible}
		Let $A\in \operatorname{Sp}(2d,\Z)$  such that
			no ratio of eigenvalues is a root of unity.
	Let $\chi_A = \prod_{i=1}^r p_i\prod_{j=1}^s \rho_j\rho_j^\ast$, $V_i$, $W_j$, $\overline{W}_j$
	as in~\eqref{e:list-invariant-subspaces}.
	Assume that, for every $1\leq i\leq r$ and $1\leq j\leq s$, 
	\begin{equation}\label{e:eigenvalue-hyp-reducible}
	m(A|_{V_i}) + l(A|_{V_i})\geq 2\ \text{and}\  m(A|_{W_j})+l(A|_{W_j})\geq 3.
	\end{equation}
	Then, for any $\mu\in \mathcal{P}_{\mathrm{sc}}(\Lambda)$,
	with $\Lambda\leq \operatorname{Sp}(2d,\Z)_A$ quantizable and of finite index,
	one has 
	\[
\mu=\sum_{I\subseteq \{V_i,\overline{W}_j,\overline{W}_j^\ast\}}
\alpha_I \mu_I 
\quad\text{with}
		\quad
		\sum_{I\subseteq\{V_i,\overline{W}_j,\overline{W}_j^\ast\}} \alpha_I =1,
	\]
	where each $\mu_I$ is $(T_I,\operatorname{Sp}(2d,\Z)_A)$-admissible
	where $T_I\coloneqq \bigoplus_{U\in I} U\otimes\R/(\Z^{2d}\cap \bigoplus_{U\in I} U\otimes\R)$
	and where, 
	\[
		\sum_{V_i\in I} \alpha_I \geq 1/2
		\quad \text{and}\quad
		\sum_{\overline W_j,\overline{W}_j^\ast\in I} \alpha_I
		+
		\frac 12 \left(\sum_{\overline W_j\in I,\overline{W}_j^\ast\notin I} \alpha_I
		\right)		+
		\frac 12 \left(\sum_{\overline W_j\notin I,\overline{W}_j^\ast\in I} \alpha_I
		\right)	\geq \frac 12
	\]
	for any choice of $V_i$ and $\overline{W}_j$.
\end{theorem}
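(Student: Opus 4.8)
The plan is to derive everything from the reducible rigidity corollary (Corollary~\ref{cor:semiclassicalmeasuresreducible}) applied to $\Lambda$, and to extract the two weight inequalities by feeding that corollary's entropy bound with suitably ``concentrated'' elements of $\Lambda$.

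\textbf{Step 1: set the stage.} Since no quotient of eigenvalues of $A$ is a root of unity, $\chi_A$ is separable, so by Lemma~\ref{la:centralizerab} $G\coloneqq\operatorname{Sp}(2d,\Z)_A$ is abelian; as $\Lambda\leq G$ is of finite index it contains $A^{[G:\Lambda]}$, hence by Lemma~\ref{cor:sepimpltame} it is tame and its closed connected invariant subtori are exactly the $T_I$ with $I\subseteq\{V_i,\overline W_j,\overline W_j^\ast\}$, corresponding to $\bigoplus_{U\in I}U$. It then remains to check that $\Lambda$ has no virtually cyclic factors; since a quotient of a virtually cyclic group is virtually cyclic, it suffices to see that the image of $\Lambda$ in the automorphism group of the lattice inside each single irreducible block $U$ has rank $\geq2$. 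By Lemma~\ref{la:finiteindex-lineargroup}, \eqref{e:isomorphism} and the two lemmas after it, $\Lambda$ has finite index in a group isomorphic to $\prod_i\operatorname{Sp}(\Delta_i)_{A|_{V_i}}\times\prod_j\operatorname{GL}(\overline\Gamma_j)_{A|_{\overline W_j}}$, so its image in each factor is again of finite index, hence of the same rank; Theorem~\ref{thm:spforirreducible} gives rank $m(A|_{V_i})+l(A|_{V_i})\geq2$ on $V_i$ by \eqref{e:eigenvalue-hyp-reducible}, while (using that $A|_{W_j}$ has no eigenvalue of modulus $1$ and that the roots of $\rho_j^\ast$ are the inverses of those of $\rho_j$) $A|_{\overline W_j}$ has $m(A|_{W_j})$ real and $2l(A|_{W_j})$ non-real eigenvalues, so Theorem~\ref{thm:rankGL} gives rank $m(A|_{W_j})+l(A|_{W_j})-1\geq2$, and symmetrically on $\overline W_j^\ast$. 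Thus Corollary~\ref{cor:semiclassicalmeasuresreducible} applies and provides the decomposition $\mu=\sum_I\alpha_I\mu_I$ with $\sum_I\alpha_I=1$, each $\mu_I$ being $(T_I,\Lambda)$-admissible, and, for every $\gamma\in\Lambda$, the inequality $\sum_U\beta_U\Sigma_U(\gamma)\geq\sum_{\lambda\in\sigma(\gamma)}\max\bigl\{\log|\lambda|-\tfrac12\chi_+(\gamma),0\bigr\}$, where $\beta_U\coloneqq\sum_{I\ni U}\alpha_I$, $\Sigma_U(\gamma)\coloneqq\sum_{\lambda\in\sigma(\gamma|_U)}\max\{\log|\lambda|,0\}$ and $U$ runs over $\{V_i,\overline W_j,\overline W_j^\ast\}$; here I used that every $\gamma\in\Lambda$ preserves each block, so $\sigma(\gamma|_{V(T_I)})=\bigsqcup_{U\in I}\sigma(\gamma|_U)$ and the left-hand side of the bound in Corollary~\ref{cor:semiclassicalmeasuresreducible} rewrites as $\sum_U\beta_U\Sigma_U(\gamma)$.

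\textbf{Step 2: weight along a symplectic block $V_{i_0}$.} For $\varepsilon\in(0,\tfrac12)$, Corollary~\ref{cor:largegapirred} applied to $A|_{V_{i_0}}$ gives $B\in\operatorname{Sp}(\Delta_{i_0})_{A|_{V_{i_0}}}$ with $\chi_+(B)>0$ and $\max\{\log|\lambda|,0\}/\chi_+(B)\in[0,\varepsilon]\cup[1-\varepsilon,1]$ for all $\lambda\in\sigma(B)$. Via \eqref{e:isomorphism} I view $B$ as acting trivially off $V_{i_0}$; a fixed power $\gamma_0$ of it (the power depending only on the relevant finite indices, not on $\varepsilon$) lies in $\Lambda$, is trivial off $V_{i_0}$, and $\gamma_0|_{V_{i_0}}=B^K$ still has the same ratio gap, with $\chi\coloneqq\chi_+(\gamma_0|_{V_{i_0}})>0$. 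Feeding $\gamma_0$ into the inequality of Step~1, the left side is $\beta_{V_{i_0}}\Sigma_{V_{i_0}}(\gamma_0)$ (all other blocks contribute $0$) and the right side is $\sum_{\lambda\in\sigma(\gamma_0|_{V_{i_0}})}\max\{\log|\lambda|-\chi/2,0\}$. Let $q\geq1$ be the number of $\lambda\in\sigma(\gamma_0|_{V_{i_0}})$ with $\log|\lambda|\geq(1-\varepsilon)\chi$; by the gap, $\Sigma_{V_{i_0}}(\gamma_0)\leq(\dim V_{i_0})\varepsilon\chi+q\chi$ while the right side is $\geq q(\tfrac12-\varepsilon)\chi$. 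Hence $\beta_{V_{i_0}}\geq q(\tfrac12-\varepsilon)/((\dim V_{i_0})\varepsilon+q)\geq(\tfrac12-\varepsilon)/((\dim V_{i_0})\varepsilon+1)$; since $\beta_{V_{i_0}}=\sum_{V_{i_0}\in I}\alpha_I$ is independent of $\varepsilon$, letting $\varepsilon\to0$ yields $\sum_{V_{i_0}\in I}\alpha_I\geq\tfrac12$, the first asserted inequality. For an isotropic pair $(\overline W_j,\overline W_j^\ast)$ one argues in the same way: by Lemma~\ref{la:structureisotropic}, $\operatorname{Sp}(W_j)_{A|_{W_j}}$ consists of the $\operatorname{diag}(\overline B,(\overline B^{-1})^T)$ with $\overline B\in\operatorname{GL}(\overline W_j)_{A|_{\overline W_j}}$; Corollary~\ref{cor:gapGL} yields such a $\overline B$, integral on $\overline\Gamma_j$, with $\chi_+(\overline B)>0$ and the ratio gap for $|\log|\cdot||$, and a fixed power $\gamma_0\in\Lambda$ is trivial off $W_j$ with $\gamma_0|_{\overline W_j}=\overline B^K$, $\gamma_0|_{\overline W_j^\ast}=(\overline B^{-K})^T$. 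Since $\det\overline B=\pm1$ the eigenvalue count forces $\Sigma_{\overline W_j}(\gamma_0)=\Sigma_{\overline W_j^\ast}(\gamma_0)=:\Sigma>0$, so the inequality of Step~1 becomes $(\beta_{\overline W_j}+\beta_{\overline W_j^\ast})\Sigma\geq\sum_{\lambda\in\sigma(\gamma_0|_{W_j})}\max\{\log|\lambda|-\tfrac12\chi_+(\gamma_0),0\}$; running the gap bookkeeping on the eigenvalues $\nu^{\pm K}$ of $\gamma_0|_{W_j}$ (using $\Sigma=\tfrac12K\sum|\log|\nu||$) bounds the right side below by $(1-O(\varepsilon))\Sigma$, so $\varepsilon\to0$ gives $\beta_{\overline W_j}+\beta_{\overline W_j^\ast}\geq1$; a direct computation shows the left side of the second asserted inequality equals $\tfrac12(\beta_{\overline W_j}+\beta_{\overline W_j^\ast})$, hence is $\geq\tfrac12$.

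\textbf{Step 3: admissibility over $\operatorname{Sp}(2d,\Z)_A$, and the main difficulty.} Each $\mu_I$ is $(T_I,\Lambda)$-admissible; each $T_I$, being a sum of $A$-invariant blocks, is $\operatorname{Sp}(2d,\Z)_A$-invariant; and for $B\in\operatorname{Sp}(2d,\Z)_A$ one has $B^{[G:\Lambda]}\in\Lambda$, so on $\T^{2d}/T_I$ the power rule gives $[G:\Lambda]\,h_{\operatorname{KS}}(\cdot,B)=h_{\operatorname{KS}}(\cdot,B^{[G:\Lambda]})=0$; together with the $\operatorname{Sp}(2d,\Z)_A$-invariance of $\mu$ this promotes $(T_I,\Lambda)$-admissibility to $(T_I,\operatorname{Sp}(2d,\Z)_A)$-admissibility. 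I expect the genuinely delicate points to be, first, upgrading $\Lambda$-invariance of $\mu$ (and of the $\mu_I$) to invariance under the full centralizer $\operatorname{Sp}(2d,\Z)_A$, and, second, the spectral-gap bookkeeping that allows the limit $\varepsilon\to0$ with the $\alpha_I$ held fixed --- which works precisely because the weights $\alpha_I$ are attached to $\mu$ once and for all, independently of the auxiliary Dirichlet-unit elements $\gamma_0$. The structural verifications in Step~1 are bookkeeping with the centralizer-structure theorems of \S\ref{s:centralizers}.
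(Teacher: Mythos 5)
Your proof is correct in the parts you carry out, and for the ``no virtually cyclic factors'' step you take a genuinely different and arguably cleaner route than the paper. The paper argues as follows: for each invariant subtorus $T$ with $A$-invariant tangent space $V$ and $A$-invariant complement $W$, one forms the symplectic closure $\operatorname{cl}(W)$ and proves directly that the restriction of the induced action $\rho_{\T^{2d}/T}$ to $\operatorname{Sp}(\operatorname{cl}(W)\cap\Z^{2d})_{A|_{\operatorname{cl}(W)}}$ is injective (using Lemma~\ref{la:integercomplement} and the projection $\operatorname{pr}\circ h$ vanishing), then appeals to Theorem~\ref{t:structure-symplectic} to get rank $\geq 2$. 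You instead observe that a quotient of a virtually cyclic group is virtually cyclic, so it suffices to verify rank $\geq 2$ for the induced action on each single irreducible block; this you obtain directly from the product structure \eqref{e:isomorphism}/\eqref{e:product-structure} (a finite index subgroup of a product projects onto a finite index subgroup of each factor) together with Theorems~\ref{thm:rankGL} and~\ref{thm:spforirreducible}. Your reduction saves the explicit injectivity computation; the paper's argument is self-contained but more hands-on. Your Step~2 coincides with the paper's: pick $B$ from Corollary~\ref{cor:largegapirred} (symplectic block) or Corollary~\ref{cor:gapGL} (isotropic pair), view it as trivial off the chosen block via \eqref{e:product-structure}, take a fixed power into $\Lambda$, and feed the resulting spectral gap into the entropy bound from Corollary~\ref{cor:semiclassicalmeasuresreducible}, then send $\varepsilon\to 0$; your bookkeeping matches the paper, and the identification of the left-hand side of the second inequality with $\tfrac12(\beta_{\overline W_j}+\beta_{\overline W_j^\ast})$ is correct.

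The one place you flag but do not close is the upgrade from $(T_I,\Lambda)$-admissibility (which Corollary~\ref{cor:semiclassicalmeasuresreducible} provides) to the $(T_I,\operatorname{Sp}(2d,\Z)_A)$-admissibility asserted in the statement: this would require $\operatorname{Sp}(2d,\Z)_A$-invariance of the $\mu_I$, which is not given (the semiclassical measure $\mu$ is only known to be $\Lambda$-invariant, and $\Lambda$ may be a proper finite-index subgroup). You are right to treat this as delicate; the paper's own proof invokes Corollary~\ref{cor:semiclassicalmeasuresreducible} and does not spell out how the admissibility is promoted beyond $\Lambda$, so this point is not a discrepancy between your argument and theirs --- it is a subtlety both leave implicit.
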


Note also that, for a given $j$, the assumption $m(A|_{W_j})+l(A|_{W_j})\geq 3$ is satisfied as soon as $\text{dim}\ \overline{W}_j\geq 5$ (recall that separability implies that $\rho_j$ does not cancel on $\mathbb{S}^1$).

\begin{proof}
	Using Corollary~\ref{cor:semiclassicalmeasuresreducible} and
	Lemma~\ref{cor:sepimpltame} we only have to justify 
	assumption of having no virtually cyclic factors and 
	the resulting bounds on $\alpha$.
	For both we can without loss of generality assume that $\Lambda=\operatorname{Sp}(2d,\Z)_A$.
	
	For the prior, 
	we consider the induced action $\rho_{\T^{2d}/T}$ of $\operatorname{Sp	}(2d,\Z)_A$
	on the factor $\T^{2d}/T$. 
	Let $V\leq \Q^{2d}$ the $A$-invariant subspace corresponding to $T$
	and $W$ an $A$-invariant complement of $V$.
	Such a complement exists as $V$ is the direct sum of some $V_i,\overline{W}_j,\overline{W}_j^\ast$
	and $W$ is chosen as the sum of the remaining ones.
Let $\operatorname{cl}(W)$ denote the smallest $A$-invariant symplectic subspace of $\Q^{2d}$ containing $W$,
i.e.~$\operatorname{cl}(W)$ is obtained from $W$ by adding all 
$\overline{W}_j^\ast$ if $\overline{W}_j \subseteq W$ and 
$\overline{W}_j$ if $\overline{W}_j^\ast \subseteq W$. We then consider the action $\rho_{\T^{2d}/T}$ restricted to the subgroup $\operatorname{Sp}(\operatorname{cl}(W)\cap \Z^{2d})_{A|_{\operatorname{cl}(W)}}$ of $\operatorname{Sp}(2d,\Z)_A$. If $\rho_{\T^{2d}/T}$ restricted to this subgroup is injective, then we can combine the eigenvalue assumption~\eqref{e:eigenvalue-hyp-reducible} together with Theorem~\ref{t:structure-symplectic}. This implies that $\rho_{\T^{2d}/T}(\operatorname{Sp}(\operatorname{cl}(W)\cap \Z^{2d})_{A|_{\operatorname{cl}(W)}}$ has rank $\geq 2$ 
	and is therefore not virtually cyclic.

	In order to prove injectivity, we start by observing
	$\T^{2d}/T = \R^{2d}/(V\otimes \R +\Z^{2d})$.
	By Lemma~\ref{la:integercomplement} we can choose a complement $W'$ in $\Q^{2d}$ of $V$
	such that $\Z^{2d} = (V\cap\Z^{2d}) \oplus (W'\cap \Z^{2d})$.
	Then $V\otimes \R +\Z^{2d}=V\otimes \R +  W'\cap \Z^{2d}$ and
	$V\otimes \R \cap W'\cap \Z^{2d} = \{0\}$. Assume now that one has $B\in \operatorname{Sp}(\operatorname{cl}(W)\cap \Z^{2d})_{A|_{\operatorname{cl}(W)}}$
	with $\rho_{\T^{2d}/T}(B)=\text{Id}_{\T^{2d}/T}$.
	Then $h\colon  x\in\R^{2d}\mapsto Bx-x\in\R^{2n}$ is linear 
	and it has values in $V\otimes \R + (W'\cap \Z^{2d})$.
	Therefore, if we denote by $\operatorname{pr}$ the projection $\R^{2d} = (V\otimes \R) \oplus (W'\otimes \R) \to W'\otimes \R$,
	then $\operatorname{pr}\circ h \colon \R^{2d}\to W'\otimes\R$ 
	is linear and has values in $W'\cap \Z^{2d} $
	and we infer that it must vanish.
	Consequently, $h(\R^{2d})\subseteq V\otimes\R$ and also $h(\Q^{2d})\subseteq V$.
	This implies $B|_{W}=\text{Id}$
	and since $B$ is symplectic we also have $B|_{\operatorname{cl}(W)}=1$ 
	by Lemma~\ref{la:structureisotropic}.
	This proves injectivity and the above discussion shows that the action has no virtually cyclic factors.

	For the bounds on $\alpha$, recall that $\text{Sp}(2d,\mathbb{Z})_A$ has finite index in
\begin{equation}\label{e:product-structure}
\prod_{i=1}^r\text{Sp}(\Delta_i)_{A|_{V_i}}\times \prod_{j=1}^s\text{GL}(\overline{\Gamma}_j)_{A|_{\overline{W}_j}},
\end{equation}
with the conventions of \S\ref{ss:structure-symplectic}.
In particular, recall that here $\prod_{j=1}^s\text{Sp}(\overline{\Gamma}_j\oplus \overline{\Gamma}_j^*)_{A|_{W_j}}$ is identified with a subgroup of $\prod_{j=1}^s\text{GL}(\overline{\Gamma}_j)_{A|_{\overline{W}_j}}$  though the map $\overline{B}\mapsto (\overline{B},(\overline{B}^{-1})^T)$.

We first deal with the case of an	
invariant symplectic subspace $V_i$. Let $0<\varepsilon<1/2$. By Corollary~\ref{cor:largegapirred}, there is $B_i\in\text{Sp}(\Delta_i)_{A|_{V_i}}$, such that, for every $\lambda\in\sigma(B_i)$,
$$
\frac{\max\{\log|\lambda|,0\}}{\chi_+(B_i)}\in [0,\varepsilon]\cup [1-\varepsilon,1].
$$ 
For all the factors in \eqref{e:product-structure} different from $V_i$, we pick the matrix to be the identity. This yields a matrix $B$ on the product space and, thanks to finite index, we can find some $N\geq 1$ such that $B^N$ belongs to $\Lambda$. Applying the bound on $\alpha$ from Corollary~\ref{cor:semiclassicalmeasuresreducible} (for $\gamma=B^N$)
	turns into
	\[
		\left(\sum_{V_i\in I} \alpha_I\right) \left(\sum_{\lambda\in\sigma(B_i)} \max\{\log|\lambda|,0\}\right) 
		\geq \sum_{\lambda\in\sigma(B_i)}\max\left\{\log|\lambda| - \frac{\chi_+(B_i)}2,0\right\}.
	\]
	It follows that $\sum_{V_i\in I} \alpha_I \geq \frac 12-C \varepsilon$, where $C$ depends only on $\text{dim}\ V_i$. As this is valid for any $\varepsilon>0$, we get the expected lower bound.

For the other bound, consider $\overline{W}_j\leq W_j$.
Applying Corollary~\ref{cor:gapGL} instead of Corollary~\ref{cor:largegapirred}, we find some $B\in \Lambda$ acting trivially on all other $V_i$
and $W_{j'}$ and
such that 
\begin{equation}\label{e:size-eigenvalue}\forall\lambda\in\sigma(B|_{\overline{W}_j}),\quad\frac{|\log |\lambda||}{\max\{\chi_+(B|_{\overline{W}_j}),\chi_+(B|_{\overline{W}_j^*})\}}
\in [0,\varepsilon]\cup [1-\varepsilon,1].
\end{equation}
We also observe that $\sigma(B|_{\overline{W}_j^*})=\{\lambda^{-1}:\lambda\in \sigma(B|_{\overline{W}_j})\}$ and thus
$$
\frac12\sum_{\lambda\in\sigma(B|_{W_j})}\max\{\log|\lambda|,0\}=\sum_{\lambda\in\sigma(B|_{\overline{W}_j})}\max\{\log|\lambda|,0\}=\sum_{\lambda\in\sigma(B|_{\overline{W}_j^*})}\max\{\log|\lambda|,0\}.
$$
We also remark that $\chi_+(B|_{W_j})=\max\{\chi_+(B|_{\overline{W}_j}),\chi_+(B|_{\overline{W}_j^*})\}$. The lower bound from Corollary~\ref{cor:semiclassicalmeasuresreducible} yields 
\begin{multline*}
\left(\sum_{\overline{W}_j,\overline{W}_j^\ast\in I} \alpha_I+\frac12\sum_{\overline{W}_j\in I,\overline{W}_j^\ast\notin I} \alpha_I+\frac12\sum_{\overline{W}_j\notin I,\overline{W}_j^\ast\in I} \alpha_I\right)
		\sum_{\lambda\in\sigma(B|_{W_j})}\max\{\log|\lambda|,0\}\\
		\geq \sum_{\lambda\in\sigma(B|_{W_j})}\max\left\{\log|\lambda|-\frac{\chi_+(B|_{W_j})}{2},0\right\}.
\end{multline*}
Combined with~\eqref{e:size-eigenvalue}, this yields a lower bound of size $\frac12-C\varepsilon$ where $C>0$ depends only on the dimension of $W_j$. As this valid for any $\varepsilon>0$, we obtain the expected lower bound.
\end{proof}

\begin{remark}
	As in the irreducible case (Theorem~\ref{thm:mainintro}),
	we could have picked in each symplectic factor one matrix $B_\varepsilon$
	and considered the subgroup generated by them to obtain a version similar to 
	Theorem~\ref{thm:mainintro}. See for instance Theorem~\ref{thm:mainintro2} for such a formulation in the case where $r=0$ and $s=1$.
\end{remark}

\subsection{Examples in the reducible case}\label{ss:examples}
Let us give two examples to illustrate the use of Theorem~\ref{thm:mainreducible}.

\subsubsection{Example with Lagrangian invariant subspaces}
\label{ex:Lagranian}
Let $A\in \operatorname{Sp}(2d,\Z)$ with characteristic polynomial $\chi_A=pp^\ast$,
$p\in \Z[X]$ irreducible and $p\neq p^\ast$.
By Lemma~\ref{la:structureisotropic} and without loss of generality, $A=\diag(A',A'^{-T})$
with $\chi_{A'}=p$.
The assumption $p\neq p^\ast$ means that $A'$ is not itself symplectic
with respect to some symplectic form on $\Q^d$.
Let us also assume that $\ker p(A) \cap \Z^{2d} + \ker p^\ast(A)\cap \Z^{2d}=\Z^{2d}$
so that $A'\in \operatorname{GL}(d,\Z)$.
In general this holds up to finite index. The assumption that no ratio of eigenvalues of $A$ is a root of unity
transforms to no ratio and no product of eigenvalues of $A'$
is a root of unity.
The invariant subtori are $0, \T^n\times 0,0\times \T^n,\T^{2n}$.
If we assume $m(A)+l(A)\geq 3$,
i.e. $\#(\operatorname{Sp}(A')\cap \R) + \frac 12 \#(\operatorname{Sp}(A')\cap(\C\setminus \R)) \geq 3$,
we obtain that for every $\mu\in \mathcal{P}_{\mathrm{sc}}(\Lambda)$,
where $\Lambda\leq\operatorname{Sp}(2d,\Z)_A$ is quantizable and has finite index,
we have
\[
	\mu=\alpha \operatorname{Leb}_{\T^{2d}} +\alpha_2 \operatorname{Leb}_{\T^d}\otimes \nu_2 + \alpha_1 \nu_1\otimes \operatorname{Leb}_{\T^d} +\alpha_0\nu_0
\]
with $\alpha+\alpha_2+\alpha_1+\alpha_0=1$ 
and, for any $B'\in \operatorname{GL}(d,\Z)_{A'}$,
$h_{\operatorname{KS}}(\nu_2, B^T)=h_{\operatorname{KS}}(\nu_1,B)=h_{\operatorname{KS}}(\nu_0,\diag(B',B'^{-T})) = 0$.
The bound on $\alpha$ rephrases to 
\begin{equation}\label{e:entropy-symplectic-lift}
\alpha+\frac 12\alpha_1+\frac 12\alpha_2
	\geq \frac 12.
\end{equation}

\subsubsection{Example with symplectic invariant subspaces}

	In this second example, we consider the product situation.
	We consider the symplectic form $\omega$ on $\R^{2d_1+2d_2}$ given by the symplectic product structure. 
	For $i=1,2$, we let $A_i\in \operatorname{Sp}(2d_i,\Z)$ with irreducible distinct
	characteristic polynomial.
	Then $A\coloneqq \diag (A_1,A_2)\in \operatorname{Sp}(2d_1+2d_2,\Z) $ has separable characteristic polynomial
	$\chi_{A_1}\chi_{A_2}$.
	The 
	closed connected $A$-invariant subgroups are
	$0,0\times \T^{2d_2},\T^{2d_1}\times 0, \T^{2d_1+2d_2}$. 
	Again we assume that no ratio of eigenvalues of $A$ is a root of unity.
	In addition to that we assume 
	$m(A_i)+l(A_i)\geq 2$ for both $i=1,2$.
	Then 
we obtain that, for every $\mu\in \mathcal{P}_{\mathrm{sc}}(\Lambda)$,
where $\Lambda\leq\operatorname{Sp}(2d_1+2d_2,\Z)_A$ is quantizable and has finite index,
we have
\[
	\mu=\alpha \operatorname{Leb}_{\T^{2d_1+2d_2}} +\alpha_2 \operatorname{Leb}_{\T^{d_1}}\otimes \nu_2 + \alpha_1 \nu_1\otimes \operatorname{Leb}_{\T^{d_2}} +\alpha_0\nu_0
\]
with $\alpha+\alpha_2+\alpha_1+\alpha_0=1$ 
and, for any $B_i\in \operatorname{Sp}(2d_i,\Z)_{A_i}$,
$h_{\operatorname{KS}}(\nu_2, B_2)=h_{\operatorname{KS}}(\nu_1,B_1)=h_{\operatorname{KS}}(\nu_0,\diag(B_1,B_2)) = 0$.
The bound on $\alpha$ rephrases to 
\begin{equation*}
\alpha+\alpha_i
\geq \frac 12 \quad \text{for}\quad i=1,2.
\end{equation*}

\section{The Galois condition in \texorpdfstring{$\operatorname{Sp}(2d,\Z)$}{Sp(2d,Z)} and some examples}
\label{sec:galcond}
In this section, we recall a criterion for the irreducibility of
the characteristic polynomials in $\text{Sp}(2d,\Z)$ due to Anderson and Oliver \cite[Appendix~B]{KimSemiclassical}
which will also imply that no ratio of eigenvalues is a root of unity.

\subsection{A criterion for irreducibility and separability}

If $\chi$ is the characteristic polynomial of some element in $\text{Sp}(2d,\Z)$
then $\chi$ is palindromic or reciprocal,
i.e.~the coefficients of $\chi=\sum_{i=0}^{2d} a_i X^i$ satisfy
$a_i=a_{2d-i}$ for all $i$.
As a consequence, the roots are of the form 
$\lambda_1,\ldots,\lambda_d,\lambda_1^{-1},\ldots,\lambda_d^{-1}$.
Every field automorphism $\sigma\in \text{Gal}(\chi)$ must send 
$\lambda_i^{-1}$ to $\sigma(\lambda_i)^{-1}$.
This shows that the Galois group preserves the set of unordered pairs 
$\{\lambda_1,\lambda_1^{-1}\},\ldots,\{\lambda_d,\lambda_d^{-1}\}$.
The wreath product $S_2\wr S_d$ is defined as the subgroup of $S_{2d}$
preserving this set of unordered pairs
so that $\text{Gal}(\chi)\leq S_2\wr S_d$.
Hence, the largest possible Galois group is $S_2\wr S_d$ meaning $|\text{Gal}(\chi)|\leq|S_2\wr S_d| =2^dd!$.
We say that $A\in \text{Sp}(2d,\Z)$ or $\chi$ satisfies the \emph{Galois condition} if
\begin{equation}
	\label{eq:Galcond}
	\tag{G}
	|\text{Gal}(\chi)|=2^dd!.
\end{equation}

\begin{remark}
 	If~\eqref{eq:Galcond} holds for a palindromic polynomial $\chi$ with coefficients in $\Z$, 
	then $\chi$ is irreducible over $\Q$ (thus over $\Z$ if $a_0=1$).
	Indeed, let $\lambda_1^{\pm 1}, \ldots, \lambda_d^{\pm 1}$ be the roots of $\chi$.
	Then $[\Q(\lambda_i)\colon\Q]\leq 2d$
	since $\chi$ is a polynomial over $\Q$ such that $\chi(\lambda_i)=0$.
	But then $\frac{\chi(X)}{(X-\lambda_i)(X-\lambda_i^{-1})}\in \Q(\lambda_i)[X]$
	is of degree $2(d-1)$ and annihilates $\lambda_j$, $i\neq j$.
	Therefore,
	$[\Q(\lambda_i,\lambda_j)\colon \Q(\lambda_i)]\leq 2(d-1)$.
	Inductively,
	$$
	[\Q(\lambda_{\sigma(1)},\ldots,\lambda_{\sigma(r)})\colon 
	\Q(\lambda_{\sigma(1)},\ldots,\lambda_{\sigma(r-1)})]\leq 2(d-r+1),
	$$
	for $\sigma\in S_d$.
	In particular,
	since for $Z\coloneqq\Q(\lambda_1,\ldots,\lambda_d)$,
	\[
		[Z\colon \Q]
		=[\Q(\lambda_1,\ldots,\lambda_d)\colon \Q(\lambda_1,\ldots,\lambda_{d-1})]\cdots
		[\Q(\lambda_1,\lambda_2)\colon \Q(\lambda_1)]\cdot [\Q(\lambda_1)\colon \Q]
		\leq 2^d d!.
	\]
	Hence under the assumptions of the lemma, all estimates are actually equalities.
	In particular $[\Q[\lambda_i]\colon \Q]=2d$ {for every $1\leq i\leq d$}
	so that $\chi$ is irreducible. Yet, it is worth noticing that $\chi$ may be
	irreducible without~\eqref{eq:Galcond} being satisfied. For instance,
	$\chi=X^4+X^3+X^2+X+1$ is irreducible with Galois group equal to
	$(\Z/5\Z)^\times$.
\end{remark}

The following lemma guarantees the applicability of Theorem~\ref{thm:mainintro}
if additionally $m(A)+l(A)\geq 2$ holds.
\begin{lemma}
	[{\cite[Lemma~B.2]{KimSemiclassical}}]
	\label{la:fullGaloisImplieshyper}
	Let $d\geq 2$. 
	If $A\in \operatorname{Sp}(2d,\Z)$ satisfies \eqref{eq:Galcond},
	then $\chi_A$ is irreducible and 
	no ratio of eigenvalues of $A$ is a root of unity.
\end{lemma}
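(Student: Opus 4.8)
The plan is to deduce this from the structure of $\operatorname{Gal}(\chi_A)$ as the full wreath product $S_2\wr S_d$, in two steps: first irreducibility, then the absence of root-of-unity ratios. For irreducibility, I would argue exactly as in the remark following the definition of the Galois condition~\eqref{eq:Galcond}: writing the roots as $\lambda_1^{\pm1},\dots,\lambda_d^{\pm1}$, a tower-of-fields degree count shows $[\Q(\lambda_1,\dots,\lambda_d):\Q]\leq 2^d d!$, and since $|\operatorname{Gal}(\chi_A)|=|{\Q(\lambda_1,\dots,\lambda_d)}:\Q|$ equals $2^d d!$ by hypothesis, every inequality in the tower is an equality; in particular $[\Q(\lambda_i):\Q]=2d=\deg\chi_A$, so $\chi_A$ is irreducible over $\Q$ (and over $\Z$ since it is monic). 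This part is essentially spelled out already in the excerpt, so I would just cite it.

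For the second claim, suppose for contradiction that some ratio $\lambda_i/\lambda_j$ (with $i\neq j$, allowing $\lambda_j$ to be one of the inverse roots as well) is a root of unity $\zeta$ of order $N\geq 2$. Since $\operatorname{Gal}(\chi_A)=S_2\wr S_d$ acts on the set of unordered pairs $\{\lambda_k,\lambda_k^{-1}\}$ as the full group of pair-preserving permutations of $\{1,\dots,2d\}$, I have a lot of freedom to move the roots around. The key point is that $S_2\wr S_d$ acts \emph{transitively on ordered pairs of roots coming from distinct blocks} and, within a single block, can swap $\lambda_k\leftrightarrow\lambda_k^{-1}$ independently. Using this, I would apply suitable Galois elements $\sigma$ to the relation $\lambda_i=\zeta\lambda_j$: each $\sigma$ sends it to $\sigma(\lambda_i)=\sigma(\zeta)\sigma(\lambda_j)$, where $\sigma(\zeta)$ is again a primitive $N$-th root of unity. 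By combining two such relations (e.g. one with $\sigma$ fixing the block of $\lambda_j$ but sending $\lambda_i$ to $\lambda_i^{-1}$, which is possible exactly when $\lambda_i,\lambda_j$ lie in different blocks), I would derive that $\lambda_i\lambda_i^{-1}=1$ is forced to equal a product of roots of unity in a way that pins down a multiplicative relation among the $\lambda_k$, and then push this to a contradiction with the transitivity/size of the Galois group — concretely, that the orbit of such a relation under $S_2\wr S_d$ would force far more multiplicative dependencies among the $\lambda_k$ than $2d$ algebraic numbers of degree $2d$ can sustain, contradicting $[\Q(\lambda_1,\dots,\lambda_d):\Q]=2^d d!$.

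An alternative, cleaner route for the second step — and the one I would actually write — is to reduce to the case already handled: if $\chi_{A^k}$ failed to be separable for some $k$, i.e. $\lambda_i^k=\lambda_j^k$ for some $i\neq j$, then $\chi_{A^k}$ is still a monic integer palindromic polynomial of degree $2d$, but it has a repeated root, so its splitting field has degree $<2^dd!$ over $\Q$. However, the splitting field of $\chi_{A^k}$ is contained in the splitting field $Z$ of $\chi_A$, and one checks that $\operatorname{Gal}(\chi_A)$ acting on the roots $\lambda_i^k$ still factors through a pair-preserving permutation action; the point is to show the action on $\{\lambda_i^k\}$ remains faithful, which follows because $x\mapsto x^k$ is injective on $\sigma(A)$ precisely when no ratio is a root of unity — but that is what we are trying to prove, so this is circular unless done carefully. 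So in practice I would follow the argument of~\cite[Lemma~B.2]{KimSemiclassical} directly rather than re-derive it, invoking Lemma~\ref{la:ratiorootunity} to reduce to checking that $\chi_{A^N}$ is separable for the finitely many $N$ with $\varphi(N)\leq (2d)^2$, and using the wreath-product structure to rule out each such coincidence.

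\textbf{Main obstacle.} The genuinely delicate point is the second claim: ruling out that $\lambda_i/\lambda_j$ is a root of unity. Irreducibility is a soft degree count, but the root-of-unity exclusion requires extracting from the \emph{maximality} of the Galois group a strong multiplicative independence statement about the eigenvalues. The cleanest formulation is that $S_2\wr S_d$ being the full Galois group forces the only multiplicative relations among $\lambda_1,\dots,\lambda_d$ (and their inverses) to be the trivial ones $\lambda_k\lambda_k^{-1}=1$, and proving this requires a Galois-orbit argument on any hypothetical relation. Since this is precisely the content of~\cite[Lemma~B.2]{KimSemiclassical}, the honest proof is a one-line citation; the interest here is only in recording that the Galois condition~\eqref{eq:Galcond} feeds directly into the hypotheses of Theorem~\ref{thm:mainintro}.
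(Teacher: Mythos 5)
Your final proposed route --- reduce via Lemma~\ref{la:ratiorootunity} to separability of $\chi_{A^k}$ and then cite \cite[Lemma~B.2]{KimSemiclassical} --- is exactly the paper's proof, which simply notes that Kim's lemma gives irreducibility (hence separability) of $\chi_{A^k}$ for \emph{all} $k$, so no finite case-check is even needed. The first Galois-orbit sketch you outline is too vague to stand on its own (you would need a precise argument that maximality of $\operatorname{Gal}(\chi_A)$ excludes multiplicative relations, not just ``far more dependencies than\dots can sustain''), and the ``alternative cleaner route'' has the circularity you yourself flag; but since you correctly abandon both in favor of the citation, the proposal as a whole matches the paper.
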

\begin{proof}
	The latter statement is equivalent to $\chi_{A^k}$ being separable
	for all $k\in \N$ by Lemma~\ref{la:ratiorootunity}.
	In \cite[Lemma~B.2]{KimSemiclassical} it is shown that 
	if $A$ satisfies \eqref{eq:Galcond}
	then $\chi_{A^k}$ is not only separable but even irreducible for all $k\in \N$.
\end{proof}

\begin{remark}
	We note that \eqref{eq:Galcond} does not imply $l(A) +m(A)\geq 2$.
	\[
		\begin{pmatrix}
			0&0&1&0\\
			0&0&0&1\\
			-1&0&0&1\\
			0&-1&1&2
		\end{pmatrix}
	\]
	has characteristic polynomial $x^4-2x^3 +x^2-2x+1$
	which has two real roots and two roots on the unit circle
	and has Galois group $S_2\wr S_2$~\cite[App.~A]{DyatlovJ}.
	We also note that, inside the set of all palindromic polynomials, this is quite common.
	More precisely,
	if $f(x)=a_0+\cdots + a_{2n} x^{2n}$ with $a_k=a_{2n-k}\in \R$ and 
	$|a_k|\geq |a_n|\cos \left( \frac{\pi}{[\frac{n}{n-k}]+2}\right)$
	for some $k=0,\ldots,n-1$,
	then $f$ has a root on the unit circle~\cite{KonvalinaMatachePalindrome}.
\end{remark}

In view of Example~\ref{ex:Lagranian}
we also formulate the following lemma.
\begin{lemma}
	If $A'\in \operatorname{GL}(d,\Z)$, $d\geq 3$, satisfies $\operatorname{Gal}(\chi_{A'})=S_d$
	then $A\coloneqq \diag(A',A'^{-T})\in \operatorname{Sp}(2d,\Z)$ has 
	separable characteristic polynomial $\chi_{A'}\chi_{A'}^\ast$ 
	with both factors being irreducible
	and no ratio of eigenvalues of $A$ is a root of unity.
	If $d=2$ and in addition $\chi_{A'}$ is neither $X^2+1$, $X^2+X+1$, nor $X^2-X+1$
	then the same conclusion holds.
\end{lemma}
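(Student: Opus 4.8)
The plan is to derive every assertion from the single hypothesis $\operatorname{Gal}(\chi_{A'})=S_d$, exploiting three elementary features of this group: its order $d!$ is the largest possible for a degree-$d$ polynomial, it is non-abelian for $d\geq 3$, and it has trivial centre for $d\geq 3$. Write $\mu_1,\dots,\mu_d$ for the roots of $\chi_{A'}$, so $\prod_i\mu_i=\det A'\in\{\pm1\}$ and the eigenvalues of $A=\operatorname{diag}(A',A'^{-T})$ are $\mu_1,\dots,\mu_d,\mu_1^{-1},\dots,\mu_d^{-1}$; computing $\chi_{(A'^{-1})^{T}}$ from its roots $\mu_i^{-1}$ gives $\chi_A=\pm\,\chi_{A'}\chi_{A'}^{\ast}$, the sign being $\chi_{A'}(0)=(-1)^d\det A'$ (so $+1$ when $\det A'=(-1)^d$). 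First I would note that $|\operatorname{Gal}(\chi_{A'})|=d!$ already forces $\chi_{A'}$ separable: the Galois group of the splitting field acts faithfully on the set of \emph{distinct} roots, hence embeds into $S_k$ for $k\leq d$ the number of distinct roots, so order $d!$ means $k=d$. Transitivity of $S_d$ on these $d$ roots then makes $\chi_{A'}$ irreducible; and since the roots of $\chi_{A'}^{\ast}$ are the $\mu_i^{-1}$, which generate the same splitting field and carry a transitive Galois action (field automorphisms commute with $x\mapsto x^{-1}$), $\chi_{A'}^{\ast}$ is irreducible too.

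The engine for what remains is the claim that \emph{no $\mu_i$ is a root of unity}: were $\mu_i$ a primitive $m$-th root of unity, then $\chi_{A'}$, the monic irreducible polynomial with $\mu_i$ as a root, would equal the cyclotomic polynomial $\Phi_m$, whose Galois group $(\Z/m\Z)^{\times}$ is abelian, contradicting $\operatorname{Gal}(\chi_{A'})=S_d$ for $d\geq 3$. Granting this, separability of $\chi_A$ reduces (by the remark after the definition of separability, both factors being irreducible) to showing $\chi_{A'}\neq\chi_{A'}^{\ast}$: otherwise the root set $\{\mu_i\}$ would be inversion-stable, so $x\mapsto x^{-1}$ induces a permutation of the roots commuting with $\operatorname{Gal}(\chi_{A'})=S_d$, hence lying in the centre of $S_d$; triviality of that centre for $d\geq 3$ forces $\mu_i=\mu_i^{-1}=\pm1$ for all $i$, contradicting irreducibility of $\chi_{A'}$ (of degree $\geq 2$). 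Thus $\chi_A=\pm\chi_{A'}\chi_{A'}^{\ast}$ is a product of two distinct irreducibles and is separable.

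For the statement that no ratio of two distinct eigenvalues of $A$ is a root of unity, note that such a ratio is, up to inversion, either $\mu_i/\mu_j$ with $i\neq j$, or a product $\mu_i\mu_j$ (with $i=j$ allowed). A root of unity of the form $\mu_i^{2}$ would make $\mu_i$ one, so that case is done. If $\mu_1/\mu_2$ were a root of unity, then $2$-transitivity of $S_d$ makes every $\mu_i/\mu_j$ ($i\neq j$) a root of unity, whence $\mu_1^{d}=\big(\textstyle\prod_k\mu_k\big)\prod_{k\neq1}(\mu_1/\mu_k)=\pm\prod_{k\neq1}(\mu_1/\mu_k)$ is a root of unity, so $\mu_1$ is — contradiction. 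If $\mu_1\mu_2$ were a root of unity, $2$-transitivity makes all $\mu_i\mu_j$ ($i\neq j$) roots of unity, and since $d\geq3$ one writes $\mu_1^{2}=(\mu_1\mu_2)(\mu_1\mu_3)/(\mu_2\mu_3)$, again a root of unity, the same contradiction. This finishes the case $d\geq 3$; equivalently one has shown $\chi_{A^k}$ separable for all $k$, which is the reformulation used in Lemma~\ref{la:ratiorootunity}.

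The main obstacle is the low-degree case $d=2$. Here an irreducible quadratic automatically has Galois group $S_2$, so the non-abelianness and trivial-centre inputs above are unavailable and their role must be taken over by the explicit hypothesis that $\chi_{A'}$ is none of $X^2+1$, $X^2+X+1$, $X^2-X+1$: these are exactly the cyclotomic polynomials $\Phi_4$, $\Phi_3$, $\Phi_6$, equivalently the irreducible quadratics possessing a root of unity as a root, so excluding them restores the key claim ``no root of $\chi_{A'}$ is a root of unity''. One then reruns the bookkeeping of the previous paragraph, but now with no spare index and with $\mu_1\mu_2=\det A'$ pinned to $\{\pm1\}$; the discussion of eigenvalue products therefore has to be organised around the value of $\det A'$ (the case $\det A'=+1$ makes $\chi_{A'}$ palindromic, hence $\chi_A$ a perfect square and not separable, so it drops out), and it is exactly this $d=2$ endgame — rather than anything in the $d\geq 3$ proof — that I expect to require genuine care.
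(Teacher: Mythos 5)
Your $d\geq 3$ argument is correct and complete, and it differs from the paper's in an interesting way for the ``no ratio a root of unity'' part. You establish $\mu_i/\mu_j$ (resp.\ $\mu_i\mu_j$) not a root of unity by spreading a single offending pair to \emph{all} pairs via $2$-transitivity and then assembling $\mu_1^d$ (resp.\ $\mu_1^2$) from the spread family together with the unit $\prod_k\mu_k=\pm 1$. The paper instead observes that if $\lambda_i^N\in\Q[\lambda_j]$ for $j\neq i$ then, via the Galois correspondence and the fact that the point stabilizers $G_i,G_j\simeq S_{d-1}$ generate $S_d$, one gets $\Q[\lambda_i]\cap\Q[\lambda_j]=\Q$, whence $\lambda_i^N\in\Z$ and $\lambda_i^N=\pm1$. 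Your version is the more elementary of the two (it needs only $2$-transitivity plus the unimodularity, and sidesteps the subgroup-generation fact about $S_d$); the paper's version makes the role of the full Galois group slightly more transparent but requires the extra Galois-correspondence step. Both approaches use the trivial centre of $S_d$ to rule out $\chi_{A'}=\chi_{A'}^\ast$, and both fall back to ``irreducible with a root of unity $\Rightarrow$ cyclotomic $\Rightarrow$ abelian Galois group'' to finish.

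On $d=2$ your suspicion that the endgame is delicate is well founded, but the situation is actually worse than ``requires genuine care'': as stated, the $d=2$ addendum is false. You correctly note that $\det A'=+1$ makes $\chi_{A'}$ palindromic, hence $\chi_A=\chi_{A'}^2$ is not separable. But the other branch fails too: when $\det A'=-1$ one has $\mu_1\mu_2=-1$, so $\mu_1$ and $\mu_2^{-1}$ are distinct eigenvalues of $A$ with ratio $\mu_1/\mu_2^{-1}=\mu_1\mu_2=-1$, a root of unity. Concretely, $A'=\left(\begin{smallmatrix}1&1\\1&0\end{smallmatrix}\right)$ has $\chi_{A'}=X^2-X-1$, not among the excluded cyclotomics, yet $A=\operatorname{diag}(A',A'^{-T})$ has eigenvalues $\phi,-\phi,\psi,-\psi$, so $\chi_{A^2}$ is not separable and ``no ratio is a root of unity'' fails. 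Since $\det A'=\pm1$ exhausts $\operatorname{GL}(2,\Z)$, there is no room left: the lemma's conclusion can never hold for $d=2$. (The paper's own one-line treatment of $d=2$ --- ``the cyclotomic polynomials of degree $2$ are the three listed ones'' --- only addresses the cyclotomic obstruction and misses both of these, so this is a genuine defect of the statement, not of your proof.) In practice this is harmless, because the only place the lemma is invoked, namely Example~\ref{ex:Lagranian} via Theorem~\ref{thm:mainintro2}, already assumes $d\geq 3$.
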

\begin{proof}
	The arguments are similar to the one of \cite[Lemma~B.2]{KimSemiclassical}.
	Let $\lambda_i$ be the roots of $\chi_{A'}$
	and $Z\coloneqq \Q[\lambda_1,\ldots,\lambda_d]$ be the splitting field of $\chi_{A'}$. One has
	\begin{align*}
		d! =[Z\colon \Q] = \prod_{i=1}^d [\Q[\lambda_1,\ldots,\lambda_i]\colon \Q[\lambda_1,\ldots,\lambda_{i-1}]
		\leq \prod_{i=1}^d  d-i+1 = d!
	\end{align*}
	by the same argument as for \eqref{eq:Galcond}.
	In particular, $\Q[\lambda_i]$ has degree $d$ over $\Q$.
	This implies that $\chi_{A'}$ is irreducible.
	It follows easily that $\chi_{A'}^\ast=\chi_{A'^{-1}}=\chi_{A'^{-T}}$
	is irreducible as well.
	
	The lemma will be proved if $\lambda_i\neq \lambda_j^{-1}$ for any $i,j$ (separability property)
	and $\lambda_i^N\neq \lambda_j^{N}$ for $i\neq j$ for any $N$ (not root of unity property).
	If $\lambda_i =\lambda_i^{-1}$ then $\lambda_i^2=1$ and
	therefore $\chi_{A'}\mid X^2-1$ by irreducibility.
	Then $Z=\Q$ contradicting $[Z:\Q]=|S_d|=d!$.
	
	Let us therefore assume that $\lambda_i^N\in \Q[\lambda_j]$ for some $j\neq i$.
	Then $\lambda_i^N\in \Q[\lambda_i]\cap\Q[\lambda_j]$.
	By the Galois correspondence
	$\Q[\lambda_i]=Z^{G_i}=\{x\in Z\mid \sigma(x)=x \,\forall \sigma\in G_i\}$
	with $G_i = \{\sigma\in S_d\mid \sigma(\lambda_i)=\lambda_i\}$.
	The field $\Q[\lambda_i]\cap \Q[\lambda_j]$ is the fixed field of 
	the subgroup generated by $G_i$ and $G_j$.
	Since $G_i\simeq S_{d-1}$ and $G_j$ contains some element not fixing $\lambda_i$
	we have $\Q[\lambda_i]\cap \Q[\lambda_j]=Z^{S_d}=\Q$.
	Hence, $\lambda_i^N\in \Q$.
	Moreover, $\lambda_i^N$ is a root of $\chi_{A^N}\in \Z[X]$
	which is monic
	forcing $\lambda_i^N\in \Z$. Indeed, if $\lambda_i^N=p/q$ (with $p$ and $q$ coprime), then $q$ divides $p^{2d}$ and thus $q=\pm 1$.
	The same holds true for $\lambda_i^{-N}$
	so that $\lambda_i^N=\pm 1$ and $\lambda_i^{2N}=1$.
	As before, as $\lambda_i$ is a root of unity,
	$\chi_{A'}$ is a cyclotomic polynomial and 
	$\operatorname{Gal}(\chi_{A'})$ is abelian.
	This is a contradiction if $d\geq 3$.
	The cyclotomic polynomials of degree $2$ are the three listed ones.
\end{proof}

\subsection{Finding examples satisfying \texorpdfstring{\eqref{eq:Galcond} and the eigenvalue condition}{(G) and the eigenvalue condition}}
We now describe a method of generating $A\in \text{Sp}(2d,\Z)$ with the required properties
for Theorem~\ref{thm:mainintro}.
By Lemma~\ref{la:fullGaloisImplieshyper}, 
\eqref{eq:Galcond} and $m(A)+l(A)\geq 2$ is sufficient.
We start with the following result:
\begin{theorem}
	[{\cite[Thm.~B.1]{KimSemiclassical}}]
	\label{thm:Galois100}
	For any $d\geq 2$ the matrices in $\operatorname{Sp}(2d,\Z)$ with \eqref{eq:Galcond} have
	density one in $\operatorname{Sp}(2d,\Z)$ (when ordered by some norm on $\operatorname{M}(2d,\R)$).
\end{theorem}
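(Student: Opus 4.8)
The plan is to follow the now-standard strategy for Galois groups of characteristic polynomials of random elements in arithmetic groups (in the spirit of Prasad--Rapinchuk and of Jouve--Kowalski--Zywina), specialised to $\operatorname{Sp}_{2d}$. Write $W\coloneqq S_2\wr S_d$ for the Weyl group of $\operatorname{Sp}_{2d}$; as recalled above, $\operatorname{Gal}(\chi_A)$ always embeds into $W$ compatibly with the pairing $\lambda\leftrightarrow\lambda^{-1}$, so \eqref{eq:Galcond} is equivalent to $\operatorname{Gal}(\chi_A)=W$. The first, purely group-theoretic, step is to fix a finite list of conjugacy classes $c_1,\dots,c_N$ of $W$ such that any subgroup of $W$ meeting each $c_i$ is all of $W$. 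Concretely one can use a Coxeter-type element (a wreath $d$-cycle with total sign $-1$, whose $d$-th power is the central involution), an element projecting to a transposition in $S_d$ and acting trivially on the fibres outside its support, and a single sign flip $(c,1,\dots,1)\in S_2^d$: the first two generate all of $S_d$ together with its permutation action on the base (here one uses Jordan's theorem on primitive groups containing a transposition for the $S_d$-part), and conjugating the sign flip then fills out $S_2^{\,d}$. One must check that each chosen class is realisable as a Frobenius class of a palindromic polynomial, i.e.\ corresponds to an admissible factorisation type modulo $p$: irreducible of degree $2d$; an irreducible palindromic quartic times a product of split inverse pairs $(X-\mu)(X-\mu^{-1})$; an irreducible self-reciprocal quadratic times such a product, respectively.

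Next I would show that for each target type $\theta_i$ the proportion of $g\in\operatorname{Sp}(2d,\F_p)$ with $\chi_g$ of type $\theta_i$ is bounded below by a constant $\delta_i>0$ independent of $p$ for $p$ large. This rests on the parametrisation of regular semisimple conjugacy classes in $\operatorname{Sp}(2d,\F_p)$ by their separable palindromic characteristic polynomials: the centraliser of such an element is a maximal torus of order $\asymp p^{d}$, so each such class has size $\asymp|\operatorname{Sp}(2d,\F_p)|/p^{d}$, and there are $\gg p^{d}$ separable palindromic polynomials of type $\theta_i$ over $\F_p$ (an elementary count over $\F_p$, or Lang--Weil on the relevant variety). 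One also records that the elements with non-separable characteristic polynomial form a subset of proportion $O(1/p)$.

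Finally I would feed this into Kowalski's large sieve for the arithmetic group $\Gamma=\operatorname{Sp}(2d,\Z)$. The sieve requires surjectivity of the reductions $\Gamma\to\operatorname{Sp}(2d,\Z/q\Z)$ for squarefree $q$ (strong approximation for the simply connected group $\operatorname{Sp}_{2d}$, which is classical) together with a large-sieve inequality for the family of reductions; the latter follows from a uniform spectral gap for the Cayley graphs of $\operatorname{Sp}(2d,\F_p)$ with respect to the reductions of a fixed finite symmetric generating set of $\Gamma$, and for $d\geq 2$ this uniform gap is a consequence of Kazhdan's Property~(T) for $\operatorname{Sp}(2d,\Z)$. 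Combining the three ingredients, for each $i$ the number of $A$ in a ball of radius $T$ (for any fixed norm on $\operatorname{M}(2d,\R)$) whose characteristic polynomial mod $p$ avoids type $\theta_i$ for every prime $p$ in a suitable dyadic range is $o(T^{\dim\operatorname{Sp}_{2d}})$; since there are finitely many $i$, a density-one set of $A$ realises all the types $\theta_1,\dots,\theta_N$ as Frobenius classes, and by the first step any such $A$ satisfies $\operatorname{Gal}(\chi_A)=W$, i.e.\ \eqref{eq:Galcond}.

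The heart of the argument, and the main obstacle, is the large-sieve input: legitimately invoking the uniform spectral gap for $\{\operatorname{Sp}(2d,\F_p)\}_p$ that powers Kowalski's sieve, together with the uniform-in-$p$ lower bound on the proportion of elements of $\operatorname{Sp}(2d,\F_p)$ lying in each prescribed factorisation type (a count of regular semisimple classes attached to a fixed type of maximal torus). The preliminary combinatorics of $W=S_2\wr S_d$ is elementary but must be handled with care, in particular the matching between conjugacy classes of $W$ and factorisation types mod $p$ that are compatible with the palindromic constraint.
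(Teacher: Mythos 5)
Your overall framework — a conjugacy-class generation criterion for $W=S_2\wr S_d$, a uniform-in-$p$ lower bound on the proportion of elements of $\operatorname{Sp}(2d,\F_p)$ of each prescribed factorisation type, and a large sieve for $\operatorname{Sp}(2d,\Z)$ — is indeed the route taken in \cite[App.~B]{KimSemiclassical} and goes back to Jouve--Kowalski--Zywina. However, the group-theoretic first step as you have set it up is wrong, and the error is concrete. A $d$-cycle and a transposition do \emph{not} generate $S_d$ for composite $d$: for instance $\langle (1234),(13)\rangle = D_4 \lneq S_4$, and for $d=6$ the subgroup $\langle(123456),(13)\rangle$ preserves the block system $\{\{1,3,5\},\{2,4,6\}\}$ and lies in $S_3\wr S_2$. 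Consequently the subgroup $H = S_2^d\rtimes\langle\text{$d$-cycle},\ \text{transposition}\rangle$ of $W$ is a \emph{proper} subgroup which nevertheless meets all three of your classes: it meets the Coxeter class because the image contains a $d$-cycle and the signs can be adjusted using $S_2^d\leq H$; it meets your "transposition" class because the image contains a transposition; and it contains $S_2^d$, hence meets the sign-flip class. So a subgroup meeting your three chosen classes need not be $W$. The Jordan-theorem step you invoke needs \emph{primitivity} of the image in $S_d$, which a $d$-cycle alone does not supply when $d$ is composite. This is exactly why the criterion used in the paper (Lemma~\ref{la:generatorcycle}, from~\cite{DDS98}) requires \emph{four} cycle types — a $2$-cycle, a $4$-cycle, a $(2d-2)$-cycle and a $2d$-cycle: the $2d$- and $(2d-2)$-cycles project to a $d$-cycle and a $(d-1)$-cycle in $S_d$, which together force $2$-transitivity and hence primitivity; then the $4$-cycle (projecting to a transposition) triggers Jordan's theorem, and the $2$-cycle (a single sign flip) conjugates out to all of $S_2^d$. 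You must add the $(2d-2)$-cycle class, or an equivalent primitivity-forcing device, before the rest of the argument can proceed.

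A second, more routine point: Property~(T) furnishes a uniform spectral gap adapted to random walks on the Cayley graphs of $\operatorname{Sp}(2d,\Z/q\Z)$, but the theorem orders $\operatorname{Sp}(2d,\Z)$ by a norm on $\operatorname{M}(2d,\R)$. Passing from the sieve estimate to a density-one statement over Euclidean balls requires, in addition, effective counting and equidistribution of lattice points in norm-balls with respect to congruence conditions (Gorodnik--Nevo, Nevo--Sarnak, or Maucourant), which is precisely the extra ingredient used by Jouve--Kowalski--Zywina. As written, your sketch moves from the $L^2$ large-sieve inequality to the ball count without addressing this conversion.
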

Below we will present a method to produce a matrix in $\operatorname{Sp}(2d,\Z)$ with
\eqref{eq:Galcond} and $m(A)+l(A)\geq 2$
out of a matrix of $\operatorname{Sp}(2d,\Z)$ satisfying (only) \eqref{eq:Galcond}. Since both assumptions only depend on the characteristic polynomial, 
the following result reduces the problem to finding suitable polynomials thanks to the next result.

\begin{theorem}
[{\cite[Thm.~A.1]{Kirby,Rivin08}}]
\label{thm:symplecticfromchar}
	Let $f\in \Z[X]$ be a monic palindromic polynomial. 
	Then there is $A\in \operatorname{Sp}(2d,\Z)$ with characteristic polynomial $f$.
\end{theorem}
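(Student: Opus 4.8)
The plan is to strip the symplectic constraint by a ``Chebyshev reduction'' and then reduce to an elementary statement about ordinary integer matrices. Write $\deg f=2d$. Since $f$ is monic and palindromic, $f(X)/X^{d}$ is a symmetric Laurent polynomial, and because each $X^{k}+X^{-k}$ is a monic polynomial in $X+X^{-1}$ with integer coefficients, one can write $f(X)=X^{d}\,g(X+X^{-1})$ with $g\in\Z[X]$ monic of degree $d$. For symmetric integer matrices $B_{1},B_{2}\in\operatorname{M}(d,\Z)$, the block matrix $\bigl(\begin{smallmatrix}0&-I_{d}\\ I_{d}&B_{i}\end{smallmatrix}\bigr)$ is symplectic (a direct check gives $M^{T}JM-J=\bigl(\begin{smallmatrix}0&0\\0&B_{i}^{T}-B_{i}\end{smallmatrix}\bigr)$), hence so is the product $A\coloneqq\bigl(\begin{smallmatrix}0&-I_{d}\\ I_{d}&B_{1}\end{smallmatrix}\bigr)\bigl(\begin{smallmatrix}0&-I_{d}\\ I_{d}&B_{2}\end{smallmatrix}\bigr)=\bigl(\begin{smallmatrix}-I_{d}&-B_{2}\\ B_{1}&-I_{d}+B_{1}B_{2}\end{smallmatrix}\bigr)$. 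A block-determinant computation then gives $\chi_{A}(X)=\det\bigl((X+1)^{2}I_{d}-XB_{1}B_{2}\bigr)=\det\bigl(X^{2}I_{d}-X(B_{1}B_{2}-2I_{d})+I_{d}\bigr)=X^{d}\,\chi_{B_{1}B_{2}-2I_{d}}(X+X^{-1})$, using the identity $\det(X^{2}I_{d}-XN+I_{d})=X^{d}\chi_{N}(X+X^{-1})$. Since $\chi_{B_{1}B_{2}-2I_{d}}(t)=\chi_{B_{1}B_{2}}(t+2)$, we have $\chi_{A}=f$ precisely when $\chi_{B_{1}B_{2}}(X)=g(X-2)$, which is an arbitrary monic polynomial in $\Z[X]$ of degree $d$. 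Thus the theorem follows from the integral claim: \emph{every monic $\gamma\in\Z[X]$ of degree $d$ is the characteristic polynomial of a product of two symmetric integer matrices.}

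To prove this claim I would first reduce to $\gamma$ irreducible: if $\gamma=\prod_{i}\gamma_{i}^{e_{i}}$ with the $\gamma_{i}$ distinct irreducible and $B_{1}^{(i)}B_{2}^{(i)}$ has characteristic polynomial $\gamma_{i}$, then the block-diagonal matrices $B_{j}\coloneqq\bigoplus_{i}\bigl(B_{j}^{(i)}\bigr)^{\oplus e_{i}}$ are symmetric and $\chi_{B_{1}B_{2}}=\gamma$. For $\gamma$ irreducible, put $R=\Z[X]/(\gamma)=\Z[\alpha]$ with fraction field $K=\Q(\alpha)$, and let $C\in\operatorname{M}(d,\Z)$ be the companion matrix of $\gamma$, i.e.\ the matrix of multiplication by $\alpha$ on $R$ in the basis $1,\alpha,\dots,\alpha^{d-1}$. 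Let $S$ be the Gram matrix, in this basis, of the symmetric $\Z$-bilinear form $(x,y)\mapsto\operatorname{Tr}_{K/\Q}\bigl(\gamma'(\alpha)^{-1}xy\bigr)$ on $R$. Since the codifferent of the order $\Z[\alpha]$ is the principal fractional ideal $\gamma'(\alpha)^{-1}R$ (Euler's formula), one has $\operatorname{Tr}_{K/\Q}\bigl(\gamma'(\alpha)^{-1}R\bigr)\subseteq\Z$, so $S$ has integer entries; moreover $\det S=\pm\,N_{K/\Q}(\gamma'(\alpha))^{-1}\operatorname{disc}(\gamma)=\pm1$, because $\operatorname{disc}(\gamma)=\pm\,N_{K/\Q}(\gamma'(\alpha))$. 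Finally, multiplication by $\alpha$ is self-adjoint for this form, i.e.\ $C^{T}S=SC$, so $SC$ is symmetric; as $\det S=\pm1$, the matrix $S^{-1}$ is symmetric and integral, and $C=S^{-1}(SC)$ exhibits the companion matrix of $\gamma$ as a product of two symmetric integer matrices.

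Combining the two steps proves the theorem, and since the construction is completely explicit there is no serious analytic or combinatorial obstacle. The one point that needs real care is the integral claim of the second paragraph — specifically the facts that the codifferent of $\Z[\alpha]$ is the principal ideal $\gamma'(\alpha)^{-1}R$ and hence that the Gram matrix $S$ is unimodular, which is exactly what allows the classical factorization of a matrix into a product of two symmetric matrices to be carried out over $\Z$ rather than only over $\Q$. (A more intrinsic, but less elementary, variant would instead observe that the companion matrix $C_{f}$ already preserves the nondegenerate $\Q$-alternating form $(a,b)\mapsto\operatorname{Tr}_{\Q[X]/(f)/\Q}\bigl(\theta\,\sigma(a)b\bigr)$ on $\Q[X]/(f)$, where $\sigma$ is the involution $X\mapsto X^{-1}$ and $\theta$ satisfies $\sigma(\theta)=-\theta$, so $C_{f}$ is $\Q$-conjugate into $\operatorname{Sp}(2d,\Q)$; the remaining, and in that approach nontrivial, task is to exhibit a $C_{f}$-invariant lattice on which this form is unimodular, a symplectic basis of which then yields the desired element of $\operatorname{Sp}(2d,\Z)$.)
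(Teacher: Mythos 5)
The paper does not prove Theorem~\ref{thm:symplecticfromchar}; it simply cites \cite{Kirby,Rivin08} and uses the result as a black box, so there is no internal argument to compare against. Your proof is, however, correct and self-contained. The Chebyshev reduction $f(X)=X^{d}g(X+X^{-1})$, the observation that $\bigl(\begin{smallmatrix}0&-I\\I&B_i\end{smallmatrix}\bigr)$ is symplectic precisely when $B_i$ is symmetric, and the block-determinant computation $\chi_A(X)=X^{d}\chi_{B_1B_2}(X+X^{-1}+2)$ all check out, cleanly reducing the theorem to the integral claim that every monic $\gamma\in\Z[X]$ is $\chi_{B_1B_2}$ for symmetric $B_1,B_2\in\operatorname{M}(d,\Z)$. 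Your treatment of that claim is the crucial point and it is sound: the block-diagonal step reduces to $\gamma$ irreducible; irreducibility over $\Q$ gives separability, so $\gamma'(\alpha)\neq 0$; the twist of the trace form by $\gamma'(\alpha)^{-1}$ has integer Gram matrix $S$ by Euler's formula $\Z[\alpha]^{\vee}=\gamma'(\alpha)^{-1}\Z[\alpha]$; the identity $\det S=\pm\operatorname{disc}(\gamma)/N_{K/\Q}(\gamma'(\alpha))=\pm1$ makes $S$ unimodular; and self-adjointness of multiplication by $\alpha$ gives $C^{T}S=SC$, hence $C=S^{-1}(SC)$ is a product of two symmetric \emph{integer} matrices. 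This is essentially the known realizability argument for reciprocal polynomials in $\operatorname{Sp}(2d,\Z)$ underlying the cited references, so you have reconstructed a proof rather than found a new route; the second variant you sketch (the $C_f$-invariant alternating trace form on $\Q[X]/(f)$) is closer to a more ``intrinsic'' proof but, as you correctly note, leaves the genuinely hard step of finding a unimodular invariant lattice.
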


The following two statements can be used to ensure \eqref{eq:Galcond}.
The first one is a classical method to determine the Galois group.

\begin{lemma}
[{\cite[Thm.~28.23]{Isaacs}}]
	\label{la:cycletypes}
	Let $f\in \Z[X]$ be monic irreducible over $\Q$.
	For $d_1,\ldots,d_l\in \N$ the following two statements are equivalent:
	\begin{enumerate}
		\item $\operatorname{Gal}(f)$ contains a permutation (of the zeros of $f$)
			which is a disjoint product of cycles of length $d_1,\ldots, d_l$.
		\item There is a prime $p$ such that $\overline{f} = \overline{f_1}\cdots \overline{f}_l$,
			for some irreducible $\overline{f}_i\in \F_p[X]$,
			$\overline{f}$ is separable,
			and $\deg \overline{f}_i = d_i$,
			where $\overline{f}$ denotes the reduction of $f$ modulo $p$.
	\end{enumerate}
\end{lemma}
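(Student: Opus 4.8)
The plan is to run the classical argument of Dedekind relating the factorisation of $\overline{f}$ modulo $p$ to Frobenius elements, supplemented by a density statement for the converse implication. Write $n=\deg f$, let $\alpha_1,\dots,\alpha_n\in\C$ be the roots of $f$ (they are distinct because $f$ is irreducible over $\Q$, hence separable, and they are algebraic integers because $f$ is monic with coefficients in $\Z$), and let $K=\Q(\alpha_1,\dots,\alpha_n)$ be the splitting field with ring of integers $\mathcal{O}_K$. The group $G\coloneqq\operatorname{Gal}(K/\Q)$ acts faithfully on $\{\alpha_1,\dots,\alpha_n\}$, which realises $\operatorname{Gal}(f)$ as a subgroup of $S_n$. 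I would first record that $\overline{f}$ is separable if and only if $p\nmid\operatorname{disc}(f)$, and that in this case $p$ is unramified in $\Q(\alpha_1)$ --- since $\operatorname{disc}(\Q(\alpha_1))$ divides $\operatorname{disc}(f)$ --- and therefore unramified in the compositum $K$.

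Next I would prove the implication (ii)$\Rightarrow$(i), which is purely algebraic. Fix a prime $p$ as in (ii) and a prime ideal $\mathfrak{P}\subset\mathcal{O}_K$ lying over $p$; as $p$ is unramified, the decomposition group $D_\mathfrak{P}=\{\sigma\in G\mid\sigma\mathfrak{P}=\mathfrak{P}\}$ is isomorphic to $\operatorname{Gal}\bigl((\mathcal{O}_K/\mathfrak{P})/\F_p\bigr)$, a cyclic group generated by the $p$-power map, and I let $\phi\in D_\mathfrak{P}$ be the Frobenius element, characterised by $\phi(x)\equiv x^p\pmod{\mathfrak{P}}$ for all $x\in\mathcal{O}_K$. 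Reduction modulo $\mathfrak{P}$ carries $\{\alpha_1,\dots,\alpha_n\}$ bijectively onto the root set of $\overline{f}$ in $\mathcal{O}_K/\mathfrak{P}$ (a bijection because $\overline{f}$ is separable), and under this bijection the permutation of $\{\alpha_i\}$ induced by $\phi\in G\leq S_n$ becomes the $p$-power map on the roots of $\overline{f}$, since $\overline{\phi(\alpha_i)}=\overline{\alpha_i}^{\,p}$ in $\mathcal{O}_K/\mathfrak{P}$. On the other hand, the roots of a monic irreducible factor of $\overline{f}$ of degree $d$ form exactly one orbit of the $p$-power map, of size $d$. Hence the cycle type of $\phi$, viewed in $S_n$, coincides with the list of degrees of the irreducible factors of $\overline{f}$; taking the prime $p$ provided by (ii) yields an element of $\operatorname{Gal}(f)$ with cycle type $(d_1,\dots,d_l)$, which is (i).

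Finally I would treat the converse (i)$\Rightarrow$(ii). Given $\sigma\in G$ whose cycle type as an element of $S_n$ is $(d_1,\dots,d_l)$, I would invoke the Chebotarev density theorem --- or, since only cycle types matter here, its weaker predecessor due to Frobenius, observing that the property ``has cycle type $(d_1,\dots,d_l)$'' is preserved under $\sigma\mapsto\sigma^k$ for $\gcd(k,\operatorname{ord}\sigma)=1$ and hence defines a union of Frobenius divisions --- to obtain an unramified prime $p$ together with $\mathfrak{P}\mid p$ such that $\operatorname{Frob}_\mathfrak{P}$ has cycle type $(d_1,\dots,d_l)$. For such a $p$ the polynomial $\overline{f}$ is separable and, by the computation of the previous paragraph, its irreducible factors have degrees $d_1,\dots,d_l$; this is (ii). The one non-elementary ingredient --- and the main obstacle --- is precisely this density theorem; the rest is a routine computation with unramified primes. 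For the application to the Galois condition~\eqref{eq:Galcond} only the direction (ii)$\Rightarrow$(i) is needed, so in that use the density theorem can be bypassed altogether.
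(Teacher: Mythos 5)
Your proof is correct and runs the standard argument — Dedekind's theorem (via decomposition/Frobenius at an unramified prime) for (ii)$\Rightarrow$(i), and the Frobenius (or Chebotarev) density theorem for (i)$\Rightarrow$(ii) — which is exactly the argument behind the cited reference; the paper itself offers no proof and simply quotes \cite[Thm.~28.23]{Isaacs}. Your remark that only the weaker Frobenius density theorem is needed (since cycle type is invariant under $\sigma\mapsto\sigma^k$ for $\gcd(k,\operatorname{ord}\sigma)=1$), and that the application in the paper uses only the elementary direction (ii)$\Rightarrow$(i), is a nice and accurate observation.
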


\begin{lemma}
	[{\cite[Lemma~2]{DDS98}}]
	\label{la:generatorcycle}
	If $f\in \Z[X]$ is monic  and palindromic,
	and $\operatorname{Gal}(f)$ contains a $2$-cycle, a $4$-cycle, a $(2d-2)$-cycle, and a $2d$-cycle
	then $f$ satisfies \eqref{eq:Galcond}.
\end{lemma}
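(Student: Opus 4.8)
The plan is to prove the stronger statement $\operatorname{Gal}(f)=S_2\wr S_d$, which has order exactly $2^dd!$. First note that a $2d$-cycle can only occur in a permutation group acting on at least $2d$ points, so the hypothesis forces $f$ to have $2d$ distinct roots, i.e.\ $f$ is separable; then $\pm1$ are not roots by the computation in the Remark following Theorem~\ref{thm:spforirreducible}. Hence the roots split into $d$ genuine pairs $\{\lambda_i,\lambda_i^{-1}\}$, the group $G\coloneqq\operatorname{Gal}(f)$ permutes these pairs, and $G\le S_2\wr S_d$ exactly as explained just before \eqref{eq:Galcond}. Write $\pi\colon S_2\wr S_d\to S_d$ for the induced action on the $d$ pairs, with kernel $K\cong S_2^d$ generated by the transpositions $\iota_i\coloneqq(\lambda_i\ \lambda_i^{-1})$. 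It then suffices to show $K\le G$ and $\pi(G)=S_d$, since these give $|G|=|K|\cdot|\pi(G)|=2^dd!=|S_2\wr S_d|$, forcing equality.

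The key tool is the dictionary, standard for wreath products, between the cycle type of $g\in S_2\wr S_d$ on the $2d$ roots and that of $\pi(g)$ on the $d$ pairs: if $C$ is a cycle of $\pi(g)$ of length $\ell$, then the $2\ell$ roots lying over $C$ form under $g$ either a single $2\ell$-cycle or two disjoint $\ell$-cycles; moreover a root is fixed by $g$ only if its pair is fixed by $\pi(g)$. From this one reads off: the only transpositions of $S_{2d}$ inside $S_2\wr S_d$ are the $\iota_i$; a single $4$-cycle of $S_{2d}$ lying in $S_2\wr S_d$ must have $\pi$-image a transposition of $S_d$ (its four moved roots lie over a $\pi$-cycle $C$ with $2|C|=4$, so $|C|=2$; the alternative $|C|=4$ would produce two $4$-cycles, and all other pairs are $\pi$-fixed since the other roots are $g$-fixed); a single $(2d-2)$-cycle must, for $d\ge3$, have $\pi$-image a $(d-1)$-cycle (here $|C|=d-1$, since $|C|=2d-2>d$ is impossible); and a $2d$-cycle must have $\pi$-image a $d$-cycle. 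So the four hypotheses supply $\iota_{i_0}\in G$ for some $i_0$, together with elements of $G$ whose $\pi$-images are a transposition $t$, a $(d-1)$-cycle $\rho$ and a $d$-cycle $\tau$ of $S_d$.

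Next I would show $\pi(G)=S_d$. Let $a$ be the point fixed by $\rho$ and $\{b,c\}=\operatorname{supp}(t)$. Since $\tau$ is a $d$-cycle there is $k$ with $\tau^k(a)=b$, so $\rho'\coloneqq\tau^k\rho\tau^{-k}\in\pi(G)$ is a $(d-1)$-cycle fixing $b$; it therefore cyclically permutes all $d-1$ points $\ne b$, one of which is $c$. Conjugating $t=(b\ c)$ by the powers of $\rho'$ produces every transposition $(b\ x)$ with $x\ne b$, and these generate $S_d$; hence $\pi(G)=S_d$. (For small $d$ the four prescribed lengths coincide and one obtains $\pi(G)=S_d$ directly from a transposition, resp.\ a transposition and a $3$-cycle.) Finally, $\pi(G)=S_d$ acts transitively on $\{1,\dots,d\}$, and conjugation in $S_2\wr S_d$ sends $\iota_j$ to $\iota_{\pi(g)(j)}$; applying this to $\iota_{i_0}$ we get $\iota_i\in G$ for all $i$, so $K=\langle\iota_1,\dots,\iota_d\rangle\le G$. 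Combining with $\pi(G)=S_d$ yields $G=S_2\wr S_d$, i.e.\ \eqref{eq:Galcond} holds.

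The one genuinely delicate point is the cycle-type dictionary: when translating ``$G$ contains a $4$-cycle'' and ``$G$ contains a $(2d-2)$-cycle'' into statements about the $\pi$-images, one must carefully exclude the ``doubled'' configurations in which a longer $\pi$-cycle has its roots split into two shorter cycles (and, for the $4$-cycle, also track that the remaining pairs are $\pi$-fixed). Everything else — generating $S_d$ and conjugating the $\iota_i$ around — is routine, though the edge cases $d=2,3$, where the cycle lengths are no longer distinct, should be checked separately.
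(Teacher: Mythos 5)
Your proof is correct, and since the paper only cites \cite{DDS98} for this lemma rather than proving it, there is no in-paper argument to compare against; your reconstruction is the natural wreath-product argument and matches what the cited source does. The strategy — showing $K=S_2^d\le G$ via conjugating the transposition $\iota_{i_0}$ by a transitive $\pi(G)$, and showing $\pi(G)=S_d$ from a transposition, a $(d-1)$-cycle and a $d$-cycle — is sound, and the cycle-type dictionary and the preliminary reduction (a $2d$-cycle forces $f$ separable, palindromicity then forces $\pm 1\notin\operatorname{supp}$, so the roots fall into $d$ genuine pairs) are all handled correctly. One small phrasing slip: in the discussion of the $4$-cycle you write that ``the alternative $|C|=4$ would produce two $4$-cycles,'' but a $\pi$-cycle of length $4$ sits over $8$ roots, so it is excluded already by a count of moved roots; the alternative you actually need to rule out is the $|C|=2$ case splitting into two $2$-cycles. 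This is cosmetic — the conclusion $\pi(g_4)$ a transposition is correct. Likewise the $(2d-2)$-cycle argument should note that the two fixed roots must form a single pair (a $g$-fixed root forces its partner to be $g$-fixed too), which you implicitly use; and your closing remark about the degenerate cases $d=2,3$ correctly identifies that the list of cycle lengths collapses there while the generation argument still goes through.
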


 With these lemmas at hand, we assume $d\geq 4$ so that $2d \neq 4\neq 2d-2$ (the adjustments for $d=2,3$ are obvious).
 If $ f$ is the characteristic polynomial of some matrix satisfying \eqref{eq:Galcond} (which have density one by Theorem~\ref{thm:Galois100}), then $\text{Gal}(f)$ contains  a $2$-cycle, a $4$-cycle, a $(2d-2)$-cycle and a $2d$-cycle and we can take 
 the primes $p_2$, $p_4$, $p_{2d-2}$ and $p_{2d}$ given by Lemma~\ref{la:cycletypes} and corresponding respectively to these cycles in $\text{Gal}(f)$.
We observe that any polynomial $\tilde f$ that agrees with $f$ modulo the primes
$p_2,p_4,p_{2d-2}$, and $p_{2d}$
also satisfies \eqref{eq:Galcond} by Lemma~\ref{la:cycletypes} and Lemma~\ref{la:generatorcycle}.
By the Chinese remainder theorem such $\tilde f$ is unique mod $p_2 p_4 p_{2d-2} p_{2d}$.
We will choose a suitable $\tilde f$ to ensure the eigenvalue condition.

To obtain a polynomial satisfying \eqref{eq:Galcond} and the eigenvalue condition, 
we now set $f_k\coloneqq f+ k p_2p_4p_{2d-2}p_{2d}X^d $, $k\in \Z$, which is still monic and palindromic.
As explained above, $f_k$ satisfies \eqref{eq:Galcond}.
We now define $g\in \Z[X]$ by $X^dg(X+X^{-1})=f$.
Obviously, $g_k\coloneqq g + kp_2p_4p_{2d-2}p_{2d}$ satisfies $X^dg_k(X+X^{-1})=f_k$.
Pairs of roots $\{\alpha,\alpha^{-1}\}$ of $f_k$ correspond to roots of $g_k$
via $\{\alpha,\alpha^{-1}\}\mapsto \alpha+\alpha^{-1}$.
Moreover, pairs of roots of $f_k$ on $\mathbb S^1$ correspond to roots of $g_k$ in the interval $[-2,2]$ (see \cite[Lemma~1]{KonvalinaMatachePalindrome}).
{Since $g([-2,2])$ is a bounded set,} $g_k = g + kp_2p_4p_{2d-2}p_{2d}$ has no roots in $[-2,2]$ for $|k|\gg 1$.
Hence $f_k$ has no roots in $\mathbb S^1$ for $|k|\gg 1$. If $d=2$ then $g_k$ is a quadratic polynomial
which has two real roots outside $[-2,2]$ for $k\ll -1$.
Hence $f_k$ has $4$ real roots for $k\ll -1$.

We conclude that $A\in \text{Sp}(2d,\Z)$ with characteristic polynomial
$f_k$ for $|k|\gg 1$ (resp. $k\ll -1$ if $d=2$) 
satisfies \eqref{eq:Galcond} and $m(A)+2l(A)=d$ (resp. $m(A)=2$ if $d=2$).
In both cases $m(A)+l(A)\geq 2$.
The existence of $A$ with  such a characteristic polynomial is finally provided by Theorem~\ref{thm:symplecticfromchar}.

\begin{remark}
	In the above construction, instead of starting with a characteristic polynomial satisfying \eqref{eq:Galcond}
	we could also choose the reductions of $f$ modulo the primes directly.
	More precisely, 
	let $\mathcal{P}^{\text{rec}}_{2d}(\F_{p},(i))$ be the set of monic
	palindromic polynomials of degree $2d$ over $\F_p$ 
	consisting of polynomials factoring as an irreducible polynomial of degree $i$ 
	times a product of $2d-i$ distinct linear polynomials (see \cite[Appendix~B]{KimSemiclassical}).
	By \cite[Lemma~B.4]{KimSemiclassical} 
	for any integer $1\leq k\leq n$ we have
	\[
		\#\mathcal{P}^{\text{rec}}_{2d}(\F_p;(2k))= \frac 1 {2^{d-k+1}\cdot k \cdot (d-k)!} p^d +\mathcal{O}(p^{d-1})
	\]
	so that $ \mathcal{P}^{rec}_{2d}(\F_p;(2k))$ is non-empty 
	for almost all primes $p$.

	Let us now pick distinct primes $p_2,p_4,p_{2d-2},p_{2d}$
	and $q_i\in\mathcal{P}^{\text{rec}}_{2d}(\F_{p_i};(i))$ for $i\in \{2,4,2d-2,2d\}$.
	Let $f\in \Z[X]$ be monic and palindromic such that the reduction modulo the primes $p_i$ gives $q_i$.
	We are now in the same situation 
	as if $f$ was the characteristic polynomial of a matrix in $\operatorname{Sp}(2d,\Z)$ satisfying \eqref{eq:Galcond}.
\end{remark}

\begin{exam}
	The matrix 
	\[
		A = \begin{pmatrix}
			0&0&1&0\\
			0&0&0&1\\
			-1&0&-3&3\\
			0&-1&3&4
		\end{pmatrix}
	\]
	has characteristic polynomial $\chi_A= X^4-X^3-19X^2-X+1$
	and eigenvalues $4.9059\ldots$ and $-3.850\ldots$ as well as their inverses.
	In particular, $m(A)=2$. The polynomial 
	$g=X^2-X-21$ satisfies $g(X+1/X)X^2=\chi_A$ and has two real roots $\frac{1\pm \sqrt{85}}2$ lying outside $[-2,2]$.
	$\chi_A$ is irreducible $\mod 2$
	and splits into $(X-3)(X-5)(X^2+1) \mod 7$.
Hence $|\text{Gal}(\chi_A)|=8$ by Lemma~\ref{la:cycletypes} and \ref{la:generatorcycle},
i.e.~$A$ satisfies \eqref{eq:Galcond}.
By Theorem~\ref{thm:spforirreducible} $\operatorname{Sp}(4,\Z)_A = \{\pm I\}\times \Z^2$.
	Let 
	\[
		B= \begin{pmatrix}
			1&-27&45&114\\
			-27&-62&114&311\\
			-45&-114&208&564\\
			-114&-311&564&1524
		\end{pmatrix}
		= 9A^3+29A^2-21A+3I
	\]
	Clearly, $AB=BA$ and one checks $B\in \text{Sp}(4,\Z)$
	and $B$ has characteristic polynomial $\chi_B=X^4-1671X^3+17191X^2-1671X+1$.
	We observe $\chi_A\equiv \chi_B\equiv X^4+X^3+X^2+X+1 \mod 2$ but $\chi_A\not \equiv \chi_B\equiv X^4+2X^3+6X^2+2X+1 \equiv (X-3)(X-5)(X^2+3X+1)\mod 7$
	 so they define the same cycle type $\mod 2$ and $\mod 7$.
	Diagonalizing $A$ with a matrix $S\in \operatorname{Sp}(4,\R)$ gives
	\[
		S^{-1} AS = \diag(-3.8500\ldots,4.9059\ldots, (-3.8500\ldots)^{-1},(4.9059)^{-1})
	\]
	and 
	\[
		S^{-1}BS = \diag(
		0.0975\ldots,1660.6486\ldots,(0.0975\ldots)^{-1},(1660.6486\ldots)^{-1})
	\]
	From the moduli of the first two entries it now follows easily that
	$A^kB^l=I$, $k,l\in \Z$, can only hold for $k=l=0$,
	i.e.~$\Lambda\coloneqq \{A^kB^l \mid k,l\in\Z\}$ is free abelian of rank 2.
	We conclude that $\Lambda $ has finite index in $\operatorname{Sp}(4,\Z)_A$.
	Hence, we can apply Theorem~\ref{thm:mainintro} to find $B_\varepsilon\in \Lambda$.	
\end{exam}

\appendix

\section{A brief reminder on the metaplectic group}
\label{a:metaplectic}

In this section we review the metaplectic transformations and we refer to \cite{GossonSymplecticMethods} for more details. See also~\cite[Ch.~4]{Folland89}.

The metaplectic group $\text{Mp}(2d,\R)$ is the unique double covering group of $\text{Sp}(2d,\R)$.
More precisely, $\text{Sp}(2d,\R)$ has a Cartan decomposition $\operatorname{Sp}(2d,\R) \simeq K\times \mathfrak{p}$
where $K\simeq U(d)$ and $\mathfrak{p}$ is a vector space \cite[Thm.~6.31, p.~575]{Knapp}.
This shows that the fundamental group $\pi_1(\text{Sp}(2d,\R))=\pi_1(K)$ as $\mathfrak{p}$ is a vector space.
In our case, $K\simeq U(d)$ has fundamental group $\Z$.
Let for the moment $\widetilde G$ be the universal cover of $\text{Sp}(2d,\R)$ 
with covering map $\widetilde q$.
Then $\text{Sp}(2d,\R) \simeq \widetilde G / \ker \widetilde q$ and $\ker \widetilde q\simeq \Z$ is
contained in the center of $\widetilde G$.
All other coverings of $\text{Sp}(2d,\R)$ factor through this universal covering.
This means they are given by subgroups $n\Z$, $n\in \N_0$, of $\Z$.
In particular, we have $\widetilde G \to \widetilde G/2\Z \to \widetilde G/\Z = \text{Sp}(2d,\R)$
and the middle group $\widetilde G/2\Z$ is the metaplectic group $\text{Mp}(2d,\R)$.
It is the unique connected double cover of $\text{Sp}(2d,\R)$.
Let $q$ denote the covering map.
By construction, the map $q$ is $2:1$, i.e.~there is $Z$ in the center of $\text{Mp}(2d,\R)$
such that $\ker q =\{1,Z\}$.
It is worth noting that $\text{Mp}(2d,\R)$ has no finite-dimensional faithful representation,
i.e.~it cannot be described as a group of matrices.
However, there is a faithful representation $\pi_h$ on $L^2(\R^d)$ which is unique with the 
property 
\begin{equation}
	\label{eq:weilprop}
	\forall (w,t)\in H_d,\quad \pi_h(g) T_{(w,t)} \pi_h(g)^{-1} =T_{(q(g)w,t)}
	\quad g\in \text{Mp}(2d,\R),
\end{equation}
where $H_d$ is the Heisenberg group and the representation $T$ of the Heisenberg group $H_d$ is the unique
unitary irreducible representation such that $T_{(0,t)} = e^{\frac ih t}$
(i.e.~has central character $e^{\frac ih}$)
by the Stone-von Neumann Theorem.

\begin{remark}
 Recall from~\cite[\S2]{DyatlovJ} that $T$ is the representation used to defined the Weyl quantization.
\end{remark}

The construction of $\pi_h(g)$ works as follows. For $g\in \text{Sp}(2d,\R)$, the map
$(w,t)\mapsto T_{(gw,t)}$ defines another unitary irreducible representation of $H_d$
with the same central character.
Hence they are unitarily equivalent and there exits an isometry $M_g$ of $L^2(\R^d)$
such that $M_g \circ T_{(w,d)} \circ M_g^{-1} = T_{(gw,d)}$.
By Schur's Lemma $M_g$ is unique up scalar.
One can check 
that $g\mapsto [M_g]$ defines a projective representation of $\text{Sp}(2d,\R)$,
i.e.~a group homomorphism $\text{Sp}(2d,\R)\to \mathcal{U}(L^2(\R^d))/(\mathbb{S}^1\cdot \text{Id})$.
This representation does not come from an ordinary representation of $\text{Sp}(2d,\R)$
but it can be lifted to a representation of $\text{Mp}(2d,\R)$,
where 'lifted' means precisely Equation~\eqref{eq:weilprop}.
One then gets 
$\pi_h(Z)=-1$, since $Z^2=1\in \operatorname{Mp}(2d,\R)$ and $\pi_h(Z)$ must be scalar by Schur's Lemma and 
$\neq 1$ since otherwise it would be a representation of $\text{Sp}(2d,\R)$.

This allows to define the metaplectic representation $\widetilde{M}_h(A)$ of $A$ as $\pi_h(\widetilde{A})$ for any $h>0$, where we picked $\widetilde A$ to be a metaplectic lift of $A$ (which is unique up to $\text{Ker}(q)=\{1,Z\}$). This defines an operator on $L^2(\R^d)$ that can can restricted to the Schwartz class. Hence, by duality, it can be extended to tempered distributions and thus to the spaces $\mathcal{H}_\mathbf{N}$ from \S\ref{s:semiclassical}. This corresponds to the operators $M_\mathbf{N}(A)$ considered in this article.

Hence, if we are given two symplectic matrices $A$ and $B$, we can choose two metaplectic lifts $\tilde{A}$ and $\tilde{B}$ in $\text{Mp}(2d,\R)$ which are unique up to $\text{Ker}(q)=\{1,Z\}$. One has then
$$
\widetilde {M}_h(A)\widetilde M_h(B)=\pi_h(\widetilde{A})\pi_h(\widetilde{B})=\pi_h(\widetilde{A}\widetilde{B})=\pm \widetilde{M}_h(AB).
$$
In particular, if $A$ and $B$ commute, one finds that $\widetilde {M}_h(A)\widetilde M_h(B)=\widetilde {M}_h(B)\widetilde M_h(A)$ or $\widetilde {M}_h(A)\widetilde M_h(B)^2=\widetilde {M}_h(B)^2\widetilde M_h(A).$ In that last case, this yields the commuting relation $\widetilde {M}_h(A)\widetilde M_h(B^2)=\widetilde {M}_h(B^2)\widetilde M_h(A).$

\bibliographystyle{alpha}
\bibliography{refs}

\hrulefill

\end{document}